\documentclass{article}
\usepackage[a4paper, total={6in, 8in}]{geometry}

\usepackage{graphicx} 
\usepackage{amsmath,amssymb,amsfonts,amsthm,mathtools,bm}
\usepackage{aliascnt}
\usepackage{braket}
\usepackage{booktabs}
\usepackage{tabularx}
\usepackage{float}
\usepackage{algorithm}
\usepackage[noend]{algpseudocode}

\usepackage{enumitem}
\usepackage{microtype}
\usepackage{xcolor}
\usepackage{hyperref}
\definecolor{navy}{RGB}{0,90,200}
\definecolor{burgundy}{RGB}{128,0,32}
\hypersetup{colorlinks=true, linkcolor=navy, citecolor=burgundy, urlcolor=navy}
\usepackage{comment}

\newcommand{\om}{\omega}

\theoremstyle{plain}
\newtheorem{theorem}{Theorem}[section]
\newaliascnt{lemma}{theorem}
\newtheorem{lemma}[lemma]{Lemma}
\aliascntresetthe{lemma}
\newaliascnt{proposition}{theorem}
\newtheorem{proposition}[proposition]{Proposition}
\aliascntresetthe{proposition}
\newaliascnt{corollary}{theorem}
\newtheorem{corollary}[corollary]{Corollary}
\aliascntresetthe{corollary}
\newaliascnt{assumption}{theorem}

\aliascntresetthe{assumption}
\theoremstyle{definition}
\newaliascnt{definition}{theorem}
\newtheorem{definition}[definition]{Definition}
\aliascntresetthe{definition}
\theoremstyle{remark}
\newaliascnt{remark}{theorem}
\newtheorem{remark}[remark]{Remark}
\aliascntresetthe{remark}

\usepackage[nameinlink,capitalise]{cleveref}
\Crefname{theorem}{Theorem}{Theorems}
\Crefname{lemma}{Lemma}{Lemmas}
\Crefname{proposition}{Proposition}{Propositions}
\Crefname{corollary}{Corollary}{Corollaries}
\Crefname{assumption}{Assumption}{Assumptions}
\Crefname{definition}{Definition}{Definitions}
\Crefname{remark}{Remark}{Remarks}
\Crefname{equation}{Eq.}{Eqs.}
\Crefname{claim}{Claim}{Claims}

\newcommand{\C}{\mathbb{C}}
\newcommand{\R}{\mathbb{R}}
\newcommand{\N}{\mathbb{N}}

\newcommand{\diag}{\operatorname{diag}}

\newcommand{\PREP}{\operatorname{PREP}}
\newcommand{\SEL}{\operatorname{SEL}}

\newcommand{\nint}{\operatorname{round}}

\newcommand{\eps}{\varepsilon}
\newcommand{\ketzero}[1]{\ket{0}^{\otimes #1}}
\newcommand{\norm}[1]{\left\lVert #1\right\rVert}
\newcommand{\abs}[1]{\left\lvert #1\right\rvert}
\newcommand{\setof}[1]{\left\{#1\right\}}

\newcommand{\matU}{\bold U}

\newcommand{\Z}{{\mathbb Z}}

\title{Conditioning-Free Non-Uniform Quantum Fourier and Chebyshev Transforms}

\author{
  Chaowen Guan\textsuperscript{1} and Akshit Katiyar\textsuperscript{2}\\[1ex]
  \small\itshape \textsuperscript{1}Department of Computer Science, University of Cincinnati, OH\\
  \small\itshape \textsuperscript{2}Department of Computer Science, Pennsylvania State University, University Park, PA
}
\date{ }

\begin{document}

\maketitle
\footnotetext[1]{\texttt{guance@ucmail.uc.edu}}
\footnotetext[2]{\texttt{akshitk@psu.edu}}

\begin{abstract}
We present an efficient quantum algorithm for the non-uniform Chebyshev transform. It is defined as the projection of a function onto Chebyshev polynomials sampled at given nodes that are uniform in $x\in[-1,1]$, and hence non-uniform in the angle $\theta=\arccos x$, a setting that QFT-based quantum Chebyshev transforms cannot handle. Our construction is based on an improvement of an existing Non-uniform Quantum Fourier Transform (NUQFT) whereby we remove the conditioning from non-uniform node sampling. Hence, error bounds are independent of the geometry-dependent parameter $\kappa$ of prior work. We use the fact that Chebyshev transform matrix is the average of two Type-II non-uniform DFTs, which we implement with a single controlled NUQFT circuit. We provide explicit oracle constructions, including the row-access oracle previously left as an assumption. The resulting $\varepsilon$-accurate block encoding has $O(1)$ normalization and uses $O(L)$ qubits and $\widetilde O(L^2)$ gates, where $L=\log N+\log(1/\varepsilon)$. We give an end-to-end implementation with complexity analysis, including the success probability and output-state error. 
\end{abstract}

\tableofcontents

\clearpage

\section{Introduction}
The quantum Chebyshev transform of \cite{WPWEK23} uses the Quantum Fourier Transform to map a quantum state from the computational basis to the Chebyshev basis.  The idea goes back to quantum cosine transforms \cite{KR01}, and it mirrors the classical computation of a discrete Chebyshev transform by a DFT of twice the length \cite{Mak80}.  It requires the nodes to be evenly spaced in the angle $\theta=\arccos(x)$, that is, the Chebyshev points, because only then is the transform a uniform DFT in $\theta$.  Data are often given instead on a grid that is evenly spaced in $x\in[-1,1]$.  Classically, Driscoll, Healy and Rockmore \cite{DHR97} handle this case with a non-uniform FFT at the angles $\arccos x_k$, in $O(N\log^2N)$ operations.  We give the quantum analogue, with the non-uniform FFT replaced by the non-uniform quantum Fourier transform (NUQFT) of \cite{AKY26}. Our initial motivation for the work was to implement the \cite{DHR97} pipeline for polynomial transforms on quantum circuits. We leave that for a future work and focus on the Non-uniform Chebyshev Transform in this paper.

\subsection{Problem Statement}
\label{sec:problem-statement}

The Chebyshev polynomials of the first kind are
\begin{equation}
  T_j(x)=\cos(j\arccos(x)),
  \qquad x \in [-1,1].
  \label{eq:cheb-angle}
\end{equation}
\begin{definition}[Non-uniform Chebyshev transform]
\label{def:nuct}
Let $N=2^n$ with $n\ge2$, and let $x_k=-1+\frac{2k}{N}$, $k=0,\ldots,N-1$, be equispaced nodes in $[-1,1]$.  The non-uniform Chebyshev transform (NUCT) maps $f\in\C^N$ to
\begin{equation}
  c_j=\sum_{k=0}^{N-1}f_kT_j(x_k),
  \qquad j=0,\ldots,N-1.
  \label{eq:nupct-def}
\end{equation}
\end{definition}
This is the discrete polynomial transform of \cite[Eq.~(1.2)]{DHR97} with $P_j=T_j$ and unit weights, and exactly the transform of \cite[Corollary~2.2(2)]{DHR97}.  Since $T_j(x_k)=\cos(j\theta_k)$ with $\theta_k=\arccos x_k$, it is a cosine transform at the angles $\theta_k$, which are non-uniform because the nodes are uniform in $x$.  On Chebyshev nodes the NUCT reduces to a discrete cosine transform and coincides, up to a normalization, with the quantum Chebyshev transform of \cite{WPWEK23}.  On general nodes it is a \emph{projection} onto the sampled $T_j$, however its outputs are not the coefficients $b_j$ of an expansion $f_k=\sum_jb_jT_j(x_k)$, which would require inverting the transform, a problem that is exponentially ill-conditioned on equispaced nodes. We also define the normalized transform matrix
\begin{equation}
  (C_N)_{jk}=\frac{1}{\sqrt N}T_j(x_k),
  \label{eq:C-matrices}
\end{equation}
so that $c=\sqrt N\,C_Nf$.

Given a circuit that prepares
$\ket f=\norm f_2^{-1}\sum_kf_k\ket k$, the goal is to prepare an
approximation of
\begin{equation}
  \frac{C_N\ket f}{\norm{C_N\ket f}}=\frac{c}{\norm c_2}.
  \label{eq:quantum-task}
\end{equation}
We construct an $\eps$-accurate block encoding of $C_N$.  The state then
follows by postselection and amplitude amplification \cite{BHMT02}, with an error that
depends on $\eps$ and $\norm{C_N\ket f}$ (\Cref{cor:output-state}).\\

\begin{remark}
    We do not analyze the general case of arbitrarily placed nodes in $[-1,1]$, but our algorithm can be applied to those cases as well, given efficient node and row-access oracles.
\end{remark}

\subsection{Main contributions}
\label{sec:intro-main-contributions}

\paragraph{$\kappa$ independent block-encoding analysis.} The low-rank Chebyshev-Bessel expansion and the overall circuit structure are inherited from \cite{RT18,AKY26}. We obtain a simpler truncation bound, error bounds that do not depend on $\kappa$, and an improved block-encoding normalization. As in \cite{RT18}, several nodes may round to the same grid point. Unlike the classical algorithm, where such collisions are free, they enter our block encoding through its normalization $O(\sqrt{d_r})$, where $d_r$ is the largest number of nodes sharing a grid point (\Cref{lem:routing-factorization}). Our general NUQFT is therefore efficient when $d_r=\mathrm{polylog}(N)$.  In the worst case $d_r=N$, and the normalization grows to $\sqrt N$.

\paragraph{Application to the NUCT.} Combining the reduction of the NUCT to two Type-II NUDFTs (\Cref{thm:nupct-reduction}, outlined in \Cref{sec:overview}) with our conditioning-free NUQFT gives an $\varepsilon$-accurate block encoding of the NUCT without any geometry-dependent parameter such as $\kappa$.  For these nodes we show that at most five angles round to the same grid point, so $d_r\le5$, and we construct the row-access oracle explicitly (\Cref{prop:row-oracle}).  The result therefore needs no assumption on collisions or on row access. The main result can be summarized in the theorem below:
\begin{theorem}[Informal version of \Cref{thm:compiled-nupct,cor:output-state}]
Let $N=2^q$, $0<\eps\le1$ and $L=q+\log(1/\eps)$.  There is an explicit quantum circuit that implements an $\eps$-accurate block encoding of the NUCT matrix $C_N$ with normalization $O(1)$, using $O(L)$ qubits and $\widetilde O(L^2)$ gates ($\widetilde O(L^3)$ Clifford$+T$ gates).  Given $\ket f$ with $r:=\norm{C_N\ket f}>\eps$, it prepares a state within $2\eps/(r-\eps)$ of \Cref{eq:quantum-task} using $O(1/(r-\eps))$ rounds of amplitude amplification.
\end{theorem}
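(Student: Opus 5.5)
The construction rests on the identity $T_j(x_k)=\cos(j\theta_k)=\tfrac12\bigl(e^{ij\theta_k}+e^{-ij\theta_k}\bigr)$ with $\theta_k=\arccos x_k$. It gives $C_N=\tfrac12(F_++F_-)$, where $(F_\pm)_{jk}=N^{-1/2}e^{\pm ij\theta_k}$ are Type-II non-uniform DFT matrices at the nodes $\pm\theta_k$. We first block-encode each $F_\pm$ by the conditioning-free NUQFT and then combine them with a linear combination of unitaries. A single ancilla prepared in $\ket{+}$ controls the sign of the node oracle, so one controlled NUQFT circuit realises both $F_+$ and $F_-$. Unpreparing the ancilla with a Hadamard yields a block encoding of $\tfrac12(F_++F_-)=C_N$ with the same normalization $\alpha$ as each NUQFT block encoding. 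The error is at most the average of the two component errors, hence at most $\eps$ if each is $\eps$-accurate.

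The NUQFT block encoding itself is built in three steps.
\begin{enumerate}[label=(\roman*)]
\item Round each $\theta_k$ to the nearest point $2\pi m_k/(\gamma N)$ of an oversampled uniform grid.
\item Expand the residual phase $e^{ij\delta_k}$, with $|\delta_k|\le\pi/(\gamma N)$, in a truncated Chebyshev--Bessel series of rank $K=O(\log(1/\eps)/\log\log(1/\eps))$. Each term factors into a diagonal in $j$, a uniform QFT, and a diagonal in $k$. Because the truncation error is controlled entrywise, with a bound depending only on $\gamma$ and $K$, the error in spectral norm carries only $O(1)$ factors and no $\kappa$.
\item Realise the routing $k\mapsto m_k$ by the factorization of \Cref{lem:routing-factorization}, which costs normalization $O(\sqrt{d_r})$.
\end{enumerate}
This requires a node oracle $\ket k\ket0\mapsto\ket k\ket{\theta_k}$, computed by a reversible fixed-point $\arccos$ to $O(L)$ bits, and the row-access oracle of \Cref{prop:row-oracle}. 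That oracle enumerates, for each grid point $m$, the at most $d_r$ indices $k$ with $m_k=m$.

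The main obstacle is bounding $d_r$ for the nodes $x_k=-1+2k/N$. The first attempt is a derivative argument: $|d\theta/dx|=1/\sqrt{1-x^2}\ge1$, so consecutive angles are separated by at least $2/N$, while grid cells have width $2\pi/(\gamma N)$. Hence a cell contains at most $\lceil \pi/\gamma\rceil+1$ angles, which gives $d_r\le5$ for the oversampling factor used. Near $x=\pm1$ the derivative blows up, so the angles spread further apart and the bound only improves. The same monotonicity makes the preimage of each $m$ a contiguous range of $k$. The endpoints of this range are computed by evaluating $\cos$ at the cell boundaries, which is what makes the row oracle explicit. With $d_r\le5$ the normalization is $O(1)$.

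The resource counts follow from the components. The QFT and reversible arithmetic (Chebyshev diagonals, $\arccos$, $\cos$) on $O(L)$-bit registers cost $\widetilde O(L^2)$ gates and $O(L)$ qubits. Synthesizing the rotations in Clifford$+T$ adds an $O(L)$ factor, giving $\widetilde O(L^3)$. For the output state, write the block encoding as $U$ with $\|\alpha\bra0 U\ket0-C_N\|\le\eps$, after rescaling $\eps$ by the constant $\alpha$. The postselected unnormalised vector $v$ then satisfies $\|v-C_N\ket f\|\le\eps$, so $\|v\|\ge r-\eps$. Applying the normalisation inequality $\|v/\|v\|-w/\|w\|\|\le 2\|v-w\|/\|v\|$ gives error at most $2\eps/(r-\eps)$. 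Amplitude amplification on a success amplitude of at least $(r-\eps)/\alpha$ requires $O(1/(r-\eps))$ rounds, since $\alpha=O(1)$.
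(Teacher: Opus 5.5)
Your skeleton is the paper's: the split $C_N=\tfrac12(F_++F_-)$, one NUQFT whose node data are controlled by a sign qubit between two Hadamards, $d_r\le5$ from $|d\theta/dx|\ge1$ plus monotonicity, and the inequality $\norm{u/\norm u-v/\norm v}\le2\norm{u-v}/\norm u$ followed by amplitude amplification. The gap is in the step that is supposed to make the NUQFT conditioning-free.

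\emph{Where $\kappa$ actually enters.} The parameter $\kappa$ does not come from truncation, which is uniform on $[-1,1]^2$ anyway. It comes from the implementation. The diagonal entries $T_\ell(\text{offset})=\cos(\ell\arccos(\text{offset}))$ are evaluated at an offset derived from an approximate node, while the target matrix uses the exact node. The node error is therefore pushed through $\arccos$, whose slope $(1-y^2)^{-1/2}$ is unbounded. For these nodes the offsets come very close to $\pm1$, and only $\delta_N\ge\frac{2}{\pi}\,2^{-(q-1)N/2}$ is known (\Cref{app:kappa-nuct}). You compute $\theta_k$ to ``$O(L)$ bits'' but never say how that error reaches the block encoding, so nothing in your argument removes $\kappa$.

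\emph{The missing idea.} Take the stored $m$-bit nodes $\tau$ as the exact nodes of the transform. Then the grid point and the offset $z_j$ are exact bit arithmetic. The $\arccos$ evaluations inside the NUQFT act only on exact inputs, and output rounding costs at most $K2^{-p}$ per Chebyshev value, uniformly in $z_j$. The discrepancy with the true nodes is bounded through the Fourier phases, $\norm{F_t-F_\tau}\le\frac{2\pi}{\sqrt3}N^{3/2}\norm{t-\tau}_\infty$. This is what dictates $m=\lceil\frac32q+\log_2(24\pi/\eps)\rceil$.

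\emph{The truncation claim.} ``The truncation error is controlled entrywise, so the spectral-norm error carries only $O(1)$ factors'' is not a valid deduction. Entrywise control cannot exclude an error matrix with all entries of size $\Delta_K/\sqrt N$, whose norm is $\sqrt N\Delta_K$. This is why the paper takes $K=\lceil\log_2(48\sqrt N/\eps)\rceil$. An $N$-independent bound does hold, but only if you write the discarded tail as a sum of $\alpha'_{\ell r}\,(\text{diagonal})\,FM_\sigma\,(\text{diagonal})$ with unit-bounded diagonals and use $\norm{M_\sigma}\le\sqrt{d_r}$. Since $K=O(L)$ suffices for the stated complexity, this affects only your claimed value of $K$.

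Two further claims of the statement are not yet supported.

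\emph{Normalization.} $d_r\le5$ alone does not give normalization $O(1)$. Each of the $K$ terms carries an inner LCU over $\ell$ of weight $\lambda_r$. Assembling them costs $\sqrt{d_r}\Lambda$ with the weighted selector of amplitudes $\sqrt{\lambda_r/\Lambda}$, but $K\sqrt{d_r}\Lambda$ with a uniform one. With only the earlier estimate $\Lambda=O(K)$, the uniform construction has normalization $O(K^2)$, i.e.\ $O(L^2/(r-\eps))$ amplification rounds. You need the $K$-independent bound $\Lambda\le8e^{\pi^2/32+\pi/4}<24$, obtained from $|J_n(a)|\le e^{a^2/4}(a/2)^n/n!$ summed over the shells $\max\{\ell,r\}=s$, together with the weighted outer LCU (\Cref{lem:bivariate-chebyshev-bessel,lem:assembled-factor-error}).

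\emph{Row oracle.} The row oracle cannot obtain the range endpoints just by evaluating $\cos$ at the cell boundaries $2\pi(\ell\pm\frac12)/N$. Deciding on which side of such a boundary a node lies is exactly the near-boundary question behind $\kappa$. A fixed-precision cosine can disagree with the column map $\sigma_k=\nint(N\tau_k)$ on any node within its error of a boundary, and ruling this out would need precision of order $\log_2(1/\delta_N)$. If row and column access disagree, the sparse-access construction does not block-encode $M_\sigma$. The paper uses the cosine only to locate the window to within $1/4$, then tests eight candidates with $O_\sigma$ itself (\Cref{prop:row-oracle}).

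For the same reason, $d_r\le5$ and contiguity have to be proved for the stored nodes rather than the exact angles. This is where the enlarged window $|Nt_k-\ell|<3/4$ and the strict decrease of $\tau_k$ under $N\norm{t-\tau}_\infty<1/(2\pi)$ enter (\Cref{lem:quantized-multiplicity}).
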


\subsection{Overview}
\label{sec:overview}
Write $x_k=\cos(2\pi t_k)$ with $t_k=\arccos(x_k)/2\pi\in(0,\tfrac12]$.  Since $T_j(x_k)=\frac12(e^{-2\pi ijt_k}+e^{2\pi ijt_k})$, the NUCT matrix is the average of two Type-II non-uniform DFTs,
\[
  C_N=\tfrac12\left(F_t+F_{t^-}\right),
  \qquad
  (F_t)_{jk}=\tfrac{1}{\sqrt N}e^{-2\pi ijt_k},
  \qquad
  t_k^-=(-t_k)\bmod1
\]
(\Cref{thm:nupct-reduction}).  The NUQFT of \cite{AKY26} block-encodes such matrices, but its error bound depends on a parameter $\kappa$ that we cannot control for these nodes, and it assumes row access to the matrix $M_\sigma$ as an oracle.

\paragraph{Proof technique.}
\begin{itemize}
  \item \emph{Exact nodes.}  Each node is stored as an $m$-bit fixed-point number $\tau_k$, and $\tau$ is treated as the exact node set of the transform.  The error then splits into a truncation term, an implementation term and a node term $\norm{F_t-F_\tau}$.
  \item \emph{No input error in $\arccos$.}  The offsets $z_j$ are computed exactly from $\tau$, so every $\arccos$ acts on an exact input and only its output angle is rounded.  The Chebyshev error is then uniform in the nodes, and $\kappa$ never appears.
  \item \emph{Node error through the phases.}  The node term is bounded through the Fourier phases, whose derivative has no singularity.  This gives $\norm{F_t-F_\tau}\le\frac{2\pi}{\sqrt3}N^{3/2}\norm{t-\tau}_\infty$, so $m=\frac32q+O(\log(1/\eps))$ bits per node suffice.
  \item \emph{Constant normalization.}  A Bessel-function bound gives an explicit truncation rank $K=O(L)$ and a total coefficient weight $\Lambda<24$.  With a weighted outer LCU, the normalization is $O(\sqrt{d_r})$.
  \item \emph{Structure of the NUCT nodes.}  At most five nodes share a grid point, and they have consecutive indices.  This gives $d_r\le5$ and an explicit row oracle.
\end{itemize}

\paragraph{Algorithm.}
\begin{itemize}
  \item Compute each stored node $\tau_k\approx\arccos(x_k)/(2\pi)$ to $m$ bits with a reversible $\arccos$ circuit.
  \item Block-encode $F_\tau\approx\sum_{r<K}D_{v_r}FM_\sigma D_{u_r}$.  The Chebyshev diagonals $D_{v_r}$ and $D_{u_r}$ use an angle register and controlled rotations, $M_\sigma$ uses sparse access, and $F$ is the QFT.  The $K$ terms are combined by a weighted LCU.
  \item Obtain $F_{\tau^-}$ from the same circuit by reflecting the node data, and average the two with one sign qubit and two Hadamards.  This block-encodes $C_N$ with normalization $O(1)$.
  \item Apply the block encoding to $\ket f$ and use amplitude amplification to prepare $C_N\ket f/\norm{C_N\ket f}$.
\end{itemize}

\paragraph{Organization.}  \Cref{sec:prelim} introduces the preliminaries like a Bessel function bound and the quantum circuits for trigonometric functions used throughout.  \Cref{sec:background} defines the Type-II NUDFT recalls the important lemmas of \cite{RT18} and \cite{AKY26}, showing where $\kappa$ enters the analysis and lists the issues we address.  \Cref{sec:CFNUQFT} provides a sharper analysis removing $\kappa$ from the NUQFT algorithm in \Cref{thm:dyadic-nuqft}.  \Cref{sec:nupct} applies it to the NUCT by using the reduction to two NUDFTs and constructing explicit node, row-access and coefficient-preparation oracles along with end-to-end complexity in \Cref{thm:compiled-nupct,cor:output-state}.

\section{Preliminaries}
\label{sec:prelim}

\noindent \textbf{Fixed-point number.} For an integer $m\ge1$, an $m$-bit fixed-point number in $[0,1)$ is
any number of the form
\[
  \tau=\frac{a}{2^m},
  \qquad
  a\in\{0,1,\ldots,2^m-1\}.
\]
Equivalently, $\tau$ has a terminating binary expansion $\tau=0.b_1b_2\cdots b_m$.  Stored nodes are written
\begin{equation}
  \tau_j=\frac{a_j}{2^m},
  \qquad
  a_j\in\{0,\ldots,2^m-1\}.
  \label{eq:fixed-point-node}
\end{equation}  
Offsets such as $z(t)\in[-1,1]$ can be negative, so we also use \emph{signed} fixed-point numbers.  A signed fixed-point number with $p$ fractional bits is any number $a/2^p$ with $a\in\Z$, and a register stores it as the integer $a$ in two's complement.  Shifts, additions, subtractions and comparisons of fixed-point numbers are exact.  We call a quantity \emph{exact} when it is a fixed-point number held in a register with no rounding.

\begin{remark}
    When a quantum register stores the integer $a$, we regard it as an exact representation of the mathematical value $a/2^m$.  A real or irrational node $t$ may first be approximated by a fixed-point node $\tau$. The fixed-point NUQFT is then applied exactly to $\tau$ and the matrix perturbation from $t$ to $\tau$ is accounted for separately.
\end{remark}

\noindent \textbf{Nearest grid point and offset.}  Let $N=2^q$.  For a node $t\in[0,1)$ define
\begin{align}
  s^{\mathrm{unw}}(t)
  &:=\nint(Nt)\in\{0,\ldots,N\}, \notag\\
  \sigma(t)
  &:=s^{\mathrm{unw}}(t)\bmod N\in\{0,\ldots,N-1\}, \notag\\
  z(t)
  &:=2\bigl(Nt-s^{\mathrm{unw}}(t)\bigr)\in[-1,1],
  \label{eq:dyadic-routing-data-a}
\end{align}
where $\nint(\cdot)$ rounds to the nearest integer.  Here $s^{\mathrm{unw}}(t)/N$ is the grid point nearest to $t$, and $z(t)$ is the distance from $t$ to it, scaled to $[-1,1]$.  The index $\sigma(t)$ is the same grid point wrapped around to $\{0,\ldots,N-1\}$, so that it names a column of the $N\times N$ DFT.  The two differ only when $t$ is within $1/(2N)$ of $1$, where $s^{\mathrm{unw}}(t)=N$ and $\sigma(t)=0$.  Since $Nt=s^{\mathrm{unw}}(t)+z(t)/2$ and $s^{\mathrm{unw}}(t)\equiv\sigma(t)\pmod N$, for every integer frequency $k$,
\begin{equation}
  e^{-2\pi ikt}
  =e^{-2\pi ik\sigma(t)/N}\,e^{-i\pi(k/N)z(t)}.
  \label{eq:background-phase-split}
\end{equation}
The first factor is an entry of the uniform DFT.  The second depends on the node only through the offset $z(t)$.  For a node sequence we write $s_j^{\mathrm{unw}}=s^{\mathrm{unw}}(t_j)$, $\sigma_j=\sigma(t_j)$ and
\begin{equation}
  M_\sigma:=\sum_{j=0}^{N-1}\ket{\sigma_j}\!\bra{j},
  \qquad
  d_\ell:=\bigl|\{j:\sigma_j=\ell\}\bigr|,
  \qquad
  d_\tau:=\max_{0\le\ell<N}d_\ell,
  \label{eq:prelim-routing-matrix-data}
\end{equation}
so that $M_\sigma$ sends node $j$ to its grid point, $d_\ell$ is the number of nodes assigned to grid point $\ell$, and $d_\tau$ is the largest such number.  For real nodes $t_j$ (\Cref{sec:background}) we write the offset as $y_j=z(t_j)$, following \cite{AKY26}.  For the stored fixed-point nodes $\tau_j$ of \Cref{eq:fixed-point-node} (\Cref{sec:CFNUQFT}) we write $z_j=z(\tau_j)$.

\subsection{Chebyshev polynomials and Bessel Functions}

\noindent \textbf{Chebyshev polynomials.} For $n\in\Z_{\ge0}$ and $x\in[-1,1]$, $T_n(x)=\cos(n\arccos x)$.  Equivalently, with the angular variable $\theta=\arccos x\in[0,\pi]$,
\begin{equation}
  T_n(\cos\theta)=\cos(n\theta),
  \qquad
  |T_n(x)|\le1,
  \qquad
  T_0\equiv1,
  \qquad
  T_n(1)=1,
  \qquad
  T_n(-1)=(-1)^n.
  \label{eq:chebyshev-facts}
\end{equation}
The first identity is how the circuits evaluate $T_n$: they compute the angle $\theta$ into a register and apply a rotation by $n\theta$.  The bound $|T_n|\le1$ means that diagonal matrices of Chebyshev values have norm at most one and can be block-encoded without rescaling.

\begin{remark}[Endpoint bookkeeping]
\label{rem:endpoint-bookkeeping}
Chebyshev series conventionally halve their $n=0$ term.  We absorb this halving into the coefficients through the weights $\eta_n$ of \Cref{eq:endpoint-adjusted-coeff}, so every factor $T_0$ that appears in a circuit is the exact identity and needs no angle computation.  The endpoints are likewise exact: at $x=1$ the angle is $0$, and at $x=-1$ the parity value $T_n(-1)=(-1)^n$ can be applied directly, bypassing the angle register.
\end{remark}

\noindent \textbf{Bessel functions.} We use the Bessel functions of the first kind $J_n$ of integer order, together with the symmetries \cite[Eqs.~10.4.1 and 10.11.1]{OLBC10}
\begin{equation}
  J_{-n}(a)=(-1)^nJ_n(a),
  \qquad
  J_n(-a)=(-1)^nJ_n(a),
  \label{eq:bessel-symmetries}
\end{equation}
so in particular $|J_{-n}(a)|=|J_n(-a)|=|J_n(a)|$.  Bessel functions enter the paper through the Jacobi--Anger expansion.  Setting $t=ie^{iu}$ in the generating function $e^{z(t-1/t)/2}=\sum_{n\in\Z}t^nJ_n(z)$ \cite[Eq.~10.12.1]{OLBC10} gives $e^{iz\cos u}=\sum_ni^nJ_n(z)e^{inu}$.  With $z=-a$ and \Cref{eq:bessel-symmetries},
\begin{equation}
  e^{-ia\cos u}=\sum_{n\in\Z}(-i)^nJ_n(a)\,e^{inu},
  \qquad a\in\R,\ u\in\R.
  \label{eq:jacobi-anger}
\end{equation}
The following bound gives the super-geometric decay of $J_n(a)$ in $n$.

\begin{lemma}\label{lm:bessel_bound}
    For $n \in \Z_{\geq 0}$ and $a \geq 0$, 
    \[
    |J_n(a)| \leq e^{a^2/4}\frac{(a/2)^n}{n!}.
    \]
\end{lemma}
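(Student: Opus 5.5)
We would prove the bound directly from the power series of $J_n$,
\[
  J_n(a)=\sum_{k\ge0}\frac{(-1)^k}{k!\,(n+k)!}\Bigl(\frac a2\Bigr)^{n+2k},
\]
which converges absolutely for all real $a$ and integer $n\ge0$ \cite[Eq.~10.2.2]{OLBC10}. The plan is to factor out the leading term $(a/2)^n/n!$ and show that the remaining series is bounded in absolute value by $e^{a^2/4}$.

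First, apply the triangle inequality term by term:
\[
  |J_n(a)|\le\frac{(a/2)^n}{n!}\sum_{k\ge0}\frac{n!}{k!\,(n+k)!}\Bigl(\frac{a^2}{4}\Bigr)^k .
\]
Next, use the elementary inequality $(n+k)!\ge n!\,k!$. This holds because $(n+k)!/(n!\,k!)=\binom{n+k}{k}\ge1$. It gives $\frac{n!}{k!(n+k)!}\le\frac1{(k!)^2}\le\frac1{k!}$. The remaining sum is then at most $\sum_k (a^2/4)^k/k!=e^{a^2/4}$, which is exactly the claimed bound. The hypothesis $a\ge0$ is used only so that $(a/2)^n$ is nonnegative and equals its absolute value. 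The bound for negative $a$ or negative order then follows from \Cref{eq:bessel-symmetries}, but the statement does not require it.

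There is no real obstacle here. The only point needing care is to justify the series representation and its absolute convergence, which is what licenses the termwise triangle inequality. For that we would simply cite the standard ascending series. The bound $1/(k!)^2\le1/k!$ is wasteful: it could be sharpened to $I_0(a)$ in place of $e^{a^2/4}$. We keep the weaker exponential form because it is all the later truncation-rank estimate needs.
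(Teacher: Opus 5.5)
Your proposal is correct and matches the paper's proof: both use the ascending series for $J_n$, the termwise triangle inequality, and a factorial comparison that reduces the remaining sum to $\sum_k (a^2/4)^k/k! = e^{a^2/4}$. The only difference is that you use $(n+k)!\ge n!\,k!$ and then drop the extra factor $1/k!$, while the paper uses $(n+k)!\ge n!$ directly; both give the same bound.
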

\begin{proof}
    For every $n\in\mathbb Z_{\ge0}$, the defining series for the Bessel
function of the first kind \cite[Eq.~10.2.2]{OLBC10} gives
\begin{equation*}
  J_n(a)
  =
  \left(\frac a2\right)^n
  \sum_{k=0}^{\infty}
  \frac{(-1)^k(a^2/4)^k}{k!(n+k)!}.
\end{equation*}
Consequently, using $(n+k)!\ge n!$,
\begin{align*}
  |J_n(a)|
  &\le
  \left(\frac a2\right)^n
  \sum_{k=0}^{\infty}
  \frac{(a^2/4)^k}{k!(n+k)!} \notag\\
  &\le
  \frac{(a/2)^n}{n!}
  \sum_{k=0}^{\infty}\frac{(a^2/4)^k}{k!}
  =
  e^{a^2/4}\frac{(a/2)^n}{n!}.
\end{align*}
\end{proof}

\subsection{Quantum Algorithm Primitives}

We use standard reversible circuits for elementary functions on fixed-point inputs.  CORDIC evaluates these functions by a sequence of shift-and-add rotations and \cite{BBG24} gave a clean reversible quantum version of it. \Cref{thm:reversible-elementary-functions} follows from their construction. 
\begin{theorem}[Reversible inverse cosine and trigonometric evaluation]
\label{thm:reversible-elementary-functions}
Let $b,p\ge1$ and let the input be exact in signed $b$-bit fixed point.  There
are clean reversible circuits that return, with $p$ fractional bits,
\begin{enumerate}[label=(\roman*),leftmargin=2.4em]
  \item an angle $\widetilde\phi(z)\in[0,\pi]$ with
  $|\widetilde\phi(z)-\arccos z|\le c_{\rm acos}2^{-p}$ for every
  $z\in[-1,1]$, exact at $z=1$.  At $z=-1$ an endpoint flag may instead
  implement $T_n(-1)=(-1)^n$ exactly.
  \item values $\widetilde s(\theta),\widetilde c(\theta)$ with
  $|\widetilde s(\theta)-\sin\theta|+|\widetilde c(\theta)-\cos\theta|
  \le c_{\rm trig}2^{-p}$ for every $\theta$ in a fixed bounded interval,
\end{enumerate}
using
\begin{equation}
  O(b+p)\ \text{qubits},\qquad
  \widetilde O\!\left((b+p)^2\right)\ \text{gates},\qquad
  \widetilde O(b+p)\ \text{depth}.
  \label{eq:reversible-elementary-resources}
\end{equation}
\end{theorem}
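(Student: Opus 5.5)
The plan is to obtain \Cref{thm:reversible-elementary-functions} from the reversible CORDIC construction of \cite{BBG24}, adding only the fixed-point bookkeeping the paper needs. First, part (ii). Run $p+O(1)$ CORDIC iterations in rotation mode, each rotating by $\pm\arctan 2^{-i}$ with the sign chosen by the sign bit of the residual angle register. Precompute the constants $\arctan 2^{-i}$ and the gain $K_p^{-1}=\prod_i(1+2^{-2i})^{1/2}$ to $p+O(\log p)$ bits and hardwire them into the circuit. Each iteration is one controlled add/subtract with a shifted copy, so it costs $O(b+p)$ gates on registers of width $O(b+p)$. Before iterating, reduce $\theta$ from the fixed bounded interval into the CORDIC convergence range $[-\pi/2,\pi/2]$ by a constant number of exact comparisons and subtractions of a stored $\pi$, tracking the resulting sign flips.

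The error analysis for (ii) has the usual two parts. The angular residual after $n$ iterations is at most $\arctan 2^{-n+1}\le 2^{-n+1}$. The rounding from truncating the shifted adds contributes $O(n2^{-p'})$ if $p'=p+\lceil\log_2 n\rceil+O(1)$ guard bits are kept. Since $\sin$ and $\cos$ are $1$-Lipschitz, together these give $|\widetilde s-\sin\theta|+|\widetilde c-\cos\theta|\le c_{\rm trig}2^{-p}$ for an absolute constant.

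Part (i) uses vectoring mode on the pair $(z,\sqrt{1-z^2})$, or equivalently the double-iteration arccos variant of CORDIC in \cite{BBG24}, which drives the $y$-coordinate to zero and accumulates the angle. The main obstacle is uniformity near $z=\pm1$, where $\arccos$ has infinite derivative. An input error of size $\delta$ would produce an output error of order $\sqrt\delta$. Here, however, the input $z$ is exact, so the only errors are the CORDIC angular residual and the rounding in the datapath. I would argue this as follows. If $\sqrt{1-z^2}$ is computed with a reversible fixed-point square root to $2p+O(1)$ bits, then its error is $O(2^{-2p})$. Treating the pair as a vector of norm $1+O(2^{-2p})$, the vectoring-mode angle it produces differs from $\arccos z$ by $O(2^{-p})$, because the angle of a vector is Lipschitz in the vector away from the origin, and the vector has norm about $1$.

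The endpoints are handled separately. At $z=1$, an exact comparison sets a flag that zeroes the angle register, which makes the output exact there. At $z=-1$, a flag routes control to a parity gate implementing $(-1)^n$. Working at $2p$ bits only multiplies the register widths by a constant.

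Resource counts come from the iteration count and width. There are $O(b+p)$ iterations, each a controlled adder on $O(b+p)$ qubits. Using linear-size adders gives $O((b+p)^2)$ gates. Logarithmic-depth adders give the $\widetilde O(b+p)$ depth, provided the iterations are pipelined or the known parallel CORDIC scheduling is used. Garbage (the sign-bit history, the intermediate square root) is removed by the standard compute–copy–uncompute pattern. This makes the circuits clean while at most doubling the gate count, which gives \Cref{eq:reversible-elementary-resources}.
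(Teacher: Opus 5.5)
Your proposal is correct up to two small repairs noted at the end. It shares the paper's skeleton: reversible CORDIC following \cite{BBG24}, rotation mode after range reduction for (ii), coherent endpoint flags, and compute--copy--uncompute to clean up. The genuine difference is in part (i).

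\textbf{Paper's route.} The paper runs the \cite{BBG24} $\arcsin$ circuit at working precision $w=\Theta(b+p)$ and returns $\pi/2-\arcsin z$. It inherits both the uniform error near $z=\pm1$ and the $O(w)$-qubit, $O(w^2)$-gate, $O(w\log w)$-depth costs from that reference, and it needs no square root.

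\textbf{Your route.} You compute $\sqrt{1-z^2}$ reversibly and vector the pair. This costs an extra reversible square root, of the same $\widetilde O((b+p)^2)$ order. In exchange it makes explicit why no conditioning appears. The input is exact, so the square root only rounds its output, and the angle of a vector of norm close to $1$ is Lipschitz in the vector. This is the same exact-input principle the paper uses in \Cref{lem:diagonal-factor-implementation}. Your own Lipschitz argument also shows that the $2p$-bit square root is unnecessary: $p+O(1)$ output bits already give an $O(2^{-p})$ angle error.

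\textbf{Two repairs for your variant.}
\begin{enumerate}
  \item Vectoring with rotations $\pm\arctan 2^{-i}$ converges only for initial angles up to about $1.74$ rad, but $\arccos z$ reaches $\pi$. Vector $(\sqrt{1-z^2},z)$ instead, which gives $\arcsin z\in[-\pi/2,\pi/2]$, and output $\pi/2-\arcsin z$. This lands exactly on the paper's identity.
  \item The resource bounds quoted from \cite{BBG24} are for its $\arcsin$ circuit. With your own vectoring datapath you still have to update $(x,y)$ reversibly in place at each iteration. Keeping every iteration's registers until the final uncomputation would use $O((b+p)^2)$ qubits rather than $O(b+p)$.
\end{enumerate}
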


\begin{proof}
For (i), run the reversible CORDIC $\arcsin$ circuit of \cite{BBG24}
($O(w)$ qubits, $O(w^2)$ gates, $O(w\log w)$ depth) at precision
$w=\Theta(b+p)$ and use $\arccos z=\pi/2-\arcsin z$, with a coherent
comparison for the endpoints.  For (ii), use reversible CORDIC in rotation
mode after range reduction \cite{HanerRoettelerSvore2018}, with the same
bounds.  In both cases, copying the output and uncomputing clears all
scratch registers.
\end{proof}

\section{Background on  Low-Rank NUFFT and NUQFT}
\label{sec:background}
Our algorithm modifies the non-uniform quantum Fourier transform (NUQFT) of \cite{AKY26}, which in turn is a quantum realization of the low-rank non-uniform FFT of \cite{RT18}.  This section defines the Type-II NUDFT \Cref{eq:type-II-mat}, recalls both constructions as they appear in those works, explains where the parameter $\kappa$ enters and states the result of \cite{AKY26} (\Cref{thm:aky26}), and lists the changes that our algorithm makes in \Cref{sec:CFNUQFT} .

\paragraph{The Type-II NUDFT.}
Following \cite{AKY26}, let $\{t_k\}_{k=0}^{N-1}\subseteq[0,1)$ be possibly non-uniform nodes and $\{\om_j\}_{j=0}^{N-1}$ uniform frequencies, $\om_j=j$.  The Type-II NUDFT matrix is
\begin{equation}
  (F_t)_{jk} = \frac{1}{\sqrt{N}} e^{-2\pi i \om_j t_k}=\frac{1}{\sqrt N}e^{-2\pi i jt_k},
  \qquad 0\le j,k<N.
  \label{eq:type-II-mat}
\end{equation}
Throughout the paper, rows are indexed by frequencies and columns by nodes.  So $F_t$ takes data given at the nodes and returns values at the uniform frequencies.  This is the orientation of the NUCT \Cref{eq:nupct-def}, whose output is indexed by the degree $j$ and whose sum runs over the nodes, and of the discrete polynomial transforms of \cite{DHR97}.  The rest of \Cref{sec:background} and \Cref{sec:CFNUQFT} rename the indices and write $(F_t)_{kj}=\frac{1}{\sqrt N}e^{-2\pi ikt_j}$, with frequency index $k$ and node index $j$.  Without a subscript, $F$ denotes the uniform DFT, the case $t_k=k/N$.

\begin{remark}[Orientation convention]
\label{rem:orientation}
\cite{RT18} and \cite{AKY26} write the Type-II matrix with the opposite orientation, nodes on the rows and frequencies on the columns:
\[
  (F_{\mathrm{II}})_{kj}=e^{-2\pi it_kj},
  \qquad\text{so that}\qquad
  F_t=\frac{1}{\sqrt N}F_{\mathrm{II}}^{\mathsf T}.
\]
In their terminology the transpose $F_{\mathrm{II}}^{\mathsf T}$ is called the Type-I NUDFT.  We nevertheless call $F_t$ the Type-II matrix, because the nodes and the uniform frequencies play the same roles as in \cite{AKY26}, and only the orientation differs.  Concretely:
\begin{itemize}
  \item Every matrix in this paper, including the factorization \Cref{eq:background-matrix-low-rank}, is written for $F_t$, with frequencies on the rows.
  \item Results quoted from \cite{AKY26}, such as \Cref{thm:aky26}, carry over unchanged, since $\norm{A^{\mathsf T}}=\norm{A}$.
  \item The matrix $F_{\mathrm{II}}$ of \cite{AKY26} is obtained by transposing our construction, as explained after \Cref{thm:dyadic-nuqft}.
\end{itemize}
\end{remark}

\subsection{Low-rank factorization of the NUDFT}
\label{sec:rt18-lowrank}

Applying the NUDFT directly costs $O(N^2)$ operations.  When the nodes are uniform, $t_j=j/N$, the FFT does the same in $O(N\log N)$.  The non-uniform FFT of \cite{RT18} reduces the general case to a few uniform FFTs.  It uses the fact that every node lies within half a grid spacing of some uniform grid point $s/N$.  Relative to that grid point, $e^{-2\pi ikt_j}$ is a uniform DFT entry times a correction factor that depends only on the frequency $k$ and on the small offset of $t_j$ from the grid point.  The correction factor is smooth, so it is well approximated by a short sum of terms, each a function of the node times a function of the frequency.  Each such term is a uniform DFT with a diagonal scaling on either side, so the NUDFT becomes a sum of a few scaled DFTs.  We recall the construction in three steps, in the notation used throughout this paper.

\paragraph{Split each node into a grid point and an offset.}
Each node $t_j$ is split into its nearest grid point $\sigma_j$ and the scaled offset $y_j=z(t_j)\in[-1,1]$ of \Cref{eq:dyadic-routing-data-a}.  By \Cref{eq:background-phase-split}, $e^{-2\pi ikt_j}=e^{-2\pi ik\sigma_j/N}e^{-i\pi(k/N)y_j}$ which is precisely an entry of the uniform DFT times a correction factor. This factor is the only part that depends on the offset and couples $y_j$ with the frequency $k$.

\paragraph{Separate the node and frequency dependence.}
To turn the correction factor into a sum of node-times-frequency terms, first rescale the frequency to $w_k:=2k/N-1\in[-1,1)$, so that both variables lie in $[-1,1]$.  Then
\begin{equation}
  e^{-i\pi(k/N)y_j}=e^{-i\pi y_j/2}\,H(w_k,y_j),
  \qquad
  H(x,z):=e^{-i\pi xz/2},\quad (x,z)\in[-1,1]^2.
  \label{eq:residual-kernel}
\end{equation}
The prefactor $e^{-i\pi y_j/2}$ depends only on the node, while the kernel $H$ couples both variables.  Because $H$ is analytic, its Chebyshev expansion in both variables converges rapidly, and each term $T_\ell(z)T_r(x)$ is a function of the node times a function of the frequency.  Writing $x=\cos\theta$, $z=\cos\varphi$ and $a=\pi/4$, one has $e^{-i\pi xz/2}=e^{-ia\cos(\theta+\varphi)}e^{-ia\cos(\theta-\varphi)}$.  Applying the Jacobi--Anger expansion \Cref{eq:jacobi-anger} to both factors and grouping the frequencies $(\pm r,\pm\ell)$ gives the coefficients
\begin{equation}
  \alpha_{\ell r}=
  \begin{cases}
    4i^r
    J_{(\ell+r)/2}(-\pi/4)
    J_{(r-\ell)/2}(-\pi/4),
    & \ell\equiv r\pmod 2,\\
    0,&\text{otherwise}.
  \end{cases}
  \label{eq:bessel-coeff}
\end{equation}
In the notation of \cite{AKY26}, this is the window parameter $\gamma=1/2$.  Chebyshev series conventionally halve their $n=0$ term.  We absorb this weight into the coefficients by setting
\begin{equation}
  \eta_n=
  \begin{cases}
    \tfrac12,&n=0,\\
    1,&n\ge1,
  \end{cases}
  \qquad
  \alpha'_{\ell r}=\eta_\ell\eta_r\alpha_{\ell r}.
  \label{eq:endpoint-adjusted-coeff}
\end{equation}
Then the expansion reads
\begin{equation}
  H(x,z)=\sum_{r=0}^{\infty}\sum_{\ell=0}^{\infty}
  \alpha'_{\ell r}T_\ell(z)T_r(x),
  \label{eq:exact-bivariate-expansion}
\end{equation}
and holds uniformly on $[-1,1]^2$. For more details refer to  \cite[Lemma~A.3]{Tow14} and \cite[Appendix~A]{RT18}.  The coefficients $\alpha'_{\ell r}$ coincide with those of \cite{AKY26}.  By \Cref{lm:bessel_bound}, $|J_n(\pi/4)|\le e^{\pi^2/64}(\pi/8)^n/n!$, so the coefficients decay faster than any geometric sequence and a small truncation rank $K$ suffices.  \cite{RT18} quantify this as follows.

\begin{theorem}[Adapted from Lemma~A.1 and Theorem~A.2, \cite{RT18}]
\label{thm:rt18}
For $K\ge1$ let
\[
  H_K(x,z)=\sum_{r=0}^{K-1}\sum_{\ell=0}^{K-1}\alpha'_{\ell r}T_\ell(z)T_r(x)
\]
be the rectangular truncation of \Cref{eq:exact-bivariate-expansion}.  For $0<\eps<1$ and
\begin{equation}
  K=\max\Bigl\{3,\Bigl\lceil\tfrac52\exp\Bigl(W\bigl(\tfrac25\log(140/\eps)\bigr)\Bigr)\Bigr\rceil\Bigr\}
  =O\!\left(\frac{\log(1/\eps)}{\log\log(1/\eps)}\right),
  \label{eq:rt18-rank}
\end{equation}
where $W$ is the Lambert-$W$ function, one has $\max_{x,z\in[-1,1]}|H(x,z)-H_K(x,z)|\le\eps$.  Consequently, for any nodes $t_j$ with offsets $y_j$, the $N\times N$ matrix with entries $e^{-i\pi(k/N)y_j}$ is approximated to within $\eps$ in every entry by the rank-$K$ matrix with entries $e^{-i\pi y_j/2}H_K(w_k,y_j)$.
\end{theorem}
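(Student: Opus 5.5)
The bound has two parts: a tail estimate for the bivariate Chebyshev truncation, and then a choice of $K$ that drives that tail below $\eps$. I will take the expansion \Cref{eq:exact-bivariate-expansion}, with coefficients \Cref{eq:bessel-coeff} and \Cref{eq:endpoint-adjusted-coeff}, as given.

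\textbf{Step 1: bound the tail by the coefficients.} Since $|T_\ell(z)T_r(x)|\le1$ on $[-1,1]^2$ by \Cref{eq:chebyshev-facts}, uniformly in $x,z$,
\[
|H-H_K|\le\sum_{\max(\ell,r)\ge K}|\alpha'_{\ell r}|\le 2\sum_{r\ge K}\sum_{\ell\ge0}|\alpha_{\ell r}|.
\]
The second inequality splits the complement of the square into $\{r\ge K\}$ and $\{\ell\ge K\}$, and uses that $|\alpha_{\ell r}|$ is symmetric in $\ell,r$ because of the Bessel symmetries \Cref{eq:bessel-symmetries}.

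\textbf{Step 2: bound the inner sums.} Put $a=\pi/4$, $p=(\ell+r)/2$ and $q=(r-\ell)/2$, so that $|\alpha_{\ell r}|=4|J_p(a)||J_q(a)|$ with $p\ge r/2$. \Cref{lm:bessel_bound} gives $|J_n(a)|\le e^{a^2/4}(a/2)^n/n!$. For fixed $r$, summing over $\ell$ with $\ell\equiv r\pmod 2$:
\begin{itemize}
\item each $|J_q|$ is bounded by $e^{a^2/4}(a/2)^{|q|}/|q|!$, and the sum over $q$ of these is at most $e^{a^2/4}\cdot 2e^{a/2}$;
\item each $|J_p|$ with $p\ge\lceil r/2\rceil$ is at most a constant times $(a/2)^{r/2}/\lceil r/2\rceil!$.
\end{itemize}
Summing then over $r\ge K$ gives the geometric-factorial tail
\[
|H-H_K|\le C\,\frac{(\pi/8)^{K/2}}{\lceil K/2\rceil!}.
\]
The constant $C$ is explicit and should come out at most $140$ in the normalisation of \cite{RT18}. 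If this route gives the wrong constant, I would instead follow \cite[Lemma~A.1]{RT18}, which sums over the $(\ell,r)$ lattice with a bound of the form $(e a/K)^{K}$-ish.

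\textbf{Step 3: choose $K$.} Apply a Stirling lower bound $n!\ge (n/e)^n$ to reduce the tail to a bound of the form $140\,(c/K)^{cK}\le\eps$, i.e.\ $\tfrac25 K\log(2K/5)\ge\log(140/\eps)$ after matching the constants. Solving with $u=\log(2K/5)$ gives $ue^u\ge\tfrac25\log(140/\eps)$, hence $K\ge\tfrac52\exp(W(\tfrac25\log(140/\eps)))$. This is exactly \Cref{eq:rt18-rank}. The $\max\{3,\cdot\}$ is there to ensure the Stirling/geometric estimates apply, and $W(y)\sim\log y-\log\log y$ gives the stated $O(\log(1/\eps)/\log\log(1/\eps))$.

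\textbf{Step 4: the matrix consequence.} Substitute $x=w_k\in[-1,1)$ and $z=y_j\in[-1,1]$ into \Cref{eq:residual-kernel}. Since $|e^{-i\pi y_j/2}|=1$, the entrywise error is at most $\eps$. The rank is at most $K$ because $H_K(w_k,y_j)=\sum_{r<K}T_r(w_k)\bigl(\sum_\ell\alpha'_{\ell r}T_\ell(y_j)\bigr)$ is a sum of $K$ separable terms.

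\textbf{Main obstacle.} The delicate step is tracking the constants in Steps 2 and 3 so that precisely $140$ and the factor $\tfrac52$ appear. Since the statement is "adapted from" \cite{RT18}, I would verify that the coefficients \Cref{eq:bessel-coeff}, with $\gamma=1/2$, match theirs and then cite their Lemma~A.1 and Theorem~A.2 for the exact constant. My own derivation would serve only as a check of the order of the rank.
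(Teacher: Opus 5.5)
Your Step~1 is correct, and it is more careful than the source. The paper gives no proof of its own here; it quotes \cite{RT18}. \Cref{rem:rt18-changes} then points out that the argument in \cite{RT18} only bounds the corner $\{\ell\ge K,\ r\ge K\}$, not the full complement $\{\max\{\ell,r\}\ge K\}$ that you sum over. So your fallback of citing \cite{RT18} ``for the exact constant'' would inherit exactly the gap the paper flags. The complete tail bound is \Cref{lem:bivariate-chebyshev-bessel}(i). Step~4 is fine.

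The genuine problem is Step~2, and it is not a matter of constants. You bound $|J_p|$ by its worst case over $p\ge\lceil r/2\rceil$ and sum $|J_q|$ over all $q$ separately, and this throws away half the exponent. Your tail is then of size roughly $(\pi/8)^{K/2}/\lceil K/2\rceil!\approx(e\pi/(4K))^{K/2}$. The stated $K$ only guarantees $K\log(2K/5)\ge\log(140/\eps)$, i.e.\ $140\,(5/(2K))^K\le\eps$. (Your Step~3 misplaces the factor $\tfrac25$: from $u=W(y)$ and $\tfrac{2K}{5}\ge e^u$ one gets $\tfrac{2K}{5}\log\tfrac{2K}{5}\ge ue^u=\tfrac25\log(140/\eps)$.) Since $(c/K)^{K/2}\gg(5/(2K))^K$ as $K$ grows, your estimate fails for all sufficiently small $\eps$, whatever your constant $C$ is. For example, $\eps=10^{-10}$ gives $K=16$, yet already $(\pi/8)^8/8!\approx1.4\times10^{-8}>\eps$. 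Your route recovers the order $O(\log(1/\eps)/\log\log(1/\eps))$, but with about twice the rank.

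The missing idea is to bound the product jointly. With $s=\max\{\ell,r\}=p+|q|$ and $|J_q|=|J_{|q|}|$, \Cref{lm:bessel_bound} gives $|J_p(a)J_q(a)|\le e^{a^2/2}(a/2)^s/(p!\,|q|!)$. Each $(p,|q|)$ arises from at most two pairs $(\ell,r)$, and $\sum_{u+v=s}1/(u!\,v!)=2^s/s!$, so the shell $\max\{\ell,r\}=s$ contributes at most $8e^{a^2/2}a^s/s!$. Summing over $s\ge K$ gives $\Delta_K=\max_{[-1,1]^2}|H-H_K|\le8e^{a^2/2+a}a^K/K!<24(\pi/4)^K/K!$. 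Then $K!\ge(K/e)^K$ (valid for every $K\ge1$, so the $\max\{3,\cdot\}$ is not needed for it) together with $e\pi/4<5/2$ gives $\Delta_K<24\,(e\pi/(4K))^K\le140\,(5/(2K))^K\le\eps$ for the stated $K$.
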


\begin{remark}
\label{rem:rt18-changes}
Two features of \Cref{thm:rt18} matter for us.
\begin{itemize}
  \item \emph{Entrywise bound.}  The bound is entrywise, while a block encoding needs the operator norm.  Since $F_t$ carries the normalization $1/\sqrt N$, a kernel error $\Delta_K$ in every entry gives
  \begin{equation}
    \norm{F_t-F_t^{(K)}}\le\norm{F_t-F_t^{(K)}}_{\rm F}\le\sqrt N\,\Delta_K
    \label{eq:entrywise-to-operator}
  \end{equation}
  (Bound~1 in the proof of \Cref{thm:dyadic-nuqft}).  \Cref{thm:rt18} must therefore be applied with $\eps/\sqrt N$ in place of $\eps$, so $K$ depends on $N$ through $\log(\sqrt N/\eps)$.
  \item \emph{Incomplete tail bound.}  The proof in \cite{RT18} bounds the truncation error by $\sum_{p\ge K}\sum_{r\ge K}|a_{pr}|$, the sum over pairs in which \emph{both} indices are at least $K$.  The rectangular truncation also discards the pairs in which exactly one index is at least $K$, and these are not covered.  \Cref{lem:bivariate-chebyshev-bessel} bounds the full discarded sum, over $\max\{\ell,r\}\ge K$, and replaces the Lambert-$W$ rank \Cref{eq:rt18-rank} by an explicit $K=O(\log(N/\eps))$.  The gain is a complete bound with explicit constants, not a smaller rank: with $\eps/\sqrt N$ substituted, \Cref{eq:rt18-rank} is $O\bigl(\log(N/\eps)/\log\log(N/\eps)\bigr)$.
\end{itemize}
\end{remark}

\paragraph{Assemble the low-rank factorization.}
Truncating \Cref{eq:exact-bivariate-expansion} to $0\le\ell,r<K$ and combining with \Cref{eq:background-phase-split,eq:residual-kernel} gives
\begin{equation}
  e^{-2\pi ikt_j}
  \approx
  e^{-2\pi ik\sigma_j/N}\sum_{r=0}^{K-1}u_r(j)\,v_r(k),
  \qquad
  u_r(j)=e^{-i\pi y_j/2}\sum_{\ell=0}^{K-1}\alpha'_{\ell r}T_\ell(y_j),
  \quad
  v_r(k)=T_r(w_k).
  \label{eq:background-low-rank}
\end{equation}
In matrix form, with rows indexed by frequencies and columns by nodes as in \Cref{eq:type-II-mat}, so that $(F_t)_{kj}=\frac{1}{\sqrt N}e^{-2\pi ikt_j}$, let $D_{u_r}$ and $D_{v_r}$ be the diagonal matrices with entries $u_r(j)$ and $v_r(k)$, let $F$ be the uniform DFT, and let $M_\sigma$ be as in \Cref{eq:prelim-routing-matrix-data}, so that $FM_\sigma$ places column $\sigma_j$ of $F$ in column $j$.  Then
\begin{equation}
  F_t\approx\sum_{r=0}^{K-1}D_{v_r}FM_\sigma D_{u_r}.
  \label{eq:background-matrix-low-rank}
\end{equation}
Each term is a node scaling, a column selection, one uniform DFT and a frequency scaling.  Classically this costs $O(KN\log N)$ in total, and several nodes sharing a grid point cost nothing extra.  \Cref{sec:aky26-circuit} summarizes how \cite{AKY26} implements each of these factors using a quantum circuit.

\subsection{The NUQFT circuit and the parameter \texorpdfstring{$\kappa$}{kappa}}
\label{sec:aky26-circuit}
\cite{AKY26} turns the matrix factorization \Cref{eq:background-matrix-low-rank} into a block encoding.  Each term $D_{v_r}FM_\sigma D_{u_r}$ is a product of block encodings of a frequency diagonal, the QFT, the matrix $M_\sigma$, which sends node $j$ to grid point $\sigma_j$, and a node diagonal. These $K$ terms are combined by a linear combination of unitaries (LCU) \cite{CW12}.  The diagonal entries are built from Chebyshev values $T_n(y)=\cos(n\arccos y)$, obtained by computing $\arccos$ into a register and applying controlled rotations.  The resulting complexity is stated in \Cref{thm:aky26} below, after we explain where $\kappa$ comes from.

The normalization of such an LCU is governed by the coefficient magnitudes.  For a truncation rank $K$, write
\begin{equation}
  \lambda_r=\sum_{\ell=0}^{K-1}|\alpha'_{\ell r}|,
  \qquad
  \Lambda=\sum_{r=0}^{K-1}\lambda_r=\sum_{r,\ell<K}|\alpha'_{\ell r}|,
  \label{eq:lambda-r}
\end{equation}
so that $\lambda_r$ is the weight of the $r$th term and $\Lambda$ the total weight.  We now explain where $\kappa$ enters the analysis.

In the analysis of \cite{AKY26}, finite-precision access produces an $m$-bit approximation
$\widehat y_j$ satisfying 
\begin{equation}
  |y_j-\widehat y_j|\le N2^{-m+1}.
  \label{eq:source-residual-perturbation}
\end{equation}
Writing $\theta_j=\arccos(y_j)$ and letting $\widehat\theta_j$ be its
$p$-bit approximation, the mean-value theorem is applied to the
input perturbation of $\arccos$.  Thus, for some point $y_j^*$ between
$y_j$ and $\widehat y_j$,
\begin{align}
  |\theta_j-\widehat\theta_j|
  &\le
  |\arccos(y_j)-\arccos(\widehat y_j)|
  +|\arccos(\widehat y_j)-\widehat\theta_j| \notag\\
  &\le
  \frac{N2^{-m+1}}{\sqrt{1-(y_j^*)^2}}
  +O(2^{-p}).
  \label{eq:source-acos-input-perturbation}
\end{align}
In the second inequality, the first term is the mean-value theorem with $|\tfrac{d}{dy}\arccos y|=1/\sqrt{1-y^2}$, combined with \Cref{eq:source-residual-perturbation}. The second is the output error of the reversible $\arccos$ circuit, part~(i) of \Cref{thm:reversible-elementary-functions} \cite{BBG24}.  Both appear in the proof of \cite[Lemma~27]{AKY26}.

Consequently, a degree-$\ell$ Chebyshev polynomial of first kind incurs error
\begin{equation}
  |T_\ell(y_j)-T_\ell(\widehat y_j)|
  = | \cos(\ell \theta_j) - \cos(\ell \widehat \theta_j)|
  \leq^{(a)} \ell |\theta_j - \widehat \theta_j|
  \le
  \ell\left(
    O(2^{-p})+
    \frac{N2^{-m+1}}{\sqrt{1-(y_j^*)^2}}
  \right),
  \label{eq:source-chebyshev-kappa-error}
\end{equation}
where step $(a)$ follows from an application of the mean-value theorem to the cosine function. Their result therefore introduces
\begin{equation}
  \kappa
  :=\max_j\frac{1}{\sqrt{1-(y_j^*)^2}},
  \label{eq:kappa}
\end{equation}
and its sufficient precision and gate bounds contain logarithmic factors of
the form $\log(1+K\kappa)$.

The point $y_j^*$ in \Cref{eq:kappa} comes from the mean-value theorem in the proof of \cite[Lemma~27]{AKY26}.  It exists, but its position between $y_j$ and $\widehat y_j$ is not specified, so $\kappa$ formally depends on $m$ as well.  By \Cref{eq:source-residual-perturbation}, $\kappa$ can be bounded independently of $m$ whenever $\max_j|y_j|<1$ and $m$ is large enough.  \cite{AKY26} also report numerical experiments (their Table~1, $N=2^6$, $\eps=10^{-12}$) in which the minimal required precision $m^*$ stays essentially unchanged as $\kappa$ grows, which suggests that the dependence of $m$ on $\kappa$ is mild in practice.

With this parameter, the result of \cite{AKY26} reads as follows.
\begin{theorem}[Adapted from Theorem 24, Theorem 28 and Corollary 29, \cite{AKY26}]\label{thm:aky26}
    Let $\varepsilon > 0$ and $N = 2^n $ for some $n \in \N$. There exists a quantum algorithm that implements an $\varepsilon$-accurate block-encoding of the Type-II NUDFT with gate complexity $O(n^2 + L^2_{n,\varepsilon} + \log (1+ \kappa L_{n,\varepsilon}))$ and $O(L_{n,\varepsilon} + \log (1+ \kappa L_{n,\varepsilon}))$ qubits, where $L_{n,\varepsilon} := n + \log(1/ \varepsilon)$ and $\kappa$ is the geometry parameter \Cref{eq:kappa}.
\end{theorem}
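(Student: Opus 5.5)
The proof of \Cref{thm:aky26} assembles the factorization \Cref{eq:background-matrix-low-rank} into a block encoding term by term and then adds up three error sources: truncation, finite-precision node access, and finite-precision angle evaluation. First I will fix the truncation rank. Applying \Cref{thm:rt18} with accuracy $\eps/(3\sqrt N)$ and passing from the entrywise to the operator norm through \Cref{eq:entrywise-to-operator} gives $\norm{F_t-\sum_{r<K}D_{v_r}FM_\sigma D_{u_r}}\le\eps/3$ with $K=O(L_{n,\eps})$. Since the coefficients \Cref{eq:bessel-coeff} decay super-geometrically by \Cref{lm:bessel_bound}, the weights $\lambda_r$ and the total weight $\Lambda$ in \Cref{eq:lambda-r} are $O(1)$. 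This is what keeps the LCU normalization under control.

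Next, I will block-encode each factor.
\begin{itemize}
\item The DFT $F$ is the QFT, with $O(n^2)$ gates.
\item $M_\sigma$ is treated as a sparse matrix with column sparsity $1$ and row sparsity $d_\tau$. It is block-encoded from the node oracle (which computes $\sigma_j$ and $\widehat y_j$ from the $m$-bit node $t_j$) and the row-access oracle, which AKY26 assume.
\item For $D_{v_r}$, compute $w_k$ exactly from $k$, then $\arccos w_k$ to $p$ bits by \Cref{thm:reversible-elementary-functions}(i). A rotation controlled on $r$ and on the angle register then implements $\cos(r\theta)$ as an amplitude, and the angle register is uncomputed.
\item $D_{u_r}$ is the same construction with the additional phase $e^{-i\pi y/2}$ and an inner LCU over $\ell$ with coefficients $\alpha'_{\ell r}$.
\end{itemize}
The outer LCU over $r$ uses a PREP state with amplitudes proportional to $\sqrt{\lambda_r}$, together with a SEL that applies the $r$th product. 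Each part costs $\widetilde O(L^2)$ gates once the precisions $m$ and $p$ are $O(L)$.

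The main obstacle, and the source of $\kappa$, is the precision analysis of the Chebyshev diagonals. The perturbation \Cref{eq:source-residual-perturbation} of the offset propagates through $\arccos$ via the mean-value theorem, giving \Cref{eq:source-acos-input-perturbation}. Then \Cref{eq:source-chebyshev-kappa-error} yields the bound $|T_\ell(y_j)-\widetilde T_\ell|\le\ell\bigl(O(2^{-p})+\kappa N2^{-m+1}\bigr)$. Summing over $\ell,r<K$ with weights $|\alpha'_{\ell r}|$, the diagonal errors are bounded by $O(K\Lambda)(2^{-p}+\kappa N2^{-m})$. I will then require this to be at most $\eps/3$ in operator norm, which is legitimate because the diagonals are diagonal and hence their norm is the maximum entry error. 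This forces $p,m=O(L_{n,\eps}+\log(1+\kappa L_{n,\eps}))$, and that accounts for the $\log(1+\kappa L)$ terms in both the qubit and gate counts.

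Finally, the remaining $\eps/3$ is spent on the QFT and the routing block encoding, which are exact apart from the already counted precision. The triangle inequality over products of norm-$\le1$ factors then gives an $\eps$-accurate block encoding. The stated complexity follows by adding the QFT cost $O(n^2)$, the arithmetic cost $O(L^2)$, and the precision overhead $\log(1+\kappa L)$.
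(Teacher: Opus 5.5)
Your architecture matches the construction the paper recalls in \Cref{sec:background}. The paper gives no proof of its own: it quotes \cite{AKY26}, where $\kappa$ enters through the mean-value step of their Lemma~27. The gap is in your last step, the complexity count. You conclude that $m$ must carry about $\log(1+\kappa L_{n,\eps})$ bits beyond $L_{n,\eps}$. (You also put $\kappa$ into $p$, which is unnecessary: the $2^{-p}$ term carries no $\kappa$.) Your earlier sentence ``each part costs $\widetilde O(L^2)$ once $m,p=O(L)$'' then no longer applies. The offset $\widehat y_j$ has about $m-n$ fractional bits and is the input of the $\arccos$ circuit, whose cost in \Cref{thm:reversible-elementary-functions} is $\widetilde O((b+p)^2)$ in the input width $b$. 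So your argument actually gives $\widetilde O\bigl((L_{n,\eps}+\log(1+\kappa L_{n,\eps}))^2\bigr)$ gates. That contains $\log^2(1+\kappa L)$ and $L\log(1+\kappa L)$ terms, not the additive $\log(1+\kappa L)$ that you say ``follows by adding''. The qubit count $O(L+\log(1+\kappa L))$ does follow. The difference is not cosmetic, because $\log\kappa$ need not be $O(L)$: for the NUCT nodes the paper's best bound is $\log_2\kappa\le(n-1)N/4+O(1)$ (\Cref{app:kappa-nuct}). To get the stated form you would need the $\kappa$-dependent bits to pass only through linear-cost arithmetic (the shift and subtraction forming $\widehat y_j$), with the quadratic-cost $\arccos$ running at width $O(L)$. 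Truncating the $\arccos$ input to $O(L)$ bits does not achieve this, since it reintroduces exactly the input error that $\kappa$ amplifies. Otherwise you should state the weaker bound your argument proves.

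A second, fixable error: the factors are not all of norm at most one. $\norm{FM_\sigma}=\norm{M_\sigma}=\sqrt{d_\tau}$, which can be as large as $\sqrt N$, and $\norm{D_{u_r}}\le\lambda_r$. So diagonal errors are amplified by up to $\sqrt{d_\tau}$ in each assembled term (compare \Cref{eq:component-assembly-error}). Bounding the diagonal errors by $\eps/3$ therefore does not bound the error of $\sum_r\widetilde D_{v_r}FM_\sigma\widetilde D_{u_r}$ by $\eps/3$. About $\tfrac12\log_2 d_r\le n/2$ extra bits in $m$ and $p$ repair this without changing the asymptotics.

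Two smaller points. First, $\kappa$ in \Cref{eq:kappa} depends on $m$ through $y_j^*$, so choosing $m$ in terms of $\kappa$ is circular. Fix this by first passing to an $m$-independent bound such as $\kappa\le\sqrt{2/\delta}$ with $\delta=1-\max_j|y_j|$, valid once $N2^{-m+1}\le\delta/2$ (as in \Cref{eq:app-kappa-delta}). Second, the paper notes (\Cref{rem:rt18-changes}) that the proof of \Cref{thm:rt18} misses the pairs with exactly one index at least $K$, so take the truncation step from \Cref{lem:bivariate-chebyshev-bessel}(i) instead.

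Your $\Lambda=O(1)$ and $\sqrt{\lambda_r}$-weighted outer PREP are this paper's refinements (\Cref{lem:constant-bessel-mass,lem:assembled-factor-error}), not part of the \cite{AKY26} construction. They are correct, but they only affect the normalization, which the statement does not specify.
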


This dependence is undesirable in a general complexity theorem, even though
it is only logarithmic.  First, $\kappa$ measures the conditioning of the
auxiliary map $y\mapsto\arccos y$, rather than the spectral conditioning of
the NUDFT matrix itself.  Second, $\arccos'(y)=-(1-y^2)^{-1/2}$ is singular
at $y=\pm1$, so admissible nodes can make $\kappa$ arbitrarily large by
approaching a nearest-grid cell boundary.  A polylogarithmic bound in
$n=\log_2N$ therefore follows from the estimate only after one
controls the growth of $\kappa$.  Third, certifying such control is
instance-dependent: it amounts to lower-bounding the distance of every
scaled node $Nt_j$ from a half-integer.  

\paragraph{Issues addressed in this paper.}
\label{sec:background-issues}
Besides $\kappa$, we note one further point in \cite{AKY26}.  Both are resolved in \Cref{sec:CFNUQFT}.
\begin{enumerate}
  \item \emph{Nodes near $t=1$.} \cite[Section~3.1.2]{AKY26} forms the offset from the wrapped index, $2N(t_j-\sigma_j/N)$, instead of from $s_j^{\mathrm{unw}}$, so for nodes within $1/(2N)$ of $1$ this quantity is close to $2N$, outside the domain of $\arccos$.  We form it from the unwrapped index, as in \Cref{eq:dyadic-routing-data-a}, and use the wrapped index only to select the column of the DFT.
  \item \emph{Conditioning and normalization.} The only $\kappa$-dependent step is the mean-value bound \Cref{eq:source-acos-input-perturbation} on the node diagonal \cite[Lemma~27]{AKY26}.  \Cref{lem:diagonal-factor-implementation} replaces it by a bound that is uniform in the nodes, and the weighted outer LCU of \Cref{lem:assembled-factor-error} improves the block-encoding normalization from $O(K^2\sqrt{d_r})$ to $O(\sqrt{d_r})$, where $d_r$ bounds the number of nodes that round to the same grid point.
\end{enumerate}

\section{Conditioning-Free NUQFT}\label{sec:CFNUQFT}

We first summarize the strategy for removing the $\kappa$ dependence and then prove the modified lemmas in \Cref{sec:nuqft-preliminary-lemmas}, finally we state and prove the main result in \Cref{sec:exact-dyadic-nuqft}.

\paragraph{The conditioning-free strategy.}
The real nodes $t=(t_0,\ldots,t_{N-1})$ may be irrational.  The NUCT nodes
$t_j=\frac1{2\pi}\arccos(-1+2j/N)$ are an example.  A quantum circuit cannot
store such numbers exactly.  What it does store is an $m$-bit binary value
$\tau_j=a_j/2^m$ for each node.  Our key idea is to treat these stored values
as the \emph{actual} input of the transform, rather than as noisy copies of
the real nodes.  The NUQFT subroutine therefore implements the matrix
$F_\tau$ built from the stored nodes, while the real-node matrix $F_t$
remains the final goal.  This splits the total error into two independent
parts:
\begin{equation}
  \norm{\matU_{\mathrm{II}}-F_t}
  \le
  \norm{\matU_{\mathrm{II}}-F_\tau}
  +
  \norm{F_\tau-F_t}.
  \label{eq:conditioning-free-two-level-decomposition}
\end{equation}
Concretely, the construction keeps the circuit of \cite{AKY26} and changes
the analysis in three steps.
\begin{enumerate}[label=(\roman*),leftmargin=2.4em]
  \item \emph{Split the stored nodes exactly.}  Because $N=2^q$ is a power
  of two, the split \Cref{eq:dyadic-routing-data-a} of a stored node $\tau_j$
  into $s_j^{\mathrm{unw}}$, $\sigma_j$ and $z_j$ is exact bit arithmetic,
  so the phase splitting \Cref{eq:background-phase-split} is an exact 
  identity for $F_\tau$.

  \item \emph{Round only the output angle.}  The offset $z_j=(a_j-2^{m-q}s_j^{\mathrm{unw}})/2^{m-q-1}$ is
  a signed fixed-point number with $m-q-1$ fractional bits, computed
  exactly from $\tau_j$, so the circuit computes $\arccos$ of an
  exact input.  The reversible arithmetic of
  \Cref{thm:reversible-elementary-functions} returns $\widetilde\phi_j$ with
  $|\widetilde\phi_j-\arccos z_j|\le c_{\rm acos}2^{-p}$, and therefore, for
  $0\le\ell<K$,
  \begin{equation}
    |T_\ell(z_j)-\cos(\ell\widetilde\phi_j)|
    \le \ell c_{\rm acos}2^{-p}
    \le Kc_{\rm acos}2^{-p}.
    \label{eq:overview-output-angle-error}
  \end{equation}
  This is where the analysis differs from \cite{AKY26}.  There, the target
  matrix is $F_t$, whose Chebyshev values are $T_\ell(y_j)$ at the true
  offsets $y_j$, but the circuit evaluates $\arccos$ at a rounded offset
  $\widehat y_j$.  Comparing the two passes the input error
  $|y_j-\widehat y_j|$ through $\arccos$ where the slope
  $|\tfrac{d}{dy}\arccos y|=1/\sqrt{1-y^2}$ is unbounded at $y=\pm1$.
  This is the source of $\kappa$ in \Cref{eq:source-acos-input-perturbation}.
  Here, the target of the first part of
  \Cref{eq:conditioning-free-two-level-decomposition} is $F_\tau$, whose
  Chebyshev values are $T_\ell(z_j)$ at the stored offsets.  The circuit
  feeds exactly this $z_j$ into $\arccos$.  The target and the circuit
  therefore use the same input, and the only error is the rounding of the
  output angle.  \Cref{eq:overview-output-angle-error} bounds it by
  $Kc_{\rm acos}2^{-p}$ for every $z_j\in[-1,1]$, including $z_j$ near
  $\pm1$, so $\kappa$ does not appear.  The difference between $t_j$ and
  $\tau_j$ is not lost.  It is moved to the second part of
  \Cref{eq:conditioning-free-two-level-decomposition}, where step~(iii)
  bounds it through the Fourier phases instead of through $\arccos$.

  \item \emph{Count the node error separately.}  Moving a node by a small
  amount changes each Fourier phase by a proportional amount, which gives
  the second part of \Cref{eq:conditioning-free-two-level-decomposition}:
  \begin{equation}
    \|F_t-F_\tau\|
    \le
    2\pi\sqrt{\sum_{k=0}^{N-1}k^2}\,
    \|t-\tau\|_\infty
    \le
    \frac{2\pi}{\sqrt3}N^{3/2}\|t-\tau\|_\infty.
    \label{eq:overview-external-node-perturbation}
  \end{equation}
  This bound, proved within the proof of \Cref{thm:dyadic-nuqft}, uses only
  the Fourier phases and not $\arccos$.  Thus $m=\frac32q+O(\log(1/\eps))$
  bits per node suffice, with no condition on where the nodes sit.
\end{enumerate}
In short, the circuit is essentially that of \cite{AKY26} with a modified error analysis.

\paragraph{Overview of the section.}  \Cref{sec:nuqft-preliminary-lemmas} splits the
error into truncation, implementation and node terms. We prove four lemmas that bound the error of each factor: the coefficient bounds, the nearest-point matrix, the
$\kappa$-free diagonal bound, and the assembly of the $K$ terms. \Cref{sec:nuqft-oracles} lists the oracles the circuit needs and shows that the coefficient preparation does not depend on the nodes, so it can be built once and reused.  \Cref{sec:exact-dyadic-nuqft} states and proves the main result, \Cref{thm:dyadic-nuqft}: an $\eps$-accurate block encoding of $F_t$ with normalization $O(\sqrt{d_r})$ and $\widetilde O(m^2+L^2)$ gates. \Cref{subsec:uniform-versus-weighted-outer-lcu} compares the two ways of assembling the $K$ terms.

\subsection{Error analysis of block encoding factors}
\label{sec:nuqft-preliminary-lemmas}

Let $\tau_j$ be the stored nodes, split as in \Cref{eq:dyadic-routing-data-a} into $\sigma_j$ and $z_j$.  The circuit targets
\begin{equation}
  (F_\tau)_{kj}:=\frac{1}{\sqrt N}e^{-2\pi i k\tau_j}
  \label{eq:proof-computational-type-II}
\end{equation}
through the rank-$K$ factorization \Cref{eq:background-matrix-low-rank}, with $z_j$ in place of $y_j$:
\begin{equation}
  v_r(k):=T_r(w_k),
  \quad
  u_r(j):=e^{-i\pi z_j/2}\sum_{\ell=0}^{K-1}\alpha'_{\ell r}T_\ell(z_j),
  \quad
  F_\tau^{(K)}:=\sum_{r=0}^{K-1}D_{v_r}FM_\sigma D_{u_r},
  \label{eq:dyadic-direct-factorization}
\end{equation}
where $w_k=2k/N-1$, $D_{v_r}=\diag(v_r(0),\ldots,v_r(N-1))$ and $D_{u_r}=\diag(u_r(0),\ldots,u_r(N-1))$.  The circuit $\matU_{\mathrm{II}}$ of \Cref{thm:dyadic-nuqft} is built from the factorization $F_\tau^{(K)}$, but every Chebyshev value in it is computed from a finite-precision angle.  The matrix it encodes is therefore $\sum_r\widetilde D_{v_r}FM_\sigma\widetilde D_{u_r}$, where $\widetilde D_{v_r}$ and $\widetilde D_{u_r}$ are $D_{v_r}$ and $D_{u_r}$ with each $T_n(w)$ replaced by $\widetilde T_n(w)=\cos(n\widetilde\phi(w))$. This matrix differs slightly from $F_\tau^{(K)}$. By the triangle inequality, its error relative to the target $F_t$ splits into three terms,
\begin{equation}
  \norm{\matU_{\mathrm{II}}-F_t}
  \le
  \norm{F_\tau-F_\tau^{(K)}}
  +\norm{\matU_{\mathrm{II}}-F_\tau^{(K)}}
  +\norm{F_\tau-F_t},
  \label{eq:error-budget}
\end{equation}
which refines \Cref{eq:conditioning-free-two-level-decomposition}.  The first is the truncation error of the Chebyshev-Bessel expansion and depends only on $K$.  The second is the implementation error, which comes from computing $\arccos$ to $p$ bits.  The third is the node error from storing $t_j$ with $m$ bits.  It is bounded directly in the proof of \Cref{thm:dyadic-nuqft}, through the Fourier phases alone.  The normalization of $\matU_{\mathrm{II}}$ is the product of the normalizations of the factors in each term, combined by an LCU over $r$.

The four lemmas below supply these pieces.  \Cref{lem:bivariate-chebyshev-bessel} bounds the expansion coefficients: its tail bound controls the truncation term, and its bound on the total weight $\Lambda$ controls the normalization.  \Cref{lem:routing-factorization} gives the normalization $\sqrt{d_r}$ contributed by $M_\sigma$ and shows that the offsets $z_j$ are computed exactly.  \Cref{lem:diagonal-factor-implementation} uses this exactness to bound the error of the Chebyshev diagonals without $\kappa$.  \Cref{lem:assembled-factor-error} combines them into a bound on the implementation term and the final normalization.

\paragraph{Coefficients.}
The coefficients $\alpha'_{\ell r}$ of \Cref{eq:bessel-coeff,eq:endpoint-adjusted-coeff} enter twice: the tail of the expansion gives the truncation error, and the total weight $\Lambda$ of \Cref{eq:lambda-r} sets the normalization of the LCU.  Both follow from one estimate.  Part~(i) replaces the Lambert-$W$ truncation rank of \Cref{thm:rt18} with an explicit $K$ and bounds the full truncation error (\Cref{rem:rt18-changes}), and part~(ii) improves the bound $\Lambda=O(K)$ of \cite{AKY26} to a constant.

\begin{lemma}[Chebyshev-Bessel coefficients]
\label{lem:bivariate-chebyshev-bessel}\label{lem:constant-bessel-mass}
Let $a=\pi/4$ and $c_\alpha:=8e^{a^2/2+a}=8\exp(\pi^2/32+\pi/4)<24$, and let $H_K$ be the rectangular truncation of \Cref{thm:rt18}.
\begin{enumerate}[label=(\roman*)]
  \item \emph{Truncation.}
  \begin{equation}
    \Delta_K:=\max_{x,z\in[-1,1]}|H(x,z)-H_K(x,z)|
    \le c_\alpha\frac{a^K}{K!}.
    \label{eq:bivariate-tail-bound}
  \end{equation}
  In particular, for $N\ge2$ and $0<\delta\le1$, any $K\ge\lceil\log_2(24\sqrt N/\delta)\rceil$ gives $\sqrt N\,\Delta_K\le\delta$.
  \item \emph{Total weight.}  For every $K$,
  \begin{equation}
    \Lambda\le c_\alpha<24.
    \label{eq:constant-bessel-mass}
  \end{equation}
\end{enumerate}
\end{lemma}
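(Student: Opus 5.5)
The plan is to bound each coefficient $|\alpha'_{\ell r}|$ by a product of two Bessel values using \Cref{lm:bessel_bound}, then regroup the double sum according to $M:=\max\{\ell,r\}$, so that both the tail (i) and the total weight (ii) reduce to the exponential series $\sum_M a^M/M!$.

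\textbf{Step 1: coefficient bound.} For $\ell\equiv r\pmod 2$ set $p=(\ell+r)/2$ and $q=(r-\ell)/2\in\Z$. By \Cref{eq:bessel-symmetries}, $|J_n(-a)|=|J_{|n|}(a)|$. Then \Cref{lm:bessel_bound} and $\eta_\ell\eta_r\le1$ give
\[
|\alpha'_{\ell r}|\le 4e^{a^2/2}\frac{(a/2)^{p+|q|}}{p!\,|q|!}.
\]
Note that $p+|q|=\max\{\ell,r\}=M$. The map $(\ell,r)\mapsto(p,q)$ is injective, and the pairs with $\max\{\ell,r\}=M$ correspond to pairs $(p,q)$ with $p+|q|=M$ and $|q|\le p$. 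Summing over $q\ge0$ and over $q\le0$ separately, each half is bounded by $\sum_{j=0}^{M}\frac{1}{(M-j)!\,j!}=\frac{2^M}{M!}$. Hence
\[
S_M:=\sum_{\max\{\ell,r\}=M}|\alpha'_{\ell r}|\le 8e^{a^2/2}\frac{a^M}{M!}.
\]
I expect this bookkeeping to be the main obstacle. One must check that the parity constraint and the double counting at $q=0$ only help, so that the factor $8$ (the $4$ from the definition times $2$ from the two signs of $q$) is correct.

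\textbf{Step 2: truncation (i).} Since $|T_n|\le1$ on $[-1,1]$ and the series \Cref{eq:exact-bivariate-expansion} converges absolutely by Step~1, we have $\Delta_K\le\sum_{M\ge K}S_M$. Using $M!\ge K!\,(M-K)!$, this is at most
\[
8e^{a^2/2}\frac{a^K}{K!}\sum_{j\ge0}\frac{a^j}{j!}=c_\alpha\frac{a^K}{K!}.
\]
The numerical constant is $c_\alpha=8\exp(0.308+0.785)\approx23.9<24$.

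For the choice of $K$, use $K!\ge2^{K-1}$ to get $a^K/K!\le 2(a/2)^K=2^{-K}\cdot2a^K$. We have $K\ge\log_2 24>4$, so $2a^K\le2(\pi/4)^5<1$. Therefore $\sqrt N\,\Delta_K<24\sqrt N\,2^{-K}\le\delta$ whenever $K\ge\log_2(24\sqrt N/\delta)$.

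\textbf{Step 3: total weight (ii).} For every $K$,
\[
\Lambda\le\sum_{M\ge0}S_M\le 8e^{a^2/2}e^{a}=c_\alpha<24.
\]
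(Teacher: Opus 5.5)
Your proof is correct and follows the paper's argument essentially step for step. You bound $|\alpha'_{\ell r}|$ by $4|J_u(a)||J_v(a)|$, group the pairs by $\max\{\ell,r\}$ (your two signs of $q$ are the paper's ``at most two preimages''), sum with the binomial identity to get $8e^{a^2/2}a^M/M!$, and then take $M\ge K$ or $M\ge 0$. The only cosmetic difference is in the explicit choice of $K$: you use $K!\ge 2^{K-1}$ with $2(\pi/4)^5<1$, while the paper uses $K!\ge (K/e)^K$ with $K\ge 5>2ea$, and both give $\Delta_K<24\cdot 2^{-K}$.
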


\begin{proof}
Both parts bound sums of $|\alpha'_{\ell r}|$ over sets of index pairs.  The coefficients vanish unless $\ell\equiv r\pmod2$.  For the others, $\eta_n\le1$ and the symmetries \Cref{eq:bessel-symmetries} give
\begin{equation}
  |\alpha'_{\ell r}|\le4|J_u(a)|\,|J_v(a)|,
  \qquad
  u=\tfrac{\ell+r}{2},\quad v=\tfrac{|\ell-r|}{2},
  \label{eq:alpha-bessel-bound}
\end{equation}
with $u,v$ nonnegative integers and $u+v=\max\{\ell,r\}$.  The map $(\ell,r)\mapsto(u,v)$ has at most two preimages, $(u+v,u-v)$ and $(u-v,u+v)$.  By \Cref{lm:bessel_bound}, $|J_u(a)J_v(a)|\le e^{a^2/2}(a/2)^{u+v}/(u!\,v!)$, and summing over $u+v=s$ with $\sum_u\binom su=2^s$ gives $\sum_{u+v=s}|J_u(a)J_v(a)|\le e^{a^2/2}a^s/s!$.  Hence, for every $s_0\ge0$,
\begin{equation}
  \sum_{\max\{\ell,r\}\ge s_0}|\alpha'_{\ell r}|
  \le8e^{a^2/2}\sum_{s\ge s_0}\frac{a^s}{s!}.
  \label{eq:bessel-shell-sum}
\end{equation}

(ii) Taking $s_0=0$ gives $\Lambda\le8e^{a^2/2}e^a=c_\alpha$, and $c_\alpha\approx23.89$.

(i) Since $|T_n|\le1$ on $[-1,1]$ and the expansion converges absolutely, $|H-H_K|$ is at most the sum of $|\alpha'_{\ell r}|$ over the pairs outside the square $0\le\ell,r<K$, that is, over $\max\{\ell,r\}\ge K$.  By \Cref{eq:bessel-shell-sum} with $s_0=K$ and the Taylor remainder bound $\sum_{s\ge K}a^s/s!\le e^aa^K/K!$, this gives \Cref{eq:bivariate-tail-bound}.  For $K\ge5>2ea\approx4.27$, the estimate $K!\ge(K/e)^K$ gives $a^K/K!\le(ea/K)^K\le2^{-K}$, so $\Delta_K<24\cdot2^{-K}$.  For $N\ge2$ and $\delta\le1$ the stated $K$ is at least $\lceil\log_2(24\sqrt2)\rceil=6$, and then $\sqrt N\,\Delta_K<24\sqrt N\,2^{-K}\le\delta$.
\end{proof}

Part~(i) is used in Bound~1 of the proof of \Cref{thm:dyadic-nuqft}, and part~(ii) in the normalization and the precision choice of \Cref{lem:assembled-factor-error}.

\paragraph{The nearest-point matrix.}
The factor $M_\sigma$ is the only one with normalization depending on how the nodes are distributed, i.e. the normalization grows with the number of nodes that share a grid point.  The next lemma records this normalization, and the cost of computing the split \Cref{eq:dyadic-routing-data-a} which is exact for stored nodes.

\begin{lemma}[Nearest-point matrix and sparse normalization]
\label{lem:routing-factorization}
Let $\tau_j\in[0,1)$ be exact nodes, with $s_j^{\mathrm{unw}}$, $\sigma_j$, $z_j$, $M_\sigma$ and $d_\tau$ as in \Cref{eq:dyadic-routing-data-a,eq:prelim-routing-matrix-data}.  Then
\begin{equation}
  \norm{M_\sigma}=\sqrt{d_\tau}.
  \label{eq:routing-matrix-norm}
\end{equation}
Given sparse access to $M_\sigma$ with a known bound $d_r\ge d_\tau$ on its row sparsity, \cite[Lemma~48]{GSLW19} gives an exact block encoding of $M_\sigma$ with normalization $\sqrt{d_r}$, and its adjoint block-encodes $M_\sigma^{\mathsf T}$ with the same normalization.

If $N=2^q$ and $\tau_j=a_j/2^m$ is an $m$-bit fixed-point number, then $s_j^{\mathrm{unw}}$, $\sigma_j$, and $z_j$ are computed exactly with $O(m+q)$ qubits and $O(m+q)$ Toffoli and CNOT gates. In particular, $z_j$ carries no approximation error.
\end{lemma}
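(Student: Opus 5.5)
The plan has three parts, one for each claim of \Cref{lem:routing-factorization}: the norm identity, the block encoding, and the exact arithmetic.

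\emph{Norm identity.} I would compute $M_\sigma M_\sigma^{\dagger}$ directly. Since $M_\sigma=\sum_j\ket{\sigma_j}\!\bra{j}$, we get
\[
  M_\sigma M_\sigma^\dagger=\sum_{j,j'}\ket{\sigma_j}\braket{j|j'}\bra{\sigma_{j'}}=\sum_j\ket{\sigma_j}\!\bra{\sigma_j}=\sum_{\ell}d_\ell\ket{\ell}\!\bra{\ell}.
\]
This is diagonal with largest entry $d_\tau$, so $\norm{M_\sigma}^2=\norm{M_\sigma M_\sigma^\dagger}=d_\tau$. Equivalently, the columns of $M_\sigma$ are standard basis vectors, and row $\ell$ contains $d_\ell$ ones.

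\emph{Block encoding.} The matrix $M_\sigma$ is $1$-column-sparse and has row sparsity $d_\tau\le d_r$. Every nonzero entry equals $1$, so the entries have magnitude at most one. Given the sparse-access oracles (column $j\mapsto\sigma_j$, and row $\ell\mapsto$ the $d_r$ column indices, padded as in the standard convention), \cite[Lemma~48]{GSLW19} with row and column sparsities $d_r$ and $1$ gives normalization $\sqrt{d_r\cdot1}=\sqrt{d_r}$. The encoding is exact because the entries $0$ and $1$ need no approximate amplitude loading. For the second claim, note that if $U$ block-encodes $A$, then $U^\dagger$ block-encodes $A^\dagger$. Since $M_\sigma$ is real, $A^\dagger=M_\sigma^{\mathsf T}$.

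\emph{Exact arithmetic.} This part needs the most care, mainly for the rounding convention and the wrap-around, and it is where I would spend the effort. Write $a_j=2^{m-q}h+g$ with $0\le g<2^{m-q}$, assuming $m>q$ (otherwise $z_j=0$ trivially). Then $Na_j/2^m=h+g/2^{m-q}$. Rounding to the nearest integer is $s^{\mathrm{unw}}=h+b$, where $b$ is bit $m-q-1$ of $a_j$, i.e.\ the first discarded bit; I would fix ties to round up consistently. Computing this takes a single $q{+}1$-bit increment controlled on $b$, which uses $O(q)$ Toffolis with a clean ripple adder. Next, $\sigma_j=s^{\mathrm{unw}}\bmod 2^q$ is simply the low $q$ bits, with no gates. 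Finally, $z_j=(a_j-2^{m-q}s^{\mathrm{unw}})/2^{m-q-1}$ is an integer subtraction of a shifted register, giving a signed value with $m-q-1$ fractional bits in two's complement; this takes $O(m)$ gates. Its magnitude is at most $1$ by the rounding, and no step rounds, so $z_j$ is exact.

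Ancillas are cleaned by uncomputing the carries. The total cost is $O(m+q)$ qubits and Toffoli/CNOT gates.
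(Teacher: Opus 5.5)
Your proposal is correct and follows the paper's proof essentially step for step. You use the same computation $M_\sigma M_\sigma^\dagger=\sum_\ell d_\ell\ket\ell\!\bra\ell$, the same appeal to \cite[Lemma~48]{GSLW19} with row and column sparsities $d_r$ and $1$ (taking the adjoint circuit for $M_\sigma^{\mathsf T}$), and the same shift--round--subtract arithmetic for $s_j^{\mathrm{unw}}$, $\sigma_j$ and $z_j$. Your bit-level rounding, which adds the first discarded bit (round-half-up), simply makes explicit the paper's ``exact comparison with a fixed tie convention.''
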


\begin{proof}
For the norm, $M_\sigma M_\sigma^\dagger=\sum_j\ket{\sigma_j}\!\bra{\sigma_j}=\sum_\ell d_\ell\ket\ell\!\bra\ell$, so $\norm{M_\sigma}^2=\max_\ell d_\ell=d_\tau$. The matrix $M_\sigma$ has column sparsity one, row sparsity at most $d_r$, and entries in $\{0,1\}$, so the sparse-access block encoding \cite[Lemma~48]{GSLW19} has normalization $\sqrt{s_rs_c}=\sqrt{d_r\cdot1}$ and zero error.  Taking the adjoint circuit gives the statement for $M_\sigma^{\mathsf T}=M_\sigma^\dagger$, whose sparsities are swapped, leaving $s_rs_c$ unchanged.

For fixed-point nodes, multiplying by $N=2^q$ is a binary shift. The shifted number determines $s_j^{\mathrm{unw}}$ by an exact comparison (with a fixed tie convention), its low $q$ bits give $\sigma_j$, and an exact signed subtraction gives $z_j$. These are reversible ripple-carry operations of width $O(m+q)$, and reversing them uncomputes the work registers.
\end{proof}

The exactness of $z_j$ is what the next lemma relies on.

\paragraph{The Chebyshev diagonals.}
The diagonals contain Chebyshev values $T_n(w)=\cos(n\arccos w)$, at $w=w_k$ in $D_{v_r}$ and at $w=z_j$ in $D_{u_r}$.  The circuit computes an angle $\widetilde\phi(w)\approx\arccos w$ into a register and applies a controlled rotation, so it implements
\begin{equation}
  \widetilde T_n(w):=\cos\bigl(n\widetilde\phi(w)\bigr)
  \label{eq:approx-chebyshev-definition}
\end{equation}
and the diagonals
\[
  \widetilde D_{v_r}:=\diag\bigl(\widetilde T_r(w_k)\bigr)_{k},
  \qquad
  \widetilde D_{u_r}:=\diag\Bigl(e^{-i\pi z_j/2}\sum_{\ell=0}^{K-1}\alpha'_{\ell r}\widetilde T_\ell(z_j)\Bigr)_{j}.
\]
This is the step where \cite{AKY26} incurs $\kappa$. Here both inputs of $\arccos$ are exact: $w_k=(k-N/2)/2^{q-1}$ is a signed fixed-point number with $q-1$ fractional bits, and $z_j$ is computed exactly from $\tau_j$ by \Cref{lem:routing-factorization}.  The only error left is in the output angle.  By part~(i) of \Cref{thm:reversible-elementary-functions}, running the $\arccos$ circuit with $\lceil\log_2c_{\rm acos}\rceil$ additional bits makes this error at most $2^{-p}$.

\begin{lemma}[Frequency and node diagonal errors]
\label{lem:diagonal-factor-implementation}
Suppose that, on every exact input $w\in[-1,1]$, the angle circuit returns $\widetilde\phi(w)$ with
\begin{equation}
  \left|\widetilde\phi(w)-\arccos w\right|\le\delta_\phi.
  \label{eq:angle-output-error-lemma37}
\end{equation}
Then, for $0\le n,r<K$:
\begin{enumerate}[label=(\roman*)]
  \item $|\widetilde T_n(w)-T_n(w)|\le n\delta_\phi\le K\delta_\phi$.
  \label{item:cheb-from-angle}
  \item The frequency and node diagonals satisfy
  \begin{equation}
    \norm{\widetilde D_{v_r}-D_{v_r}}\le K\delta_\phi,
    \label{eq:frequency-diagonal-error}
  \end{equation}
  \begin{equation}
    \norm{\widetilde D_{u_r}-D_{u_r}}\le\lambda_rK\delta_\phi.
    \label{eq:diagonal-factor-errors}
  \end{equation}
  \item $\widetilde D_{v_r}$ and $\widetilde D_{u_r}$ have block encodings with normalizations $1$ and $\lambda_r$, respectively.
\end{enumerate}
\end{lemma}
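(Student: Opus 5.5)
The plan is to reduce everything to the scalar estimate in part~(i) and then lift it entrywise to the diagonal matrices.

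\emph{Part (i).} Since $T_n(w)=\cos(n\arccos w)$ and $\widetilde T_n(w)=\cos(n\widetilde\phi(w))$, we apply the mean-value theorem to $\theta\mapsto\cos(n\theta)$, whose derivative is bounded by $n$ in absolute value. This gives $|\widetilde T_n(w)-T_n(w)|\le n|\widetilde\phi(w)-\arccos w|\le n\delta_\phi$, and $n<K$ gives the second inequality. The point to stress is that the hypothesis \eqref{eq:angle-output-error-lemma37} is an \emph{output} error at an exact input, so no derivative of $\arccos$ appears. For $n=0$ both sides vanish, consistent with \Cref{rem:endpoint-bookkeeping}.

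\emph{Part (ii).} The difference of two diagonal matrices is diagonal, so its operator norm is the largest entry modulus. For $\widetilde D_{v_r}-D_{v_r}$ the entries are $\widetilde T_r(w_k)-T_r(w_k)$, where $w_k=(k-N/2)/2^{q-1}$ is an exact fixed-point input. Part~(i) then bounds each entry by $K\delta_\phi$. For the node diagonal, the $j$-th entry is $e^{-i\pi z_j/2}\sum_{\ell<K}\alpha'_{\ell r}\bigl(\widetilde T_\ell(z_j)-T_\ell(z_j)\bigr)$. The phase has modulus one, and $z_j$ is exact by \Cref{lem:routing-factorization}. The triangle inequality and part~(i) therefore bound the entry by $\sum_\ell|\alpha'_{\ell r}|K\delta_\phi=\lambda_rK\delta_\phi$, using \eqref{eq:lambda-r}.

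\emph{Part (iii).} This is the constructive step and I expect it to be the main obstacle, mostly in the bookkeeping for the node diagonal.

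For $\widetilde D_{v_r}$ the construction is direct. We compute $\widetilde\phi(w_k)$ into an ancilla register and apply a controlled $R_y(2r\widetilde\phi)$ on a flag qubit, so the $\ket0\bra0$ block carries $\cos(r\widetilde\phi(w_k))$. We then uncompute the angle register. Since $|\widetilde T_r|\le1$, the normalization is $1$.

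For $\widetilde D_{u_r}$, each $\widetilde T_\ell(z_j)$ is block-encoded in the same way with normalization $1$, after computing $z_j$ and $\widetilde\phi(z_j)$. We then combine the terms by an LCU over $\ell<K$. The state-preparation register carries amplitudes $\sqrt{|\alpha'_{\ell r}|/\lambda_r}$, and the phase of $\alpha'_{\ell r}$ is absorbed into a controlled phase gate. The LCU lemma of \cite{CW12}/\cite{GSLW19} then gives normalization $\sum_\ell|\alpha'_{\ell r}|=\lambda_r$. The controlled rotation by angle multiple $\ell$ is implemented by a select over $\ell$, using a multiplier $\ell\widetilde\phi$ or bitwise-controlled rotations.

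Finally, the unit-modulus factor $e^{-i\pi z_j/2}$ is an exact diagonal phase, applied by controlled phase rotations on the bits of $z_j$. It does not change the normalization. The last step is to check that the $z_j$ and angle registers are cleanly uncomputed, so the encoded block is exactly $\widetilde D_{u_r}/\lambda_r$.
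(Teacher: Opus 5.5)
Your proposal is correct and follows the paper's proof essentially step for step. The steps match: the Lipschitz bound for $\cos(n\,\cdot)$ at an exact input with no derivative of $\arccos$; entrywise lifting to the diagonal matrices, using the unit-modulus phase and the triangle inequality; and the construction that computes the angle, applies a controlled $R_y(2\ell\widetilde\phi)$, runs an LCU over $\ell$ with amplitudes $\sqrt{|\alpha'_{\ell r}|/\lambda_r}$ and phases in SELECT, applies bitwise phase gates for $e^{-i\pi z_j/2}$, and uncomputes. The only detail the paper adds is the check that $\lambda_r>0$ (because $\alpha'_{rr}\neq0$), so that $\PREP_r$ and the normalization $\lambda_r$ are well defined.
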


\begin{proof}
(i) Since $T_n(w)=\cos(n\arccos w)$ and the cosine is $1$-Lipschitz, $|\cos(n\widetilde\phi)-\cos(n\arccos w)|\le n|\widetilde\phi-\arccos w|\le n\delta_\phi$.\\

(ii) The matrices are diagonal, so their norms are the largest entry errors.  For $D_{v_r}$ this is (i) with $n=r$.  For $D_{u_r}$, the phase $e^{-i\pi z_j/2}$ has unit modulus, so by (i)
\[
  \Bigl|\sum_{\ell<K}\alpha'_{\ell r}\bigl(\widetilde T_\ell(z_j)-T_\ell(z_j)\bigr)\Bigr|
  \le\sum_{\ell<K}|\alpha'_{\ell r}|\,K\delta_\phi
  =\lambda_rK\delta_\phi.
\] \\

(iii) We use a data register $\ket{\cdot}$, a work register $\ket{\cdot}_{\rm w}$ that holds the offset and the angle, a coefficient register $\ket{\cdot}_{\rm c}$ of $\lceil\log_2K\rceil$ qubits, and one ancilla qubit $\ket{\cdot}_{\rm a}$.  With $R_y(\vartheta)=e^{-i\vartheta Y/2}$,
\begin{equation}
  R_y(2\vartheta)\ket0_{\rm a}=\cos\vartheta\,\ket0_{\rm a}+\sin\vartheta\,\ket1_{\rm a}.
  \label{eq:ry-action}
\end{equation}
For an integer $n$ and a fixed-point angle $\phi$, the product $n\phi$ is computed exactly, so a rotation $R_y(2n\phi)$ controlled on the angle register introduces no new error.

Compute $w_k$ from $k$ and its angle by part~(i) of \Cref{thm:reversible-elementary-functions}, rotate the ancilla by \Cref{eq:ry-action} with $\vartheta=r\widetilde\phi(w_k)$, and uncompute the work register:
\begin{align*}
  \ket k\ket0_{\rm w}\ket0_{\rm a}
  &\mapsto\ket k\ket{w_k,\widetilde\phi(w_k)}_{\rm w}\ket0_{\rm a}\\
  &\mapsto\ket k\ket{w_k,\widetilde\phi(w_k)}_{\rm w}\bigl(\widetilde T_r(w_k)\ket0_{\rm a}+\sin(r\widetilde\phi(w_k))\ket1_{\rm a}\bigr)\\
  &\mapsto\widetilde T_r(w_k)\ket k\ket0_{\rm w}\ket0_{\rm a}+\sin(r\widetilde\phi(w_k))\ket k\ket0_{\rm w}\ket1_{\rm a}.
\end{align*}
Projecting the ancilla onto $\ket0_{\rm a}$ leaves $\widetilde T_r(w_k)\ket k$, which is a block encoding of $\widetilde D_{v_r}$ with norm $1$.

Now for the block encoding of $\widetilde{D}_{u_r}$, write $\alpha'_{\ell r}=|\alpha'_{\ell r}|e^{i\varphi_{\ell r}}$ and use an LCU over $\ell$.  Let $\PREP_r$ be a unitary on the coefficient register with
\begin{equation}
  \PREP_r\ket0_{\rm c}=\frac{1}{\sqrt{\lambda_r}}\sum_{\ell=0}^{K-1}\sqrt{|\alpha'_{\ell r}|}\,\ket\ell_{\rm c}.
  \label{eq:lemma-prep-r}
\end{equation}
\Cref{sec:nuqft-oracles} lists $\PREP_r$ among the oracles the NUQFT uses, and \Cref{prop:coefficient-prep} gives a circuit for it with $O(K)$ gates.  Write $\widetilde\phi_j:=\widetilde\phi(z_j)$.  The circuit acts as follows.
\begin{enumerate}[label=(\alph*),leftmargin=2.4em]
  \item \emph{Offset and angle.}  From $\tau_j$, compute $z_j$ exactly (\Cref{lem:routing-factorization}) and then $\widetilde\phi_j$ (part~(i) of \Cref{thm:reversible-elementary-functions}):
  \[
    \ket j\ket0_{\rm w}\mapsto\ket j\ket{z_j,\widetilde\phi_j}_{\rm w}.
  \]
  \item \emph{Prepare.}  Apply $\PREP_r$ to the coefficient register, giving $\lambda_r^{-1/2}\sum_\ell\sqrt{|\alpha'_{\ell r}|}\ket j\ket{z_j,\widetilde\phi_j}_{\rm w}\ket\ell_{\rm c}\ket0_{\rm a}$.
  \item \emph{Select.}  Controlled on $\ket\ell_{\rm c}$ and on the angle in the work register, apply the phase $e^{i\varphi_{\ell r}}$ and the rotation $R_y(2\ell\widetilde\phi_j)$ to the ancilla.  By \Cref{eq:ry-action},
  \[
    \ket{z_j,\widetilde\phi_j}_{\rm w}\ket\ell_{\rm c}\ket0_{\rm a}
    \mapsto
    e^{i\varphi_{\ell r}}\ket{z_j,\widetilde\phi_j}_{\rm w}\ket\ell_{\rm c}
    \bigl(\widetilde T_\ell(z_j)\ket0_{\rm a}+\sin(\ell\widetilde\phi_j)\ket1_{\rm a}\bigr).
  \]
  \item \emph{Unprepare.}  Apply $\PREP_r^\dagger$.  Since the amplitudes in \Cref{eq:lemma-prep-r} are real and nonnegative, $\bra0_{\rm c}\PREP_r^\dagger\ket\ell_{\rm c}=\sqrt{|\alpha'_{\ell r}|/\lambda_r}$.  The component with $\ket0_{\rm c}\ket0_{\rm a}$ is therefore
  \[
    \sum_{\ell=0}^{K-1}\frac{|\alpha'_{\ell r}|}{\lambda_r}e^{i\varphi_{\ell r}}\widetilde T_\ell(z_j)\,\ket j\ket{z_j,\widetilde\phi_j}_{\rm w}
    =\frac{1}{\lambda_r}\sum_{\ell=0}^{K-1}\alpha'_{\ell r}\widetilde T_\ell(z_j)\,\ket j\ket{z_j,\widetilde\phi_j}_{\rm w}.
  \]
  \item \emph{Node phase.}  Since $z_j$ is a signed fixed-point number, $z_j=\sum_ic_ib_{j,i}$ with bits $b_{j,i}\in\{0,1\}$ and fixed signed powers of two $c_i$.  So $e^{-i\pi z_j/2}=\prod_ie^{-i\pi c_ib_{j,i}/2}$ is applied by one phase gate on each bit of $z_j$:
  \[
    \ket{z_j,\widetilde\phi_j}_{\rm w}\mapsto e^{-i\pi z_j/2}\ket{z_j,\widetilde\phi_j}_{\rm w}.
  \]
  \item \emph{Uncompute.}  Reverse step~(a), which returns the work register to $\ket0_{\rm w}$ in every branch, since it was used only as a control.
\end{enumerate}
Altogether, projecting the coefficient register and the ancilla onto $\ket0_{\rm c}\ket0_{\rm a}$ maps $\ket j\ket0_{\rm w}$ to
\[
  \frac{1}{\lambda_r}\,e^{-i\pi z_j/2}\sum_{\ell=0}^{K-1}\alpha'_{\ell r}\widetilde T_\ell(z_j)\,\ket j\ket0_{\rm w}
  =\frac{1}{\lambda_r}\bigl(\widetilde D_{u_r}\bigr)_{jj}\ket j\ket0_{\rm w},
\]
so the circuit is a block encoding of $\widetilde D_{u_r}$ with normalization $\lambda_r$.  Here $\lambda_r>0$ for every $r$, since $\alpha'_{rr}=4\eta_r^2i^rJ_r(-\pi/4)J_0(-\pi/4)\neq0$.
\end{proof}

\paragraph{Assembly.}
It remains to combine the factors.  Each term $A_r:=D_{v_r}FM_\sigma D_{u_r}$ is a product of four block encodings (the two diagonals, the QFT and $M_\sigma$), and the $K$ terms are added by an outer LCU with weights proportional to $\lambda_r$.

\begin{lemma}[Error and normalization of the assembled factors]
\label{lem:assembled-factor-error}
Let $\widetilde A_r:=\widetilde D_{v_r}FM_\sigma\widetilde D_{u_r}$, and suppose $K\delta_\phi\le1$.
\begin{enumerate}[label=(\roman*)]
  \item Each term satisfies
  \begin{equation}
    \norm{\widetilde A_r-A_r}\le3\sqrt{d_r}\,\lambda_rK\delta_\phi.
    \label{eq:component-assembly-error}
  \end{equation}
  \item With the outer state $\PREP_{\mathrm{out}}$ of \Cref{eq:paper-universal-prep}, the LCU block-encodes $\sum_r\widetilde A_r$ with normalization $\sqrt{d_r}\Lambda<24\sqrt{d_r}$, and
  \begin{equation}
    \Bigl\|\sum_{r=0}^{K-1}\widetilde A_r-\sum_{r=0}^{K-1}A_r\Bigr\|\le3\sqrt{d_r}\Lambda K\delta_\phi.
    \label{eq:weighted-assembly-error}
  \end{equation}
  \item For $0<\eps\le1$, the choice
  \begin{equation}
    \delta_\phi=2^{-p},
    \qquad
    p:=\left\lceil\log_2\!\left(\frac{144\sqrt{d_r}K}{\eps}\right)\right\rceil
    \label{eq:lemma38-explicit-precision-choice}
  \end{equation}
  satisfies $K\delta_\phi<1$ and makes the error in \Cref{eq:weighted-assembly-error} at most $\eps/2$.
  \label{item:precision-choice}
\end{enumerate}
\end{lemma}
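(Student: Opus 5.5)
For part~(i) I will write the difference as a telescoping sum over the two diagonal factors:
\[
  \widetilde A_r-A_r=(\widetilde D_{v_r}-D_{v_r})FM_\sigma\widetilde D_{u_r}+D_{v_r}FM_\sigma(\widetilde D_{u_r}-D_{u_r}).
\]
Then I bound each piece by submultiplicativity, using four facts.
\begin{itemize}
  \item $\norm F=1$.
  \item $\norm{M_\sigma}=\sqrt{d_\tau}\le\sqrt{d_r}$, by \Cref{lem:routing-factorization}.
  \item $\norm{D_{v_r}}\le1$, since $|T_r|\le1$.
  \item The diagonal errors from \Cref{lem:diagonal-factor-implementation}(ii).
\end{itemize}
For the first piece I also need $\norm{\widetilde D_{u_r}}$. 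Each entry of $D_{u_r}$ has modulus at most $\sum_\ell|\alpha'_{\ell r}|=\lambda_r$, so
\[
  \norm{\widetilde D_{u_r}}\le\lambda_r+\lambda_rK\delta_\phi\le2\lambda_r
\]
under $K\delta_\phi\le1$. The first piece is therefore at most $K\delta_\phi\cdot\sqrt{d_r}\cdot2\lambda_r$, and the second at most $\sqrt{d_r}\lambda_rK\delta_\phi$. Together they give $3\sqrt{d_r}\lambda_rK\delta_\phi$. (Alternatively, $|\widetilde T_\ell|\le1$ already gives $\norm{\widetilde D_{u_r}}\le\lambda_r$, which would sharpen the constant to $2$; I keep the stated $3$.)

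For part~(ii), the error bound follows at once from the triangle inequality and (i), since $\sum_r\lambda_r=\Lambda$. For the normalization, I build each $\widetilde A_r$ as a product of block encodings: $\widetilde D_{v_r}$ (normalization $1$), the QFT (exact, normalization $1$), $M_\sigma$ (normalization $\sqrt{d_r}$), and $\widetilde D_{u_r}$ (normalization $\lambda_r$). Products of block encodings multiply normalizations, so $U_r$ block-encodes $\widetilde A_r/(\sqrt{d_r}\lambda_r)$.

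The outer LCU is then applied with $\PREP_{\mathrm{out}}\ket0=\Lambda^{-1/2}\sum_r\sqrt{\lambda_r}\ket r$ and $\SEL=\sum_r\ket r\!\bra r\otimes U_r$. The standard LCU identity gives
\[
  (\bra0\PREP_{\mathrm{out}}^\dagger\otimes I)\,\SEL\,(\PREP_{\mathrm{out}}\ket0\otimes I)=\sum_r\frac{\lambda_r}{\Lambda}\cdot\frac{\widetilde A_r}{\sqrt{d_r}\lambda_r}=\frac{\sum_r\widetilde A_r}{\sqrt{d_r}\Lambda}.
\]
So the normalization is $\sqrt{d_r}\Lambda$, which is less than $24\sqrt{d_r}$ by \Cref{lem:bivariate-chebyshev-bessel}(ii).

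The main technical point is that every $U_r$ must use the same ancilla layout, so that the controlled $\SEL$ is well defined. I will pad the coefficient register to $\lceil\log_2K\rceil$ qubits for all $r$, and note that $\PREP_r$ can be controlled on $\ket r$. Because $\lambda_r>0$ for every $r$, no term degenerates.

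For part~(iii), the choice of $p$ gives $2^{-p}\le\eps/(144\sqrt{d_r}K)$, so $K\delta_\phi\le\eps/(144\sqrt{d_r})<1$. The error is then
\[
  3\sqrt{d_r}\Lambda K\delta_\phi\le3\cdot24\cdot\frac{\eps}{144}=\frac{\eps}{2},
\]
using $\Lambda<24$.
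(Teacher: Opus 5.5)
Your proof is correct and follows the paper's argument: the same two-term telescoping bound in (i), the same product-of-block-encodings plus weighted $\PREP_{\mathrm{out}}$ computation in (ii), and the same arithmetic with $\Lambda<24$ in (iii). The only difference is that your split in (i) keeps $\widetilde D_{u_r}$ where the paper keeps $\widetilde D_{v_r}$, so the factor $2$ from $1+K\delta_\phi\le2$ falls on the node diagonal rather than the frequency diagonal; the constant is still $3$, and your remark that $|\widetilde T_n|\le1$ sharpens it to $2$ applies equally to the paper's version.
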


\begin{proof}
(i) This is the triangle-inequality argument of \cite[Section~5.2]{AKY26}, with \Cref{lem:diagonal-factor-implementation}.  Since $F$ is unitary, $\norm{FM_\sigma}=\norm{M_\sigma}\le\sqrt{d_r}$ by \Cref{lem:routing-factorization}.  Then
\[
  \widetilde A_r-A_r=(\widetilde D_{v_r}-D_{v_r})FM_\sigma D_{u_r}+\widetilde D_{v_r}FM_\sigma(\widetilde D_{u_r}-D_{u_r}).
\]
Here $\norm{D_{u_r}}\le\lambda_r$, since $|u_r(j)|\le\sum_\ell|\alpha'_{\ell r}|$, and $\norm{\widetilde D_{v_r}}\le1+K\delta_\phi\le2$.  With \Cref{eq:frequency-diagonal-error,eq:diagonal-factor-errors}, the two terms are at most $K\delta_\phi\sqrt{d_r}\lambda_r$ and $2\sqrt{d_r}\lambda_rK\delta_\phi$.

(ii) The block encodings of $\widetilde D_{v_r}$, $F$, $M_\sigma$ and $\widetilde D_{u_r}$ have normalizations $1$, $1$, $\sqrt{d_r}$ and $\lambda_r$ (\Cref{lem:routing-factorization,lem:diagonal-factor-implementation}), so their product encodes $\widetilde A_r$ with normalization $\beta_r=\sqrt{d_r}\lambda_r$.  The outer state has amplitudes $\sqrt{\lambda_r/\Lambda}$, so the PREP-SELECT-PREP$^\dagger$ sandwich selects
\[
  \sum_{r=0}^{K-1}\frac{\lambda_r}{\Lambda}\cdot\frac{\widetilde A_r}{\sqrt{d_r}\lambda_r}
  =\frac{\sum_r\widetilde A_r}{\sqrt{d_r}\Lambda}.
\]
The normalization is therefore $\sqrt{d_r}\Lambda$, which is less than $24\sqrt{d_r}$ by \Cref{lem:constant-bessel-mass}.  Summing \Cref{eq:component-assembly-error} over $r$ gives \Cref{eq:weighted-assembly-error}.

(iii) The choice gives $K2^{-p}\le\eps/(144\sqrt{d_r})<1$, and with $\Lambda<24$, $3\sqrt{d_r}\Lambda K2^{-p}<72\sqrt{d_r}K2^{-p}\le\eps/2$.
\end{proof}

The weighted outer LCU of part~(ii) gives normalization $O(\sqrt{d_r})$.  A uniform outer LCU, close to the construction of \cite{AKY26}, would give $O(K\sqrt{d_r})$.  \Cref{subsec:uniform-versus-weighted-outer-lcu} compares the two.

\subsection{Oracle assumptions}
\label{sec:nuqft-oracles}

The circuit of \Cref{thm:dyadic-nuqft} is built from three oracles.  The first two describe the node set and must be constructed for each problem.  \Cref{sec:oracle-constructions} does this for the NUCT.  The third depends only on the truncation rank $K$ and can be built once and reused.
\begin{enumerate}[label=(\roman*)]
  \item \emph{Node oracle.}  It writes the $m$-bit stored node $\tau_j$:
  \begin{equation}\label{eq:node_oracle}
    O_\tau\ket{j}\ketzero m =\ket{j}\ket{\tau_j}.
  \end{equation}
  \item \emph{Row-access oracle.}  It gives the positions of the nonzero entries in each row of the matrix $M_\sigma$ of \Cref{lem:routing-factorization}.  Their values are all $1$.  If row $\ell$ contains the ordered column indices $k_{\ell,0}<\cdots<k_{\ell,d_\ell-1}$ and $d_r\le N$ is a declared row-sparsity bound, define
  \begin{equation}
    \rho_{\ell,r}=
    \begin{cases}
      k_{\ell,r},&0\le r<d_\ell,\\
      N+r,&d_\ell\le r<d_r,
    \end{cases}
  \end{equation}
  where the padding values $N+r$ mark empty slots.  The oracle is
  \begin{equation}\label{eq:row_oracle}
    O_r\ket\ell\ket r=\ket\ell\ket{\rho_{\ell,r}}.
  \end{equation}
  \item \emph{Coefficient preparation.}  For each $r$, the inner preparation $\PREP_r$ and the outer preparation $\PREP_{\mathrm{out}}$ act as
  \begin{equation}
  \PREP_r\ketzero{\lceil\log_2K\rceil}
  =\frac{1}{\sqrt{\lambda_r}}
  \sum_{\ell=0}^{K-1}\sqrt{|\alpha'_{\ell r}|}\ket\ell,
  \qquad
  \PREP_{\mathrm{out}}\ketzero{\lceil\log_2K\rceil}
  =\frac{1}{\sqrt{\Lambda}}
  \sum_{r=0}^{K-1}\sqrt{\lambda_r}\ket r.
    \label{eq:paper-universal-prep}
  \end{equation}
  with $\lambda_r$ and $\Lambda$ as in \Cref{eq:lambda-r}.  The phases of $\alpha'_{\ell r}$ are applied in the SELECT step.
\end{enumerate}
\cite{AKY26} treats the coefficient preparation as an oracle (Assumption~18).  The next lemma shows that it does not depend on the transform instance, which is why we call these preparations \emph{universal}.

\begin{lemma}[Coefficient PREP are node-independent]
\label{lem:universal-coefficient-prep}
Fix the truncation rank $K$ and the window convention $\gamma=1/2$.  The
coefficients $\alpha'_{\ell r}$ of \Cref{eq:endpoint-adjusted-coeff}, the
branch weights $\lambda_r$ and total mass $\Lambda$ of
\Cref{eq:lambda-r}, and therefore the preparations \Cref{eq:paper-universal-prep} do not depend on the nodes $t$ or $\tau$, the nearest points $\sigma$, the
row multiplicities, the geometry parameter $\kappa$, or the input data $f$.
The same preparations serve the transposed construction for $F_t^{\mathsf T}$.
\end{lemma}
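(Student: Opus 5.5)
The plan is to trace the dependencies of every quantity that enters the preparations \Cref{eq:paper-universal-prep}, and to check that each is a function of $K$ and the fixed constant $a=\pi/4$ only. First, by \Cref{eq:bessel-coeff}, $\alpha_{\ell r}$ is built from $i^r$, the parity condition $\ell\equiv r\pmod 2$, and the values $J_{(\ell+r)/2}(-\pi/4)$ and $J_{(r-\ell)/2}(-\pi/4)$. These are fixed real numbers once the window $\gamma=1/2$ is fixed. The weights $\eta_\ell,\eta_r$ of \Cref{eq:endpoint-adjusted-coeff} depend only on the indices. Hence $\alpha'_{\ell r}$ is a function of $(\ell,r)$ alone.

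Second, $\lambda_r$ and $\Lambda$ in \Cref{eq:lambda-r} are finite sums of $|\alpha'_{\ell r}|$ over $\ell,r<K$, so they depend only on $K$. Moreover $\lambda_r>0$, as noted at the end of the proof of \Cref{lem:diagonal-factor-implementation}, so the normalizations in \Cref{eq:paper-universal-prep} are well defined. The target amplitude vectors $\bigl(\sqrt{|\alpha'_{\ell r}|/\lambda_r}\bigr)_\ell$ and $\bigl(\sqrt{\lambda_r/\Lambda}\bigr)_r$ are therefore fixed unit vectors in $\R^K$. Any state-preparation unitary for them, for instance the one of \Cref{prop:coefficient-prep}, is determined by $K$ alone. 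The nodes $t,\tau$, the indices $\sigma_j$, the multiplicities $d_\ell$, $\kappa$ and $f$ never appear. All node dependence of the circuit sits in the SELECT step (through $z_j$ and $\widetilde\phi_j$), in $M_\sigma$, and in the node phase $e^{-i\pi z_j/2}$. The phases $e^{i\varphi_{\ell r}}$ are applied in SELECT, but they also depend only on $(\ell,r)$.

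Third, for the transposed construction I will write $F_t^{\mathsf T}\approx\sum_r D_{u_r}M_\sigma^{\mathsf T}F^{\mathsf T}D_{v_r}$. Since $F^{\mathsf T}=F$ and the diagonals are symmetric, the same scalar decomposition \Cref{eq:background-low-rank} is used with the same coefficients $\alpha'_{\ell r}$ and branch weights $\lambda_r$. Only the order of the factors changes, and $M_\sigma$ is replaced by its adjoint as in \Cref{lem:routing-factorization}. The same $\PREP_r$ and $\PREP_{\mathrm{out}}$ therefore serve.

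There is no real obstacle. The only point needing care is to state explicitly that the window convention $\gamma=1/2$ fixes the Bessel argument $-\pi/4$. A different window would change the coefficients, which is why the lemma fixes $\gamma$. The argument is otherwise bookkeeping.
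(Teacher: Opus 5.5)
Your proposal is correct and follows essentially the same route as the paper. You observe that $\alpha'_{\ell r}$ depends only on $(\ell,r)$ and the Bessel argument $-\pi/4$ fixed by $\gamma=1/2$, so $\lambda_r$, $\Lambda$ and the prepared amplitudes are functions of these numbers; the node data enter only through $D_{u_r}$ (via $z_j$) and $M_\sigma$; and transposition leaves the coefficient table unchanged. Your added check that $\lambda_r>0$, so that the states are well defined, is a small point the paper's proof leaves implicit.
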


\begin{proof}
By \Cref{eq:bessel-coeff,eq:endpoint-adjusted-coeff}, $\alpha'_{\ell r}$ is
a function of $(\ell,r)$ and the fixed Bessel argument $-\pi/4$ determined by
$\gamma=1/2$ alone.  The weights $\lambda_r$ and $\Lambda$ are finite
sums of $|\alpha'_{\ell r}|$, and the amplitudes in
\Cref{eq:paper-universal-prep} are functions of these numbers.  The phases
$\arg(\alpha'_{\ell r})$ are carried by the select oracle and are likewise
instance independent.  In the factorization
\Cref{eq:dyadic-direct-factorization}, the node data enter only through
$T_\ell(z_j)$ in $D_{u_r}$ and through $M_\sigma$, and the frequency
coordinate enters only through $T_r(w_k)$ in $D_{v_r}$.  Transposition does
not change the coefficient table.
\end{proof}

\begin{remark}[Reusing the coefficient preparations]
\label{rem:universal-coefficient-prep}
We write $\PREP_{\mathrm{LCU}}$ for the inner preparations $\{\PREP_r\}_{r<K}$ together with the outer preparation $\PREP_{\mathrm{out}}$.  Their circuits are given in \Cref{prop:coefficient-prep,prop:outer-lcu-compilation}.  They correspond to Equation~(64) and Assumption~18 of \cite{AKY26}.  The symbol $\PREP_f$ is reserved for loading the input data.

By \Cref{lem:universal-coefficient-prep}, all rotation angles and phases of $\PREP_{\mathrm{LCU}}$ can be computed classically in advance.  The same circuit can therefore be reused in every NUQFT or NUCT call with the same $K$ and angle precision $b_\alpha$ (\Cref{prop:coefficient-prep}).  Each call still runs $\PREP_r$ and $\PREP_r^\dagger$, so reuse saves the classical precomputation, not the gate cost.
\end{remark}

\subsection{Main theorem}
\label{sec:exact-dyadic-nuqft}

We now make the conditioning-free strategy of \Cref{sec:CFNUQFT} precise: the stored fixed-point nodes are treated as the exact nodes of the transform, and $\sigma_j$ and $z_j$ are computed exactly from them.

\begin{theorem}[Conditioning-free Type-II NUQFT]
\label{thm:dyadic-nuqft}
Let $N=2^q$ and let $t=(t_0,\ldots,t_{N-1})\in[0,1)^N$ define the
normalized Type-II NUDFT
\begin{equation}
  (F_t)_{kj}=\frac{1}{\sqrt N}e^{-2\pi i kt_j},
  \qquad 0\le j,k<N,
  \label{eq:proof-target-type-II}
\end{equation}
with rows indexed by the frequencies $k$ and columns by the nodes $t_j$.
This is \Cref{eq:type-II-mat} with the indices renamed.
Suppose that a node circuit supplies $m$-bit fixed-points $\tau=(\tau_0,\ldots,\tau_{N-1})$ and that the nearest-grid
matrix $M_\sigma$ determined by $\tau$ has ordered sparse row access with a
declared row-sparsity bound $d_r$.  

For every $0<\eps\le1$, there is a quantum circuit $\matU_{\mathrm{II}}$ that implements the block encoding of $F_t$ with normalization
$O(\sqrt{d_r})$ and total operator-norm error
\begin{equation}
  \| \matU_{\mathrm{II}} - F_t \|
  \le
  \eps
  +\frac{2\pi}{\sqrt3}N^{3/2}\norm{t-\tau}_\infty.
  \label{eq:conditioning-free-main-total-error}
\end{equation}
Given the oracles of \Cref{sec:nuqft-oracles}, the quantum circuit uses
\begin{equation}
  O(m+L)\ \text{qubits},
  \qquad
  \widetilde O(m^2+L^2)\ \text{logical gates},
  \label{eq:conditioning-free-main-complexity}
\end{equation}
where $L:=q+\log(1/\eps)$.
\end{theorem}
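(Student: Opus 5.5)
The proof follows the three-term budget \Cref{eq:error-budget}. We construct $\matU_{\mathrm{II}}$ as the weighted outer LCU of \Cref{lem:assembled-factor-error} applied to the factorization $F_\tau^{(K)}$ of \Cref{eq:dyadic-direct-factorization}. Each summand $\widetilde A_r=\widetilde D_{v_r}FM_\sigma\widetilde D_{u_r}$ is built from four block encodings:
\begin{itemize}
  \item the diagonal circuits of \Cref{lem:diagonal-factor-implementation}(iii), which are fed the exact split computed by \Cref{lem:routing-factorization} from the output of $O_\tau$;
  \item the QFT;
  \item the sparse-access encoding of $M_\sigma$ built from $O_r$.
\end{itemize}
By \Cref{lem:assembled-factor-error}(ii) the normalization is $\sqrt{d_r}\Lambda<24\sqrt{d_r}=O(\sqrt{d_r})$.

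\emph{Bound~1 (truncation).} First, $F_\tau=\sum_r D_{v_r}FM_\sigma D_{u_r}$ holds entrywise with the untruncated kernel. This follows from the exact phase split \Cref{eq:background-phase-split}, which applies because $\sigma_j,z_j$ are exact for $\tau$, combined with \Cref{eq:residual-kernel,eq:exact-bivariate-expansion}. Hence every entry of $F_\tau-F_\tau^{(K)}$ is $N^{-1/2}e^{-2\pi ik\sigma_j/N}e^{-i\pi z_j/2}(H-H_K)(w_k,z_j)$, whose modulus is at most $N^{-1/2}\Delta_K$. The Frobenius bound \Cref{eq:entrywise-to-operator} then gives $\norm{F_\tau-F_\tau^{(K)}}\le\sqrt N\Delta_K$. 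Choosing $K=\lceil\log_2(48\sqrt N/\eps)\rceil=O(L)$ in \Cref{lem:bivariate-chebyshev-bessel}(i) makes this at most $\eps/2$.

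\emph{Bound~2 (implementation).} With this $K$, take $p$ as in \Cref{lem:assembled-factor-error}(iii), so that $\norm{\matU_{\mathrm{II}}-F_\tau^{(K)}}\le\eps/2$. Here $p=O(\log K+\log d_r+\log(1/\eps))=O(L)$, since $d_r\le N$.

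\emph{Bound~3 (node error).} Set $E=F_t-F_\tau$. Its entries satisfy $|E_{kj}|=N^{-1/2}|e^{-2\pi ikt_j}-e^{-2\pi ik\tau_j}|\le N^{-1/2}2\pi k\norm{t-\tau}_\infty$. The Frobenius norm is therefore at most $2\pi\norm{t-\tau}_\infty\bigl(\sum_k k^2\bigr)^{1/2}$, using the fact that the sum over $j$ of $N^{-1}$ cancels the factor $N$. Since $\sum_{k<N}k^2\le N^3/3$, we get $\norm{E}\le\frac{2\pi}{\sqrt3}N^{3/2}\norm{t-\tau}_\infty$, which is \Cref{eq:overview-external-node-perturbation}. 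Adding the three bounds gives \Cref{eq:conditioning-free-main-total-error}.

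\emph{Resources.} There are two registers of size $q$ (data and grid), a node register of $m$ bits, an offset and angle work register of $O(m+p)$ bits, two coefficient registers of $\lceil\log_2K\rceil$ qubits, and $O(1)$ ancillas for the rotations, the sparse-access encoding and the outer LCU. This totals $O(m+L)$ qubits. For the gates, each $\arccos$ evaluation costs $\widetilde O((m+p)^2)$ by \Cref{thm:reversible-elementary-functions}, with the input $z_j$ having $m$ bits and $w_k$ having $q$ bits. The split costs $O(m+q)$, the QFT $O(q^2)$, and the sparse block encoding a constant number of oracle calls plus $O(q)$ arithmetic. The controlled rotations by $\ell\widetilde\phi$ cost $\widetilde O(p\log K)$ via a multiplication followed by bitwise controlled phases. $\PREP_r$ and $\PREP_{\mathrm{out}}$ cost $O(K)$ each.

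\emph{Main obstacle.} The outer LCU selects among $K$ terms, and a naive controlled version of every factor would multiply the cost by $K$, giving $\widetilde O(K(m^2+L^2))$. The plan is to avoid this by making $r$ a quantum register that the diagonal circuits read. The rotation angle $r\widetilde\phi(w_k)$ and the controls on $\ell,r$ in SELECT use the register values, and $\PREP_r$ becomes a single $r$-controlled preparation on $O(K^2)=O(L^2)$ gates. The $\arccos$ circuits, $F$ and $M_\sigma$ are then shared across all branches. This yields $\widetilde O(m^2+L^2)$ and completes the proof.
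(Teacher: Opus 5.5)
Your proposal is correct and follows the paper's proof essentially step for step. It uses the same three-term budget, with $K=\lceil\log_2(48\sqrt N/\eps)\rceil$ for the truncation term and the precision $p$ of \Cref{lem:assembled-factor-error}(iii) for the implementation term (each $\eps/2$), the same Frobenius-norm bound on the node term, and the same resource count, in which the degree register is held coherently and the coefficient preparations are multiplexed at $O(K^2)$ cost; your remark about sharing the $\arccos$ circuits, $F$ and $M_\sigma$ across the $K$ branches simply makes explicit what the paper's resource analysis does implicitly.
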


\begin{proof} 
The proof has two parts.  The error analysis bounds the three terms of \Cref{eq:error-budget}, with $F_\tau$ and $F_\tau^{(K)}$ as in \Cref{eq:proof-computational-type-II,eq:dyadic-direct-factorization}, one bound per term.  The resource analysis then counts qubits and gates.

\begin{itemize}

\item \emph{Bound 1: truncation error $\norm{F_\tau-F_\tau^{(K)}}$.}  By \Cref{eq:background-phase-split,eq:residual-kernel}, $e^{-2\pi ik\tau_j}=e^{-2\pi ik\sigma_j/N}e^{-i\pi z_j/2}H(w_k,z_j)$ exactly.  Since $(FM_\sigma)_{kj}=F_{k\sigma_j}=e^{-2\pi ik\sigma_j/N}/\sqrt N$, the $(k,j)$ entry of $F_\tau^{(K)}$ is $\sum_rv_r(k)F_{k\sigma_j}u_r(j)$, that is,
\begin{equation}
  (F_\tau^{(K)})_{kj}
  =\frac{e^{-2\pi i k\sigma_j/N}e^{-i\pi z_j/2}}{\sqrt N}
    H_K(w_k,z_j ),\quad \text{and hence}\quad |(F_\tau-F_\tau^{(K)})_{kj}|
  \le\frac{\Delta_K}{\sqrt N}.
\end{equation}
Consequently,
\begin{equation}
  \norm{F_\tau-F_\tau^{(K)}}
  \le\norm{F_\tau-F_\tau^{(K)}}_{\rm F}
  \le\sqrt N\,\Delta_K.
  \label{eq:nuqft-truncation-operator-bound}
\end{equation}
By part~(i) of
\Cref{lem:bivariate-chebyshev-bessel} with $\delta=\eps/2$, choosing $K=\lceil \log_2 \left( \frac{48\sqrt{N}}{\epsilon}\right) \rceil =  O\!\left(q+\log\frac1\eps\right)$
ensures that the first term of \Cref{eq:error-budget} satisfies
\[
\norm{F_\tau-F_\tau^{(K)}}
  \le\sqrt N\Delta_K\le\eps/2 .
\]

\item \emph{Bound 2: implementation error $\norm{\matU_{\mathrm{II}}-F_\tau^{(K)}}$.}  The QFT and the block encoding of $M_\sigma$ (\Cref{lem:routing-factorization}) are exact, so $\matU_{\mathrm{II}}$ encodes $\sum_r\widetilde A_r$ with $\widetilde A_r$ as in \Cref{lem:assembled-factor-error}, while $F_\tau^{(K)}=\sum_rA_r$.  With $p$ as in \Cref{eq:lemma38-explicit-precision-choice}, part~(iii) of that lemma gives
\begin{equation}
  \norm{\matU_{\mathrm{II}}-F_\tau^{(K)}}\le\frac{\eps}{2}.
  \label{eq:nuqft-implementation-bound}
\end{equation}
Here $p=O(\log K+\tfrac12\log d_r+\log(1/\eps))=O(q+\log(1/\eps))$, using $d_r\le N$.

Overall, combining \Cref{eq:nuqft-truncation-operator-bound} and
\Cref{eq:nuqft-implementation-bound} yields
\begin{equation}
  \norm{\matU_{\mathrm{II}}-F_\tau} \le \norm{\matU_{\mathrm{II} }- F_\tau^{(K)}} + \norm{F_\tau^{(K)} - F_\tau }
  \le\eps.
  \label{eq:proof-internal-fixed-point-error}
\end{equation}

\item \emph{Bound 3: node error $\norm{F_\tau-F_t}$.}  This is the error from using the stored nodes $\{\tau_j\}$ instead of the target nodes $\{t_j\}$, the third term of \Cref{eq:error-budget}.  
For every $j,k$, the inequality $|e^{iu}-e^{iv}|\le|u-v|$ gives
\begin{align*}
  |(F_t-F_\tau)_{kj}|
  =\frac1{\sqrt N} |e^{-2\pi ikt_j}-e^{-2\pi ik\tau_j}| 
  \le\frac{2\pi k}{\sqrt N}|t_j-\tau_j| 
  \le\frac{2\pi k}{\sqrt N}\norm{t-\tau}_\infty.
\end{align*}
Therefore,
\begin{align}
  \norm{F_t-F_\tau}
  &\le\norm{F_t-F_\tau}_{\rm F} \notag\\
  &\le
  \left(
    \sum_{j=0}^{N-1}\sum_{k=0}^{N-1}
    \frac{4\pi^2k^2}{N}\norm{t-\tau}_\infty^2
  \right)^{1/2} \notag\\
  &=2\pi\left(\sum_{k=0}^{N-1}k^2\right)^{1/2}
    \norm{t-\tau}_\infty \notag\\
  &\le\frac{2\pi}{\sqrt3}N^{3/2}\norm{t-\tau}_\infty.
  \label{eq:operator-node-surrogate-error}
\end{align}
Together with \Cref{eq:proof-internal-fixed-point-error}, this proves
\begin{equation}
  \norm{\matU_{\mathrm{II}}-F_t}
  \le
  \eps+\frac{2\pi}{\sqrt3}N^{3/2}\norm{t-\tau}_\infty,
\end{equation}
which completes the proof for \Cref{eq:conditioning-free-main-total-error}.
\end{itemize}

\paragraph{Resource analysis.}
Let $L:=q+\log(1/\eps)$.  The choices above give $K=O(L)$ and $p=O(L)$.
The node register has $m$ bits.  Computing $\sigma_j$ and $z_j$ exactly uses $O(m+q)$
temporary bits and $O(m+q)$ reversible arithmetic gates (\Cref{lem:routing-factorization}).  The node factor keeps the offset $z_j$, including its sign and the carry at the last grid cell, at width $w_{\rm ac}:=\max\{m+2,p\}+g_{\rm ac}$ with $g_{\rm ac}=O(\log(m+p+2))$, so $w_{\rm ac}=O(m+p)$.  Its inverse-trigonometric and phase arithmetic costs $\widetilde O(w_{\rm ac}^2)$ gates and $O(w_{\rm ac})$ work qubits by \Cref{thm:reversible-elementary-functions}.  The frequency factor acts on the exact $q$-bit input $w_k$ and costs $\widetilde O((q+p)^2)$.  Multiplying the angle by the coherent degree register $0\le n<K$ adds $\widetilde O((w_{\rm ac}+\log K)^2)$, of the same order since $p\ge\log_2K$, and the phase $e^{-i\pi z_j/2}$ uses $O(m)$ controlled phases.  The QFT contributes $O(q^2)$
controlled rotations, and the selector registers use $O(\log K+\log d_r)$ qubits.
Thus, before compiling the universal coefficient states, the circuit uses
\begin{equation}
  O(m+L)\ \text{qubits},
  \qquad
  \widetilde O(m^2+L^2)\ \text{logical gates}.
  \label{eq:dyadic-nuqft-resources}
\end{equation}

For fixed $r$, M\"ott\"onen synthesis \cite{MVBS05} (\Cref{thm:mottonen-state-preparation}) prepares the $K$-dimensional
coefficient state using $O(K)$ elementary rotations and CNOTs.  Coherently
multiplexing the $K$ such preparations costs $O(K^2)=O(L^2)$ gates, and the
weighted outer state costs $O(K)=O(L)$ gates.  These additions do not change
\Cref{eq:dyadic-nuqft-resources}.  The costs of the instance-specific node
and ordered row-access circuits remain separate, as stated in the theorem.
If $m=O(L)$, the displayed bounds reduce to $O(L)$ qubits and
$\widetilde O(L^2)$ logical gates.
\end{proof}

\paragraph{Note.} Because $F^{\mathsf T}=F$ and diagonal matrices are symmetric, the transposed factors $A_r^{\mathsf T}=D_{u_r}M_\sigma^{\mathsf T}FD_{v_r}$ block-encode $F_t^{\mathsf T}$, which is $F_{\mathrm{II}}/\sqrt N$ in the orientation of \cite{AKY26} (\Cref{rem:orientation}).  The spectral norm is invariant under transposition, and the block encoding of $M_\sigma^{\mathsf T}$ is the adjoint of that of $M_\sigma$ (\Cref{lem:routing-factorization}), so the same bounds, oracles and resources apply.

\section{Application to the Non-Uniform Chebyshev Transform}
\label{sec:nupct}

We now apply \Cref{thm:dyadic-nuqft} to the NUCT defined in \Cref{eq:nupct-def}.  The argument has three steps.  \Cref{sec:nupct-reduction} writes $C_N$ as the average of two NUDFTs whose nodes are the angles of the grid points.  \Cref{sec:oracle-constructions} builds the ingredients that \Cref{thm:dyadic-nuqft} takes as given: the node oracle, row access to $M_\sigma$, and the coefficient preparation.  \Cref{sec:end-to-end} puts them together.

Throughout, $N=2^q$ with $q\ge2$, and for $0\le k<N$,
\begin{equation}
  x_k=-1+\frac{2k}{N},
  \qquad
  \theta_k=\arccos x_k\in(0,\pi],
  \qquad
  t_k=\frac{\theta_k}{2\pi}\in\left(0,\tfrac12\right].
  \label{eq:angular-nodes}
\end{equation}
The nodes $t_k$ are irrational in general.  Only the endpoint $t_0=1/2$ is exact.  Since $x_k$ increases with $k$ and $\arccos$ is decreasing, $t_k$ decreases with $k$.

\subsection{Reduction to two NUDFTs}
\label{sec:nupct-reduction}

\begin{theorem}[NUCT as an average of two NUDFTs]
\label{thm:nupct-reduction}
Let $C_N$ be as in \Cref{eq:C-matrices}, let $F_t$ be the Type-II matrix \Cref{eq:type-II-mat} with the nodes \Cref{eq:angular-nodes}, and let $t_k^-=(-t_k)\bmod1$.  Then
\begin{equation}
  C_N=\frac{1}{2}\left(F_t+F_{t^-}\right).
  \label{eq:nupct-average}
\end{equation}
\end{theorem}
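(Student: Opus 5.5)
The identity is proved entrywise: fix $0\le j,k<N$ and compare the $(j,k)$ entries of both sides of \Cref{eq:nupct-average}. We use the orientation of \Cref{eq:type-II-mat}, with rows indexed by the degree/frequency $j$ and columns by the node $k$.

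\emph{Left-hand side.} By \Cref{eq:C-matrices} and \Cref{eq:cheb-angle},
\[
  (C_N)_{jk}=\tfrac{1}{\sqrt N}T_j(x_k)=\tfrac1{\sqrt N}\cos(j\theta_k).
\]
By \Cref{eq:angular-nodes}, $\theta_k=2\pi t_k$, so this equals $\frac{1}{\sqrt N}\cos(2\pi j t_k)=\frac{1}{2\sqrt N}\bigl(e^{-2\pi i j t_k}+e^{2\pi i j t_k}\bigr)$ by Euler's formula.

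\emph{First term on the right.} By definition, $(F_t)_{jk}=\frac1{\sqrt N}e^{-2\pi i j t_k}$.

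\emph{Second term on the right.} Since $t_k\in(0,\tfrac12]$, we have $-t_k\in[-\tfrac12,0)$, so $t_k^-=1-t_k\in[\tfrac12,1)$. In particular $t^-_k\in[0,1)$, as required for a node of a Type-II matrix. Because $j$ is an integer,
\[
  e^{-2\pi i j t_k^-}=e^{-2\pi i j}\,e^{2\pi i j t_k}=e^{2\pi i jt_k}.
\]
The reduction modulo $1$ is therefore invisible to integer frequencies. Hence $(F_{t^-})_{jk}=\frac{1}{\sqrt N}e^{2\pi i j t_k}$.

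Averaging the two right-hand entries reproduces the left-hand entry, which gives \Cref{eq:nupct-average}.

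The only point that needs care is the modular reduction defining $t^-$. It must keep the nodes in $[0,1)$, so that $F_{t^-}$ fits the framework of \Cref{thm:dyadic-nuqft}, while leaving the entries unchanged. Both facts follow from the integrality of the frequency $j$, as shown above. The endpoint $t_0=\tfrac12$ gives $t_0^-=\tfrac12$, which is consistent with the general case. No further obstacle is expected.
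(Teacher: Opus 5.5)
Your proposal is correct and follows essentially the same route as the paper's proof: the entrywise Euler decomposition $\cos(j\theta_k)=\frac12\bigl(e^{-ij\theta_k}+e^{ij\theta_k}\bigr)$, combined with $e^{-2\pi i j t_k^-}=e^{2\pi i j t_k}$ for integer $j$. Your extra check that $t_k^-=1-t_k\in[\tfrac12,1)$ stays a valid node is a harmless addition that the paper leaves implicit here.
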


\begin{proof}
By \Cref{eq:cheb-angle}, $T_j(x_k)=\cos(j\theta_k)=\frac12(e^{-ij\theta_k}+e^{ij\theta_k})$.  Since $\theta_k=2\pi t_k$ and $e^{-2\pi ijt_k^-}=e^{2\pi ijt_k}$ for integer $j$,
\begin{equation*}
  (C_N)_{jk}
  =\frac{1}{2\sqrt N}
   \left(e^{-2\pi ijt_k}+e^{-2\pi ijt_k^-}\right)
  =\frac12\left((F_t)_{jk}+(F_{t^-})_{jk}\right).
\end{equation*}
\end{proof}

This is the reduction in the proof of \cite[Corollary~2.2(2)]{DHR97}, with the nodes $e^{\pm i\theta_k}$ on the unit circle.  We apply the factor $\frac12$ as a two-term LCU (\Cref{alg:nupct}) rather than folding it into conjugate-symmetric data.

A circuit cannot store the irrational nodes $t_k$, so it works with $m$-bit approximations $\tau_k$ and implements
\begin{equation}
  C_\tau:=\frac12\left(F_{\tau}+F_{\tau^-}\right),
  \qquad
  \norm{t-\tau}_\infty=\max_k|\tau_k-t_k|.
  \label{eq:quantized-nupct-matrix}
\end{equation}
The difference between $C_\tau$ and $C_N$ is the node term of \Cref{thm:dyadic-nuqft}.  For each of the two NUDFTs, that theorem needs a node oracle \Cref{eq:node_oracle}, row access \Cref{eq:row_oracle} to $M_\sigma$ with a small row-sparsity bound $d_r$, and the coefficient preparations.  The next subsection constructs all of them.  The negative nodes $\tau^-$ come almost for free from $\tau$ (\Cref{lem:negative-branch}).

\subsection{Constructing the Oracles}
\label{sec:oracle-constructions}

This subsection builds three circuits.  The node oracle $O_\tau$ returns an $m$-bit approximation $\tau_k$ of $t_k$ (\Cref{prop:angle-oracle}).  The row oracle $O_r$ gives row access to $M_\sigma$ for both $\tau$ and $\tau^-$ (\Cref{prop:row-oracle,lem:negative-branch}).  The coefficient preparations $\PREP_{\mathrm{LCU}}$ prepare the Bessel-coefficient states (\Cref{prop:coefficient-prep,prop:outer-lcu-compilation}).  The input state is prepared by a given circuit $\PREP_f\ketzero q=\ket f$, as in \Cref{sec:problem-statement}.

\subsubsection{Node oracle \texorpdfstring{$O_\tau$}{O-tau}}
\label{sec:angle-oracle}

Since $x_k=(2k-N)/N$ is an exact fixed-point number, the node oracle evaluates $\arccos$ on an exact input and rounds only once, at the end.  Fix a constant number $g$ of guard bits, set the working precision $w=m+g$, and let $[1/(2\pi)]_w$ denote $1/(2\pi)$ rounded to $w$ bits.  The oracle computes
\begin{equation}
  \tau_k=
  \begin{cases}
    1/2,&k=0,\\
    \min\bigl\{\max\{\widetilde t_k,0\},1/2\bigr\},&k\ge1,
  \end{cases}
  \qquad
  \widetilde t_k:=\operatorname{round}_m\bigl([1/(2\pi)]_w\,\widetilde\theta_k\bigr),
  \label{eq:node-oracle-output}
\end{equation}
where $\widetilde\theta_k$ is the $w$-bit approximation of $\arccos x_k$ from part~(i) of \Cref{thm:reversible-elementary-functions} and $\operatorname{round}_m$ rounds to $m$ fractional bits.  \Cref{alg:angle-oracle} lists the steps of $O_\tau$ and the state after each one.

\begin{algorithm}[H]
\caption{Node oracle $O_\tau$}
\label{alg:angle-oracle}
\renewcommand{\algorithmicrequire}{\textbf{Given:}}
\begin{algorithmic}[1]
\Require $\ket k\ket0_{\rm f}\ket0_{\rm w}\ket0_{\rm o}$ with index $k$ ($q$ qubits), flag ($1$ qubit), work register, and output ($m$ qubits), with output precision $m\ge q+1$ and working precision $w=m+g$
\State \emph{Index and flag.}  Compute the signed integer $X_k=2k-N$ ($q+1$ qubits) by a shift and a subtraction, and $b_k=[k=0]$ by a comparison.
\State \emph{Angle.}  Compute $\widetilde\theta_k\approx\arccos(X_k/N)\in[0,\pi]$ ($w+2$ qubits: $2$ integer, $w$ fractional) by \Cref{thm:reversible-elementary-functions}, whose $O(q+w)$ scratch qubits are cleaned within this step.
\State \emph{Scale and round.}  Compute $P_k=[1/(2\pi)]_w\,\widetilde\theta_k\in[0,1)$ ($w$ qubits) by shifted additions, since the classical constant $[1/(2\pi)]_w$ needs no register.  Then compute $\widetilde t_k$ ($m$ qubits) from the top $m$ fractional bits of $P_k+2^{-m-1}$.
\State \emph{Clip.}  Compare with $0$ and $1/2$ to get $\check t_k=\min\{\max\{\widetilde t_k,0\},1/2\}$ ($m$ qubits).
\State \emph{Write.}  Controlled on $b_k=0$, copy $\check t_k$ to the output.  Controlled on $b_k=1$, apply a CNOT from the flag to the first fractional bit of the output, which writes $0.10\cdots0=1/2$.
\State \emph{Uncompute.}  Reverse steps 4, 3, 2 and 1, which only read the output register.  At most $2q+2w+3m+O(1)=O(m)$ qubits are held between steps.
\end{algorithmic}
\end{algorithm}

Every step applies the same gates for every $k$, and all ancillas return to $\ket0$ for every $k$.  By linearity, $O_\tau$ therefore acts on superpositions as
\begin{equation}
  O_\tau\sum_ka_k\ket k\ketzero m=\sum_ka_k\ket k\ket{\tau_k}.
  \label{eq:node-oracle-superposition}
\end{equation}

\begin{proposition}[Node oracle]
\label{prop:angle-oracle}
Let $m\ge q+1$.  There is a reversible circuit
$O_\tau\ket k\ketzero m=\ket k\ket{\tau_k}$, where $\tau_k\in[0,1/2]$ is an $m$-bit number with
\begin{equation}
  \abs{\tau_k-t_k}\le 2^{-m}.
  \label{eq:angle-error}
\end{equation}
It uses $O(m)$ qubits, $\widetilde O(m^2)$ gates, and depth $\widetilde O(m)$.
\end{proposition}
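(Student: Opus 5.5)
Correctness of Algorithm~\ref{alg:angle-oracle} will be established first, and then the accuracy \Cref{eq:angle-error} and the resources. For $k=0$ the output is exactly $1/2=t_0$, so the error is zero. For $k\ge1$, the input $x_k=X_k/N$ is an exact signed fixed-point number with $q$ fractional bits, so part~(i) of \Cref{thm:reversible-elementary-functions}, run with $\lceil\log_2 c_{\rm acos}\rceil$ extra bits (absorbed into the constant guard $g$), gives $|\widetilde\theta_k-\theta_k|\le 2^{-w}$. No input perturbation passes through $\arccos$, so no conditioning factor appears, exactly as in \Cref{lem:diagonal-factor-implementation}.

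The error budget then splits into three pieces, each to be checked to be at most a small multiple of $2^{-w}$. First, $|[1/(2\pi)]_w-1/(2\pi)|\le2^{-w-1}$ and $\widetilde\theta_k\le\pi+2^{-w}<4$, contributing at most $2\cdot2^{-w}$. Second, the angle error scaled by $[1/(2\pi)]_w<1/6$ contributes less than $2^{-w}$. Third, the truncation of the product $P_k$ to $w$ bits by shifted additions costs at most about $w\,2^{-w}$ if each shifted partial sum is truncated, or $2^{-w}$ if the full product is accumulated and then truncated once. I would accumulate with $O(\log w)$ extra internal bits so that the total is $|P_k-t_k|\le c\,2^{-w}$ with an absolute constant $c$. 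Choosing $g$ with $c\,2^{-g}\le 2^{-1}$ then gives $|P_k-t_k|\le 2^{-m-1}$.

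Rounding to $m$ bits via the top bits of $P_k+2^{-m-1}$ adds at most $2^{-m-1}$, so $|\widetilde t_k-t_k|\le 2^{-m}$. Clipping to $[0,1/2]$ is a projection onto an interval containing $t_k\in(0,1/2]$, so it cannot increase the distance, and the bound survives while $\tau_k\in[0,1/2]$ is guaranteed. The main obstacle is this constant bookkeeping in the scale-and-round step. The fixed-point multiplication by a classical constant must be specified so that its truncation error is a constant times $2^{-w}$ rather than growing with $w$, which is why I would carry $O(\log m)$ hidden extra bits if a constant $g$ does not suffice. That changes only lower-order resource terms.

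Resources are counted step by step. Steps 1, 4 and 5 are ripple-carry arithmetic, comparisons and controlled copies, costing $O(m)$ gates, which uses $q<m$. Step 2 costs $O(q+w)=O(m)$ qubits, $\widetilde O(m^2)$ gates and $\widetilde O(m)$ depth by \Cref{thm:reversible-elementary-functions}. Step 3 is $O(w)$ controlled shifted additions of width $O(w)$, giving $O(m^2)$ gates; with a carry-save/adder-tree arrangement its depth is $\widetilde O(m)$. Uncomputation at most doubles these counts. Reversibility and the superposition action \Cref{eq:node-oracle-superposition} follow because each step is a classical reversible map that is applied uniformly in $k$ and leaves its ancillas clean.
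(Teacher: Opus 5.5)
Your proposal is correct and follows the paper's proof essentially step for step: the $\arccos$ acts on the exact input $x_k$ so only its output error enters, the errors from the stored constant $[1/(2\pi)]_w$, the angle, and the product rounding are each $O(2^{-w})$ and are absorbed by the constant guard bits $g$, and then the final rounding, the exact endpoint $k=0$ and the clipping give $|\tau_k-t_k|\le2^{-m}$. Your extra care about truncation inside the shifted additions is a valid refinement of the paper's one-line ``rounding adds at most $2^{-w}$''. For the depth, though, a reversible adder tree keeps its $\Theta(m)$ partial sums on fresh ancillas, which is $\Theta(m^2)$ qubits, so use $O(m)$ sequential logarithmic-depth adders instead; these give $\widetilde O(m)$ depth while keeping $O(m)$ qubits and $O(m^2)$ gates.
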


\begin{proof}
The construction above realizes $O_\tau\ket k\ketzero m=\ket k\ket{\tau_k}$ with $\tau_k$ as in \Cref{eq:node-oracle-output}, and all its arithmetic is done at width $w=m+g$.  Before the final rounding there are three approximate stages.  The $\arccos$ evaluation has error at most $c_{\rm acos}2^{-w}$ by part~(i) of \Cref{thm:reversible-elementary-functions}, and multiplying by the stored constant $[1/(2\pi)]_w<1$ does not increase it.  The stored constant satisfies $|[1/(2\pi)]_w-1/(2\pi)|\le2^{-w}$, which contributes at most $\pi2^{-w}$ since $\theta_k\le\pi$.  Rounding the product to $w$ bits adds at most $2^{-w}$.  The total is at most $C2^{-w}$ with $C:=c_{\rm acos}+\pi+1$, so the constant $g:=\lceil\log_2(4C)\rceil$ keeps the error before the final rounding below $2^{-m-2}$.  The final rounding adds at most $2^{-m-1}$, and the sum is below $2^{-m}$.  The endpoint $k=0$ is exact, and clipping cannot increase the error.  The resources follow from \Cref{eq:reversible-elementary-resources} with $w=m+O(1)$, plus $O(q)$ gates for $X_k$.
\end{proof}

This is the only $\arccos$ evaluation which contributes to error in analysis and that error enters only through the node term $\norm{t-\tau}_\infty$ of \Cref{eq:conditioning-free-main-total-error}.  The two $\arccos$ evaluations inside the improved NUQFT, for the frequency and node factors, act on exact inputs (\Cref{lem:diagonal-factor-implementation}), so no $\kappa$ appears.

\begin{lemma}[Number of bits per node]
\label{lem:node-precision}
Let $0<\eps\le1$ and
\begin{equation}
  m=
  \left\lceil\frac32q+\log_2\frac{24\pi}{\eps}\right\rceil.
  \label{eq:unconditional-node-bits}
\end{equation}
Then the node oracle of \Cref{prop:angle-oracle} satisfies
\begin{equation}
  \frac{2\pi}{\sqrt3}N^{3/2}\norm{t-\tau}_\infty
  \le\frac{\eps}{12\sqrt3}<\frac{\eps}{3}
  \qquad\text{and}\qquad
  N\norm{t-\tau}_\infty<\frac1{2\pi}.
  \label{eq:explicit-node-budget}
\end{equation}
In particular $m=O(L)$ with $L=q+\log(1/\eps)$.
\end{lemma}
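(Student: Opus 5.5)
The plan is direct substitution: combine the per-node bound $\abs{\tau_k-t_k}\le2^{-m}$ of \Cref{prop:angle-oracle} with the chosen value of $m$. First we check that \Cref{prop:angle-oracle} applies, i.e.\ that $m\ge q+1$. Since $\eps\le1$, we have $\log_2(24\pi/\eps)\ge\log_2(24\pi)>6$, so $m\ge\frac32q+6\ge q+1$ for every $q\ge2$. Hence $\norm{t-\tau}_\infty\le2^{-m}$.

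For the first inequality we use $m\ge\frac32q+\log_2(24\pi/\eps)$. This gives
\[
2^{-m}\le N^{-3/2}\,\frac{\eps}{24\pi},
\]
and therefore
\[
\frac{2\pi}{\sqrt3}N^{3/2}\norm{t-\tau}_\infty\le\frac{2\pi}{\sqrt3}\cdot\frac{\eps}{24\pi}=\frac{\eps}{12\sqrt3}.
\]
Because $12\sqrt3>3$, this is strictly less than $\eps/3$.

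For the second inequality,
\[
N\norm{t-\tau}_\infty\le N\cdot2^{-m}\le N^{-1/2}\,\frac{\eps}{24\pi}\le\frac{1}{24\pi}<\frac{1}{2\pi}.
\]
The claim $m=O(L)$ follows immediately from the formula $m=\frac32q+\log_2(1/\eps)+O(1)$.

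No step is a real obstacle. The only care needed is to confirm the hypothesis $m\ge q+1$ of \Cref{prop:angle-oracle}, so that its bound $2^{-m}$ can be used, and to handle the ceiling correctly (it only increases $m$, which only helps).
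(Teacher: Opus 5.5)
Your proposal is correct and follows the paper's proof: substitute $\norm{t-\tau}_\infty\le2^{-m}\le2^{-3q/2}\eps/(24\pi)$ and multiply by $(2\pi/\sqrt3)N^{3/2}$ and by $N$. Your explicit check that $m\ge\frac32q+6\ge q+1$ is slightly more careful than the paper, which only asserts that \Cref{prop:angle-oracle} applies.
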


\begin{proof}
By \Cref{eq:angle-error}, $\norm{t-\tau}_\infty\le2^{-m}\le2^{-3q/2}\eps/(24\pi)$.  Multiplying by $(2\pi/\sqrt3)N^{3/2}=(2\pi/\sqrt3)2^{3q/2}$ gives the first bound, and multiplying by $N=2^q$ gives $N\norm{t-\tau}_\infty\le2^{-q/2}\eps/(24\pi)<1/(2\pi)$.  Also $m\ge q+1$, as \Cref{prop:angle-oracle} requires.
\end{proof}

\subsubsection{Row access}
\label{sec:row-oracle}

\Cref{thm:dyadic-nuqft} also needs ordered access to the nonzero entries in each row of the matrix $M_\sigma=\sum_k\ket{\sigma_k}\!\bra k$ of \Cref{eq:prelim-routing-matrix-data}, where $\sigma_k$ is the grid point of $\tau_k$ (\Cref{eq:dyadic-routing-data-a}).  Since $\tau_k\le1/2$, $s_k^{\mathrm{unw}}\le N/2$, so no wrap-around occurs and $\sigma_k=s_k^{\mathrm{unw}}$.  The grid-point oracle $O_\sigma\ket k\ket0=\ket k\ket{\sigma_k}$ costs one call to $O_\tau$, a rounding, and an uncomputation.  Three facts make row access cheap: each row has at most five entries, they are consecutive integers, and they lie in a short window that can be computed from the row number.

Column access to $M_\sigma$ is $O_\sigma$ itself.  Row access is constructed below.  The bound $d_r=5$ counts nodes per grid point, so it does not depend on whether $M_\sigma$ or its transpose is used.  (Numerically, the largest multiplicity is $3$ for $N\le64$ and $4$ for $128\le N\le2^{18}$, both for the nodes $t$ and for the negative nodes $t^-$ of \Cref{thm:nupct-reduction}.)

\begin{lemma}[Rows of $M_\sigma$]
\label{lem:quantized-multiplicity}
Suppose $N\norm{t-\tau}_\infty<1/(2\pi)$.  Rows $\ell>N/2$ of $M_\sigma$ are empty.  For $0\le\ell\le N/2$, let
\begin{equation}
  \Theta_{\ell,-}=\max\!\left\{0,
    \frac{2\pi}{N}\left(\ell-\frac34\right)\right\},
  \qquad
  \Theta_{\ell,+}=\min\!\left\{\pi,
    \frac{2\pi}{N}\left(\ell+\frac34\right)\right\}.
  \label{eq:expanded-angle-window}
\end{equation}
Then every $k$ with $\sigma_k=\ell$ satisfies
\begin{equation}
  a_\ell\le k\le b_\ell,
  \qquad
  a_\ell=\frac N2(1+\cos\Theta_{\ell,+}),
  \qquad
  b_\ell=\frac N2(1+\cos\Theta_{\ell,-}),
  \label{eq:expanded-index-window}
\end{equation}
and $b_\ell-a_\ell<5$.  In particular, every row has at most five entries.  Moreover, $\sigma_k$ is non-increasing in $k$, so the entries of each row are consecutive integers.
\end{lemma}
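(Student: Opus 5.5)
The plan is to translate the condition $\sigma_k=\ell$ into an interval for the true angle $\theta_k=2\pi t_k$, and then map that interval through $x=\cos\theta$ back to the index $k$.

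\emph{Empty rows.} Since $\tau_k\in[0,1/2]$ by \Cref{prop:angle-oracle}, we have $N\tau_k\le N/2$. Hence $\sigma_k=s_k^{\mathrm{unw}}=\nint(N\tau_k)\le N/2$, and rows $\ell>N/2$ receive no node.

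\emph{The index window.} Suppose $\sigma_k=\ell$. Then $|N\tau_k-\ell|\le\frac12$. The hypothesis gives $|Nt_k-N\tau_k|<\frac1{2\pi}<\frac14$, so $|Nt_k-\ell|<\frac34$. Equivalently,
\[
  \tfrac{2\pi}{N}\bigl(\ell-\tfrac34\bigr)<\theta_k<\tfrac{2\pi}{N}\bigl(\ell+\tfrac34\bigr).
\]
Intersecting with $\theta_k\in[0,\pi]$ gives $\Theta_{\ell,-}\le\theta_k\le\Theta_{\ell,+}$. Since $\cos$ is decreasing on $[0,\pi]$ and $x_k=\cos\theta_k=-1+2k/N$, we get $\cos\Theta_{\ell,+}\le -1+2k/N\le\cos\Theta_{\ell,-}$. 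This is exactly $a_\ell\le k\le b_\ell$.

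\emph{Width of the window.} We have $b_\ell-a_\ell=\frac N2(\cos\Theta_{\ell,-}-\cos\Theta_{\ell,+})$. By the mean value theorem this is at most $\frac N2(\Theta_{\ell,+}-\Theta_{\ell,-})\cdot\max|\sin|$. Using $|\sin|\le1$ and $\Theta_{\ell,+}-\Theta_{\ell,-}\le\frac{2\pi}{N}\cdot\frac32$, the width is at most $\frac{3\pi}{2}\approx4.71<5$. So the integers in $[a_\ell,b_\ell]$ number at most $\lfloor b_\ell-a_\ell\rfloor+1\le5$, and every row has at most five entries. The clipping at $0$ and $\pi$ only shrinks the window, so it needs no separate treatment.

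\emph{Monotonicity.} The true nodes $t_k$ decrease strictly in $k$. The stored nodes are $\tau_k=\mathrm{clip}(\operatorname{round}_m([1/(2\pi)]_w\widetilde\theta_k))$. The main obstacle is that the approximate $\arccos$ values $\widetilde\theta_k$ need not be monotone in $k$, so monotonicity of $\tau_k$ cannot be read off directly. I would argue instead that $\sigma_k$ is non-increasing, which is all the lemma claims.

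For this I would show that consecutive true nodes are well separated. The spacing satisfies $N(t_k-t_{k+1})=\frac{N}{2\pi}(\theta_k-\theta_{k+1})\ge\frac{N}{2\pi}\cdot\frac2N=\frac1\pi$, because $|d\theta/dx|\ge1$ and consecutive $x_k$ differ by $2/N$. The perturbation is $N|t_k-\tau_k|<\frac1{2\pi}$ for each node, so two perturbations together are less than $\frac1\pi$. Hence $N\tau_k>N\tau_{k+1}$, and since rounding to the nearest integer (with a fixed tie convention) is monotone, $\sigma_k\ge\sigma_{k+1}$.

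If the strict inequality is delicate, for instance at the clip boundaries $0$ or $\frac12$, I would fall back on non-strict monotonicity of $\tau_k$. That suffices because rounding and clipping are monotone maps. In either case, a non-increasing integer sequence takes each value on a set of consecutive indices, so the entries of each row are consecutive integers, which finishes the proof.
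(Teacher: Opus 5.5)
Your proof is correct and follows the paper's argument step for step. You localize $\theta_k$ via $|Nt_k-\ell|<\tfrac34$, bound the window width by $\tfrac N2(\Theta_{\ell,+}-\Theta_{\ell,-})\le 3\pi/2<5$ using the $1$-Lipschitz property of cosine, and get monotonicity by comparing the spacing $N(t_k-t_{k+1})\ge 1/\pi$ with a total perturbation below $1/\pi$. Your fallback paragraph is not needed: the hypothesis $N\norm{t-\tau}_\infty<1/(2\pi)$ already applies to the final clipped values $\tau_k$, so $\tau_k>\tau_{k+1}$ holds for every $k$, including at the clip boundaries.
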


\begin{proof}
Rows $\ell>N/2$ are empty because $\sigma_k\le N/2$.  If $\sigma_k=\ell$, then $|N\tau_k-\ell|\le1/2$, and $N\norm{t-\tau}_\infty<1/(2\pi)<1/4$ gives
\begin{equation*}
  |Nt_k-\ell|
  \le |Nt_k-N\tau_k|+|N\tau_k-\ell|
  <\frac34.
\end{equation*}
Hence $\theta_k=2\pi t_k\in[\Theta_{\ell,-},\Theta_{\ell,+}]$, and since $k=\frac N2(1+\cos\theta_k)$, the index $k$ lies in \Cref{eq:expanded-index-window}.  Because cosine is $1$-Lipschitz,
\begin{equation}
  b_\ell-a_\ell
  \le \frac N2\bigl(\Theta_{\ell,+}-\Theta_{\ell,-}\bigr)
  \le \frac{3\pi}{2}<5,
  \label{eq:expanded-window-length}
\end{equation}
so the window contains at most five integers.  The bound is attained, for example, at $N=8$ and $\ell=2$.

For monotonicity, $x_{k+1}-x_k=2/N$ and $|\tfrac{d}{dx}\arccos x|\ge1$, so $\theta_k-\theta_{k+1}\ge2/N$ and $t_k-t_{k+1}\ge1/(\pi N)$.  Hence
\[
  \tau_k-\tau_{k+1}\ge t_k-t_{k+1}-2\norm{t-\tau}_\infty>\frac1{\pi N}-\frac{2}{2\pi N}=0.
\]
So $\tau_k$ is decreasing in $k$.  Rounding is non-decreasing, so $\sigma_k=s_k^{\mathrm{unw}}=\nint(N\tau_k)$ is non-increasing in $k$.  The set of $k$ with $\sigma_k=\ell$ is therefore a set of consecutive integers.
\end{proof}

\begin{proposition}[Row access from eight candidates]
\label{prop:row-oracle}
Suppose $N\norm{t-\tau}_\infty<1/(2\pi)$ and set $d_r=\min\{5,N\}$.  The row-access oracle $O_r$ of \Cref{eq:row_oracle} for $M_\sigma$ can be implemented with at most sixteen calls each to $O_\sigma$ and $O_\sigma^\dagger$, $\widetilde O(m^2+q^2)$ gates, and $O(m+q)$ work qubits.  It needs no lookup table and no search.
\end{proposition}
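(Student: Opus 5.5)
The plan is to build $O_r$ from \Cref{lem:quantized-multiplicity}. That lemma says row $\ell$ is a run of consecutive indices inside the window $[a_\ell,b_\ell]$, which has length less than $5$. We first compute a cheap starting index $c_\ell$ from $\ell$ with an exact integer formula, chosen so that the candidate set $\{c_\ell,\ldots,c_\ell+7\}$ of eight consecutive indices (clipped to $[0,N-1]$) contains the whole window. A natural choice is $c_\ell=\lfloor \tfrac N2(1+\cos\widetilde\Theta_{\ell,+})\rfloor-1$, where $\widetilde\Theta_{\ell,+}$ and the cosine are evaluated by part~(ii) of \Cref{thm:reversible-elementary-functions} to precision $p=q+O(1)$. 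Since $\Theta_{\ell,+}=\frac{2\pi}{N}(\ell+\frac34)$ is an exact fixed-point multiple of $\frac{2\pi}{N}$ once the constant $2\pi$ is stored, the computed $c_\ell$ differs from $a_\ell$ by at most $O(1)$. Combined with $b_\ell-a_\ell<5$, eight candidates then suffice. The clipping at $\ell=0$ and $\ell=N/2$ is handled by the $\max$ and $\min$ in \Cref{eq:expanded-angle-window}, and rows $\ell>N/2$ are flagged empty by a comparison.

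Next, for each candidate $c_\ell+i$ with $i=0,\ldots,7$, we call $O_\sigma$, compare $\sigma_{c_\ell+i}$ with $\ell$ into a membership bit $e_i$, and uncompute with $O_\sigma^\dagger$. This uses eight calls each to $O_\sigma$ and $O_\sigma^\dagger$. Because membership is monotone and consecutive, the bits $e_0\ldots e_7$ form a single block of ones. From them we compute $d_\ell=\sum_i e_i\le5$ and the first index $i_0$ of the block, using $O(1)$-size reversible counters. Given the input slot $r$, we write $\rho_{\ell,r}=c_\ell+i_0+r$ if $r<d_\ell$, and $N+r$ otherwise. This is exact integer arithmetic of width $O(q)$.

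Finally, the bits $e_i$, $i_0$, $d_\ell$ and $c_\ell$ must be uncomputed. To do this we copy $\rho_{\ell,r}$ into the output, but the oracle is required to map $\ket\ell\ket r$ to $\ket\ell\ket{\rho_{\ell,r}}$ in place. So we must also erase $r$, which can be recovered from $\rho$ as $\rho-c_\ell-i_0$, or as $\rho-N$ in the padded case. This recovery needs $c_\ell$ and $i_0$ again, so we recompute the membership bits from the output side, which costs another eight calls each to $O_\sigma$ and $O_\sigma^\dagger$. That gives the stated sixteen. The gate count is dominated by the $O_\sigma$ calls at $\widetilde O(m^2)$ each and the cosine evaluation at $\widetilde O(q^2)$, and the qubit count is $O(m+q)$.

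I expect the main obstacle to be the first step: certifying rigorously that the approximate $c_\ell$ lands within the slack, including rounding near the ends of the window. If the slack proves tight, the fallback is to widen the candidate set slightly, or to exploit $\tau$'s monotonicity and derive $c_\ell$ from a single evaluation of $\cos$ with explicit guard bits.
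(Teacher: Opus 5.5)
Your proposal is correct and follows the paper's construction: the same eight candidates $\lfloor\widetilde a_\ell\rfloor-1,\ldots,\lfloor\widetilde a_\ell\rfloor+6$, membership marks via $O_\sigma$ and $O_\sigma^\dagger$, the first index and the count $d_\ell$ read off the consecutive block, and a second pass that reverses the first, for sixteen calls each. The obstacle you flag is settled in the paper by requiring only $|\widetilde a_\ell-a_\ell|<1/4$ (cosine error below $1/(2N)$, i.e.\ $q+O(1)$ bits), so $\lfloor\widetilde a_\ell\rfloor$ is within one of $\lfloor a_\ell\rfloor$; the paper also holds $f_\ell,d_\ell$ (which depend only on $\ell$) while mapping $r\mapsto\rho_{\ell,r}$ in place with the self-erasing flag $[r<d_\ell]=[\rho_{\ell,r}<N]$ instead of recovering $r$ from $\rho$, and it marks out-of-range candidates invalid (querying a dummy index) rather than clipping them to a boundary index.
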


\begin{proof}
For rows $\ell>N/2$, which are empty, every slot returns the padding value $N+r$.  For $0\le\ell\le N/2$, compute $h_\ell=\lfloor\widetilde a_\ell\rfloor$ from an approximation with $|\widetilde a_\ell-a_\ell|<1/4$.  This needs $\cos\Theta_{\ell,+}$ to $q+O(1)$ bits (part~(ii) of \Cref{thm:reversible-elementary-functions}), since a cosine error below $1/(2N)$ gives an error below $1/4$ in $a_\ell$.  With $A_\ell=\lfloor a_\ell\rfloor$ we have $h_\ell\in\{A_\ell-1,A_\ell,A_\ell+1\}$, and by \Cref{lem:quantized-multiplicity} every column of row $\ell$ lies in $[A_\ell,A_\ell+5]$.  So every column lies among the eight candidates
\begin{equation}
  \mathcal J_\ell=
  \setof{h_\ell-1,h_\ell,\ldots,h_\ell+6}.
  \label{eq:eight-candidates}
\end{equation}

By \Cref{lem:quantized-multiplicity}, the columns of row $\ell$ are consecutive integers $f_\ell,f_\ell+1,\ldots,f_\ell+d_\ell-1$, all in $\mathcal J_\ell$.  Hold each candidate $k$ in a signed $(q+3)$-bit register, which covers the range $-2,\ldots,N+7$, and flag whether $0\le k<N$.  If not, query index $0$ instead, so that every oracle call is on a valid index.  Mark $k$ valid if the flag is set and $\sigma_k=\ell$.  Then $f_\ell$ is the smallest valid candidate and $d_\ell$ is the number of valid candidates.  For empty rows, including all $\ell>N/2$, set $f_\ell=d_\ell=0$.  Computing the marks, deriving $f_\ell$ and $d_\ell$, copying them out and uncomputing the marks gives
\[
  \ket\ell\ket0\ket0\mapsto\ket\ell\ket{f_\ell}\ket{d_\ell}
\]
with eight calls each to $O_\sigma$ and $O_\sigma^\dagger$.

Since the row is consecutive, \Cref{eq:row_oracle} becomes
\[
  \rho_{\ell,r}=
  \begin{cases}
    f_\ell+r,&r<d_\ell,\\
    N+r,&d_\ell\le r<d_r.
  \end{cases}
\]
Hold $r$ in a $(q+3)$-bit register.  Compute the flag $c=[r<d_\ell]$, add $f_\ell$ to $r$ controlled on $c=1$, and add $N$ to $r$ controlled on $c=0$.  The register now holds $\rho_{\ell,r}$.  Since $f_\ell+d_\ell-1<N$ and $N+r\ge N$, the flag equals $[\rho_{\ell,r}<N]$, so recomputing this comparison erases it:
\[
  \ket\ell\ket r\ket{f_\ell,d_\ell}\ket0
  \mapsto\ket\ell\ket r\ket{f_\ell,d_\ell}\ket c
  \mapsto\ket\ell\ket{\rho_{\ell,r}}\ket{f_\ell,d_\ell}\ket c
  \mapsto\ket\ell\ket{\rho_{\ell,r}}\ket{f_\ell,d_\ell}\ket0.
\]

Uncompute $f_\ell$ and $d_\ell$ by reversing the first step, which depends only on $\ell$.  This gives \Cref{eq:row_oracle} for $0\le r<d_r$, the only inputs the sparse block encoding queries.  The two passes use sixteen calls each to $O_\sigma$ and $O_\sigma^\dagger$.  Each call costs $\widetilde O(m^2)$ by \Cref{prop:angle-oracle}, and the candidate and shift arithmetic acts on $O(q)$-bit words, which gives the stated cost.
\end{proof}

\subsubsection{Negative nodes}
\label{sec:negative-branch}

\begin{lemma}[Negative nodes from the positive ones]
\label{lem:negative-branch}
Suppose $N\norm{t-\tau}_\infty<1/(2\pi)$.  Then $\tau_k\in(0,1/2]$, so $\tau_k^-=1-\tau_k$.  Let $s_k^{\mathrm{unw}}$, $\sigma_k$ and $z_k$ be as in \Cref{eq:dyadic-routing-data-a} for $\tau_k$, and set
\begin{equation}
  s_k^-=N-s_k^{\mathrm{unw}},
  \qquad
  z_k^-=-z_k,
  \qquad
  \sigma_k^-=(-\sigma_k)\bmod N.
  \label{eq:signed-label-reflection}
\end{equation}
Then $N\tau_k^-=s_k^-+z_k^-/2$ with $z_k^-\in[-1,1]$, and row $\ell$ of $M_{\sigma^-}$ is row $(-\ell)\bmod N$ of $M_\sigma$.  Consequently:
\begin{enumerate}[label=(\roman*)]
  \item The NUQFT for $\tau^-$ is the circuit for $\tau$ with $(\tau_k,s_k^{\mathrm{unw}},z_k)$ replaced by $(1-\tau_k,s_k^-,-z_k)$, which costs $O(m+q)$ extra gates and no second $\arccos$.
  \item Its row oracle is obtained by negating $\ell$ modulo $N$, applying the row oracle of \Cref{prop:row-oracle}, and negating back, with the same bound $d_r$.
\end{enumerate}
\end{lemma}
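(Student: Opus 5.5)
First I will establish the range claim $\tau_k\in(0,1/2]$. By \Cref{prop:angle-oracle}, $\tau_k\le1/2$ holds by the clipping, so only positivity needs an argument. The smallest true node is $t_{N-1}=\frac1{2\pi}\arccos(1-2/N)$. Since $\arccos(1-u)\ge\sqrt{2u}$, this gives $t_{N-1}\ge\frac{1}{2\pi}\cdot\frac{2}{\sqrt N}$, which is much larger than $\norm{t-\tau}_\infty<1/(2\pi N)$. Hence $\tau_k\ge t_k-\norm{t-\tau}_\infty>0$. Because $\tau_k\in(0,1)$, we get $(-\tau_k)\bmod1=1-\tau_k$. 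This is again an $m$-bit fixed-point number, obtained as the two's complement of $a_k$, so it is exact.

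Next comes the reflected splitting. From $N\tau_k=s_k^{\mathrm{unw}}+z_k/2$ we get $N(1-\tau_k)=(N-s_k^{\mathrm{unw}})+(-z_k)/2$, with $-z_k\in[-1,1]$. I need $s_k^-$ to coincide with $\nint(N\tau_k^-)$, so that the data match \Cref{eq:dyadic-routing-data-a} for $\tau^-$. Away from ties this holds because the distance to the grid is preserved under reflection. At an exact tie $z_k=\pm1$, the fixed tie convention may disagree. I will handle this by noting that the identity \Cref{eq:background-phase-split} and the whole factorization \Cref{eq:dyadic-direct-factorization} only use the decomposition $N\tau=s+z/2$ with $s\equiv\sigma\pmod N$ and $z\in[-1,1]$, not the rounding rule itself. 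So the proofs of \Cref{thm:dyadic-nuqft} and \Cref{lem:diagonal-factor-implementation} go through verbatim with $(s^-,z^-)$. The wrapped index is $\sigma_k^-=(N-s_k^{\mathrm{unw}})\bmod N=(-\sigma_k)\bmod N$, since $\sigma_k=s_k^{\mathrm{unw}}$ (no wrap, as $s_k^{\mathrm{unw}}\le N/2$). The case $s_k^{\mathrm{unw}}=0$ gives $s^-=N$, $\sigma^-=0$, which is exactly the unwrapped-carry case that the construction already handles.

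For the rows, $M_{\sigma^-}=\sum_k\ket{(-\sigma_k)\bmod N}\bra k=PM_\sigma$, where $P\ket\ell=\ket{(-\ell)\bmod N}$ is an involutive permutation. The columns in row $\ell$ of $M_{\sigma^-}$ are therefore $\{k:\sigma_k=(-\ell)\bmod N\}$, which is row $(-\ell)\bmod N$ of $M_\sigma$. This has the same multiplicities, the same ordering and the same bound $d_r$. This proves (ii): compute $(-\ell)\bmod N$ in place with $O(q)$ gates, call the row oracle of \Cref{prop:row-oracle}, and negate back, which leaves the slot register $\ket{\rho}$ untouched. For (i), the circuit of \Cref{thm:dyadic-nuqft} consumes $\tau_k$ only through $(\sigma_k,z_k)$ and $z_k$'s angle. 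Inserting a reversible subtraction $a_k\mapsto2^m-a_k$, $s\mapsto N-s$, and a negation of $z_k$ costs $O(m+q)$ gates. The only $\arccos$ needed is $\widetilde\phi(-z_k)$, which the same angle subroutine computes on the negated exact register, or equivalently as $\pi-\widetilde\phi(z_k)$ up to output rounding. In neither case is a second node-oracle $\arccos$ required.

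The main obstacle is the tie/rounding-convention bookkeeping. The fix I will commit to is the one above: the error analysis of \Cref{sec:CFNUQFT} depends only on an exact decomposition $N\tau=s+z/2$ with $|z|\le1$, so the reflected labels are admissible even where they differ from a literal $\nint$.
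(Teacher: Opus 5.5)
Your proposal is correct and follows essentially the same route as the paper. The paper proves positivity of $\tau_k$, derives the reflected split $N\tau_k^-=s_k^-+z_k^-/2$, identifies rows via $\sigma_k^-=\ell\iff\sigma_k=(-\ell)\bmod N$, and disposes of ties by noting that only the decomposition with $z\in[-1,1]$ is used. Your only deviations are cosmetic: you get positivity from $\arccos(1-u)\ge\sqrt{2u}$ rather than the Taylor bound on $\cos(\pi/N)$, you phrase the row statement as $M_{\sigma^-}=PM_\sigma$, and you spell out the $O(m+q)$ register operations for (i) and (ii) in more detail.
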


\begin{proof}
First, $t_k\ge t_{N-1}>1/(2N)$: indeed $\cos(\pi/N)>1-\pi^2/(2N^2)\ge1-2/N$ for $N\ge4$, so $\theta_{N-1}=\arccos(1-2/N)>\pi/N$.  Hence $\tau_k\ge t_k-\norm{t-\tau}_\infty>1/(2N)-1/(2\pi N)>0$, and $(-\tau_k)\bmod1=1-\tau_k$.  Then
\[
  N\tau_k^-=N-N\tau_k=N-s_k^{\mathrm{unw}}-\frac{z_k}{2}=s_k^-+\frac{z_k^-}{2}.
\]
Since $\sigma_k=s_k^{\mathrm{unw}}\le N/2$, we have $\sigma_k^-=\ell$ exactly when $\sigma_k=(-\ell)\bmod N$, which is the statement about rows.

\Cref{lem:routing-factorization}, and hence \Cref{thm:dyadic-nuqft}, use the nearest integer only through the identity $N\tau_k=s_k^{\mathrm{unw}}+z_k/2$ with integer $s_k^{\mathrm{unw}}$ and $z_k\in[-1,1]$.  Its reduction modulo $N$ is $\sigma_k$.  So they apply to $\tau^-$ with $(s^-,z^-,\sigma^-)$.  This split agrees with rounding $N\tau_k^-$ directly, except possibly at a tie, where either choice is valid.
\end{proof}

\subsubsection{Coefficient preparation}
\label{sec:coefficient-prep}

The coefficient preparations do not depend on the nodes (\Cref{lem:universal-coefficient-prep}), so the constructions below apply to any NUQFT and are compiled once for a given $K$ and precision.  For fixed $r$, with $\lambda_r$ as in \Cref{eq:lambda-r}, define
\begin{equation}
  \ket{\chi_r}
  =\frac{1}{\sqrt{\lambda_r}}
   \sum_{\ell=0}^{K-1}
   \sqrt{|\alpha'_{\ell r}|}\ket\ell,
  \qquad
  \phi_{\ell r}=\arg(\alpha'_{\ell r}).
  \label{eq:chi-r}
\end{equation}
We implement
\begin{equation}
  \PREP_r\ketzero{\lceil\log_2K\rceil}=\ket{\chi_r}
  \label{eq:prep-r}
\end{equation}
and apply the phase $e^{i\phi_{\ell r}}$ inside the SELECT operation, so that the LCU produces the complex coefficient $\alpha'_{\ell r}$.  For the coefficients \Cref{eq:bessel-coeff} the Bessel factors are real, so $\phi_{\ell r}$ is a multiple of $\pi/2$ and the phases cost only Clifford gates.

Both preparations are built from the following standard construction.

\begin{theorem}[M\"ott\"onen et al.\ \cite{MVBS05}]
\label{thm:mottonen-state-preparation}
Let $D=2^n$ and $a\in\mathbb R_{\ge0}^{D}$ with $\sum_ja_j^2=1$.  There is an ancilla-free circuit $U_a$ with
\begin{equation}
  U_a\ketzero n=\sum_{j=0}^{D-1}a_j\ket j,
  \label{eq:mottonen-target-state}
\end{equation}
made of $n$ uniformly controlled $R_y$ rotations, which decompose into at most
\begin{equation}
  D-1\quad R_y\text{ rotations}
  \qquad\text{and}\qquad
  D-2\quad \mathrm{CNOT}\text{ gates}.
  \label{eq:mottonen-real-count}
\end{equation}
For a binary prefix $z$, let $P_z:=\sum_{j\text{ with prefix }z}a_j^2$.  The rotation that splits $z$ into its two children has angle
\begin{equation}
  \vartheta_z
  =2\arcsin\sqrt{\frac{P_{z1}}{P_z}}
  \label{eq:mottonen-angle}
\end{equation}
(and $\vartheta_z=0$ if $P_z=0$).  If each of the $S$ rotations is implemented with operator-norm error at most $\eta$, the circuit has error at most $S\eta$.
\end{theorem}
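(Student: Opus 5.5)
The statement is the M\"ott\"onen state-preparation theorem. I will prove it by induction on the binary tree of prefixes, with a final error bound from telescoping. The plan treats the target amplitudes as the leaves of a depth-$n$ binary tree. Each internal node $z$ (a prefix of length $k<n$) carries the weight $P_z=\sum_{j\text{ with prefix }z}a_j^2$, so that $P_\emptyset=1$ and $P_z=P_{z0}+P_{z1}$.

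\textbf{Correctness by induction on $k$.} After the first $k$ uniformly controlled rotations, the state should be $\sum_{|z|=k}\sqrt{P_z}\ket z\ketzero{n-k}$. The base case $k=0$ is trivial. For the inductive step, the $(k+1)$-st stage applies, controlled on the first $k$ qubits being in state $z$, the rotation $R_y(\vartheta_z)$ to qubit $k+1$. By \Cref{eq:ry-action} this sends $\ket0$ to $\cos(\vartheta_z/2)\ket0+\sin(\vartheta_z/2)\ket1$. With \Cref{eq:mottonen-angle} we get $\sin(\vartheta_z/2)=\sqrt{P_{z1}/P_z}$ and $\cos(\vartheta_z/2)=\sqrt{P_{z0}/P_z}$, the latter being nonnegative because $\vartheta_z\in[0,\pi]$. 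Multiplying by $\sqrt{P_z}$ gives the amplitudes $\sqrt{P_{z0}}$ and $\sqrt{P_{z1}}$. When $P_z=0$ the branch has zero amplitude, so the choice $\vartheta_z=0$ is harmless. At $k=n$ we have $\sqrt{P_j}=a_j$ because $a_j\ge0$, which is \Cref{eq:mottonen-target-state}. No ancillas appear at any stage.

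\textbf{Gate count.} Stage $k+1$ is a uniformly controlled $R_y$ with $k$ controls. I will invoke the standard Gray-code decomposition of M\"ott\"onen et al. It realizes a $k$-fold uniformly controlled $R_y$ with $2^k$ single-qubit $R_y$ gates and $2^k$ CNOTs, and the angles are obtained from the $\vartheta_z$ by a Walsh--Hadamard transform. For $k=0$ no CNOT is needed. Summing $2^k$ over $k=0,\dots,n-1$ gives $D-1$ rotations. Summing CNOTs over $k=1,\dots,n-1$ gives $2^n-2=D-2$. This is where I expect the main obstacle: the CNOT count for $k\ge1$ needs the Gray-code structure, not just a naive multiplexor. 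I would cite this count from \cite{MVBS05} rather than re-derive it, checking only the conditions $k=1$ (two CNOTs) and $k=0$ (none).

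\textbf{Error bound.} Suppose the $S$ rotations are replaced by unitaries $\widetilde R_i$ with $\norm{\widetilde R_i-R_i}\le\eta$, while the CNOTs stay exact. I will use the telescoping identity
\[
\prod_i\widetilde R_i-\prod_iR_i=\sum_i\Bigl(\prod_{i'>i}\widetilde R_{i'}\Bigr)(\widetilde R_i-R_i)\Bigl(\prod_{i'<i}R_{i'}\Bigr).
\]
Here the interleaved CNOTs are included in the products, and every factor is unitary. The triangle inequality then bounds the operator-norm error by $S\eta$.
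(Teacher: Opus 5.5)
Your proposal is correct and takes essentially the same route as the paper. The paper also takes the Gray-code gate count and the angle formula from \cite{MVBS05} (nonnegative amplitudes, so no $R_z$ stage), and it obtains the $S\eta$ bound by replacing the rotations one at a time, which is your telescoping identity. You additionally spell out the prefix-tree induction that the paper leaves to the citation, and you make explicit that the $S\eta$ bound needs the approximate rotations to be unitary (or at least contractions), which the statement assumes implicitly.
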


This is the nonnegative-amplitude case of \cite{MVBS05}: no phase-equalizing $R_z$ rotations are needed, and \Cref{eq:mottonen-angle} is their Eq.~(8).  The last claim follows by replacing the rotations one at a time.  The Grover-Rudolph construction gives the same gate count (\Cref{app:grover-rudolph}).

\begin{proposition}[Inner LCU coeff preparation]
\label{prop:coefficient-prep}
Let $\alpha$ be the normalization of the block encoding in which the preparations are used.  After classically computing the $K^2$ coefficients and the angles \Cref{eq:mottonen-angle} to accuracy $O(2^{-b_\alpha})$, each state $\ket{\chi_r}$ can be prepared with $O(K)$ rotations and CNOT gates on $O(\log K)$ qubits, and the controlled family
\begin{equation}
  \PREP_{\mathrm{LCU}}
  =\sum_{r=0}^{K-1}\ket r\!\bra r\otimes\PREP_r
  \label{eq:multiplexed-prep}
\end{equation}
with $O(K^2)$ one- and two-qubit gates.  Choosing
\begin{equation}
  b_\alpha=O\!\left(
  \log\frac{K^2\alpha}{\eps_{\mathrm{prep}}}
  \right)
  \label{eq:prep-precision}
\end{equation}
keeps the resulting error in the encoded matrix $O(\eps_{\mathrm{prep}})$.
\end{proposition}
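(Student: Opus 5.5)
The plan is to build each $\PREP_r$ directly from \Cref{thm:mottonen-state-preparation}, multiplex the $K$ resulting circuits coherently on the register $\ket r$, and then control the precision error by a telescoping argument over the rotations.

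\emph{Single state.} Pad the amplitude vector $\bigl(\sqrt{|\alpha'_{\ell r}|/\lambda_r}\bigr)_{\ell<K}$ with zeros up to dimension $D=2^{\lceil\log_2K\rceil}<2K$. These amplitudes are real and nonnegative, and their squares sum to $1$ by the definition of $\lambda_r$ in \Cref{eq:lambda-r}. \Cref{thm:mottonen-state-preparation} then gives an ancilla-free circuit $U^{(r)}$ with $U^{(r)}\ketzero{\lceil\log_2K\rceil}=\ket{\chi_r}$, built from at most $D-1<2K$ $R_y$ rotations and $D-2$ CNOTs, all acting on $\lceil\log_2K\rceil$ qubits. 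The angles \Cref{eq:mottonen-angle} are functions of the partial sums $P_z$ of $|\alpha'_{\ell r}|/\lambda_r$. By \Cref{lem:universal-coefficient-prep}, these numbers are fixed constants that can be computed classically once, to $b_\alpha$ bits. The phases $\phi_{\ell r}$ are multiples of $\pi/2$, so they go into SELECT as Clifford gates and contribute no error.

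\emph{Multiplexing.} The gate layout of a Möttönen circuit (which qubits the uniformly controlled rotations act on, and where the CNOTs sit) depends only on $n=\lceil\log_2K\rceil$; only the angles depend on $r$. So $\PREP_{\mathrm{LCU}}=\sum_r\ket r\!\bra r\otimes\PREP_r$ is obtained by replacing each rotation $R_y(\vartheta_z^{(r)})$ with a rotation uniformly controlled on the $r$-register. Equivalently, append the $\lceil\log_2K\rceil$ qubits of $r$ as extra control qubits of each uniformly controlled rotation. A uniformly controlled $R_y$ with $c$ controls decomposes into $2^c$ rotations and $2^c$ CNOTs. Summing over the levels of the tree, each now carrying $\lceil\log_2K\rceil$ extra controls, gives $O(D\cdot 2^{\lceil\log_2K\rceil})=O(K^2)$ one- and two-qubit gates. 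As an alternative, one could apply $K$ separately controlled copies. That costs $O(K)$ each, but the controls on $r$ would have to be compiled, so I would use the uniformly-controlled route to get the clean $O(K^2)$ bound with no extra ancillas.

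\emph{Precision.} If a rotation angle is off by at most $2^{-b_\alpha}$, the rotation is off by at most $2^{-b_\alpha}/2$ in operator norm. By the last claim of \Cref{thm:mottonen-state-preparation}, and replacing the $S=O(K^2)$ multiplexed rotations one at a time, we get $\norm{\widetilde\PREP_{\mathrm{LCU}}-\PREP_{\mathrm{LCU}}}\le O(K^2 2^{-b_\alpha})$. This is the step I expect to need the most care: converting a preparation error into an error in the encoded matrix. An LCU block encoding with normalization $\alpha$ encodes $\alpha(\PREP^\dagger\otimes I)\,\mathrm{SEL}\,(\PREP\otimes I)$ restricted to the top-left block. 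Substituting the perturbed PREP on both sides and telescoping changes the unnormalized block by at most $2\alpha\norm{\widetilde\PREP-\PREP}$. This bound holds because SELECT is unitary and the perturbation enters at most twice. The total error is therefore $O(\alpha K^2 2^{-b_\alpha})$, and choosing $b_\alpha$ as in \Cref{eq:prep-precision} makes it $O(\eps_{\mathrm{prep}})$.
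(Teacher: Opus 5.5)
Your proposal is correct and follows essentially the same route as the paper. You pad to $K'=2^{\lceil\log_2K\rceil}$ and apply M\"ott\"onen synthesis for each $r$, then multiplex by adding the $r$-register as extra controls on each level's uniformly controlled rotation, which gives $K'(K'-1)=O(K^2)$ gates, and you bound the $O(K^2 2^{-b_\alpha})$ error in $\PREP_{\mathrm{LCU}}$ as at most doubled through PREP-SELECT-PREP$^\dagger$ and then scaled by $\alpha$ in the encoded matrix.
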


\begin{proof}
For fixed $r$, pad the amplitudes in \Cref{eq:chi-r} to dimension $K'=2^{\lceil\log_2K\rceil}<2K$ and apply \Cref{thm:mottonen-state-preparation}.  This gives $O(K)$ gates.  For the controlled family, pad $r$ to $K'$ values as well.  At level $j$ of the cascade the angle depends on $r$ and on the $j-1$ bits already prepared, so the level is one uniformly controlled rotation with $\log_2K'+j-1$ controls, which decomposes into $K'2^{j-1}$ rotations and as many CNOTs.  Summing over $j$ gives $K'(K'-1)=O(K^2)$ gates.  By the last part of \Cref{thm:mottonen-state-preparation}, angle errors $O(2^{-b_\alpha})$ give an error $O(K^22^{-b_\alpha})$ in $\PREP_{\mathrm{LCU}}$, at most twice that in the PREP-SELECT-PREP$^\dagger$ product, and $O(\alpha K^22^{-b_\alpha})$ in the encoded matrix.  \Cref{eq:prep-precision} makes this $O(\eps_{\mathrm{prep}})$.
\end{proof}

\begin{proposition}[Outer LCU preparations]
\label{prop:outer-lcu-compilation}
Let the $r$th term of the outer LCU have normalization $\beta_r=\sqrt{d_r}\lambda_r$ (\Cref{lem:assembled-factor-error}), and let $\beta_\star=\sqrt{d_r}\Lambda$.  After classically computing the weights $\lambda_r$:
\begin{enumerate}[label=(\alph*),leftmargin=2.2em]
  \item The uniform outer LCU prepares $K^{-1/2}\sum_{r<K}\ket r$ and, in branch $r$, rotates an ancilla so that its $\ket0$ amplitude is $\beta_r/\beta_\star=\lambda_r/\Lambda$.  Its normalization is
  \begin{equation}
    \alpha_{\mathrm{unif}}=K\sqrt{d_r}\,\Lambda.
    \label{eq:uniform-outer-normalization}
  \end{equation}
  \item The weighted outer LCU prepares
  \begin{equation}
    \PREP_{\mathrm{out}}\ketzero{\lceil\log_2K\rceil}
    =\frac{1}{\sqrt{\Lambda}}
     \sum_{r=0}^{K-1}\sqrt{\lambda_r}\ket r,
    \label{eq:outer-prep}
  \end{equation}
  and its normalization is
  \begin{equation}
    \alpha_{\mathrm{wt}}=\sqrt{d_r}\,\Lambda.
    \label{eq:weighted-normalization}
  \end{equation}
\end{enumerate}
Each preparation, and the branch-dependent rotation in (a), uses $O(K)$ rotations and CNOT gates.  Choosing
\begin{equation}
  b_{\mathrm{out}}
  =O\!\left(
      \log\frac{K\alpha_{\mathrm{unif}}}{\eps_{\mathrm{out}}}
    \right)
  \label{eq:outer-prep-precision}
\end{equation}
bits of angle precision keeps the resulting error in the encoded matrix $O(\eps_{\mathrm{out}})$.
\end{proposition}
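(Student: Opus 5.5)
The plan is to treat the two parts separately, using the sandwich identity for a PREP--SELECT--PREP$^\dagger$ LCU and \Cref{thm:mottonen-state-preparation} for the state preparations.

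\emph{Normalizations.} Write $U_r$ for the block encoding of $\widetilde A_r$ with normalization $\beta_r=\sqrt{d_r}\lambda_r$ (\Cref{lem:assembled-factor-error}). Let $\mathrm{SEL}=\sum_r\ket r\!\bra r\otimes U_r$.

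\begin{enumerate}[label=(\alph*),leftmargin=2.2em]
\item For the uniform LCU, the branch rotation $R_r$ on a fresh ancilla has $\ket0$ amplitude $\lambda_r/\Lambda\le1$. Composing $R_r$ with $U_r$ therefore block-encodes $\widetilde A_r/(\beta_r\cdot\Lambda/\lambda_r)=\widetilde A_r/\beta_\star$. Sandwiching between $H$-type uniform preparations then gives $\frac1K\sum_r\widetilde A_r/\beta_\star$, so $\alpha_{\mathrm{unif}}=K\beta_\star=K\sqrt{d_r}\Lambda$.
\item For the weighted LCU, we have $\bra0\PREP_{\mathrm{out}}^\dagger\,\mathrm{SEL}\,\PREP_{\mathrm{out}}\ket0=\sum_r(\lambda_r/\Lambda)\widetilde A_r/\beta_r=\sum_r\widetilde A_r/(\sqrt{d_r}\Lambda)$. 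This repeats the computation in part~(ii) of \Cref{lem:assembled-factor-error} and gives \Cref{eq:weighted-normalization}.
\end{enumerate}

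\emph{Gate counts.} Pad $r$ to $K'=2^{\lceil\log_2K\rceil}<2K$, setting $\lambda_r=0$ on the padding. The amplitudes $\sqrt{\lambda_r/\Lambda}$ are nonnegative and real, so \Cref{thm:mottonen-state-preparation} gives $\PREP_{\mathrm{out}}$ with at most $K'-1$ rotations and $K'-2$ CNOTs, which is $O(K)$. The uniform preparation is $\lceil\log_2K\rceil$ Hadamards when $K=K'$; otherwise we pad and absorb the zero-weight branches by setting their rotation amplitude to $0$. The branch rotation in (a) is a single uniformly controlled $R_y$ with $\lceil\log_2K\rceil$ controls and angles $2\arccos(\lambda_r/\Lambda)$. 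By the standard multiplexor decomposition used in \Cref{thm:mottonen-state-preparation}, it costs $K'$ rotations and $K'$ CNOTs.

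\emph{Precision.} This is the only step requiring care. Each of the $O(K)$ rotations is implemented from angles rounded to $b_{\mathrm{out}}$ bits, so each has operator-norm error $O(2^{-b_{\mathrm{out}}})$. By the last clause of \Cref{thm:mottonen-state-preparation}, the preparation (or the branch rotation) is then off by $O(K2^{-b_{\mathrm{out}}})$. It appears at most twice in the sandwich, together with $R_r$ in case (a). The sandwich error in the top-left block is therefore $O(K2^{-b_{\mathrm{out}}})$, and after multiplying by the normalization the error in the encoded matrix is $O(\alpha K2^{-b_{\mathrm{out}}})$.

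Using $\alpha\le\alpha_{\mathrm{unif}}$ covers both variants. Choosing $b_{\mathrm{out}}$ as in \Cref{eq:outer-prep-precision} makes this $O(\eps_{\mathrm{out}})$. The main subtlety is checking that the normalization multiplies the unitary error rather than dividing it, and that the larger constant $\alpha_{\mathrm{unif}}$ dominates in both cases; everything else is bookkeeping.
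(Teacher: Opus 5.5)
Your proposal follows the paper's proof. The normalizations come from the PREP--SELECT--PREP$^\dagger$ sandwich, as in part~(ii) of \Cref{lem:assembled-factor-error} and \Cref{subsec:uniform-versus-weighted-outer-lcu}. The selector states come from \Cref{thm:mottonen-state-preparation}, and the branch rotation is one uniformly controlled $R_y$. The $O(K2^{-b_{\mathrm{out}}})$ block error is then multiplied by $\alpha\le\alpha_{\mathrm{unif}}$.

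There is one small slip in part~(a), in the case where $K$ is not a power of two. Hadamards on $\lceil\log_2K\rceil$ qubits, with the padded branches zeroed by their rotations, give each branch weight $1/K'$. The selected block is then $\frac{1}{K'}\sum_{r<K}\widetilde A_r/\beta_\star$, so the normalization is $K'\sqrt{d_r}\Lambda$, not the stated $K\sqrt{d_r}\Lambda$. The prepared state is also not $K^{-1/2}\sum_{r<K}\ket r$.

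The fix is what the paper does: prepare the uniform state with \Cref{thm:mottonen-state-preparation} on the nonnegative vector with entries $K^{-1/2}$ and zeros on the padding. This gives the state exactly and still costs $O(K)$ gates.
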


\begin{proof}
The normalizations are derived in \Cref{lem:assembled-factor-error} and \Cref{subsec:uniform-versus-weighted-outer-lcu}.  Here we only build the circuits.  Both outer states have nonnegative amplitudes ($K^{-1/2}$, or $\sqrt{\lambda_r/\Lambda}$), so \Cref{thm:mottonen-state-preparation} with $D=K'<2K$ prepares them with $O(K)$ gates.  The rotation in (a) is one uniformly controlled $R_y$ rotation on the outer index, which also decomposes into $O(K)$ gates.  With $O(K)$ rotations, angle errors $O(2^{-b_{\mathrm{out}}})$ give an error $O(K2^{-b_{\mathrm{out}}})$ in the selected block, hence $O(K\alpha_{\mathrm{unif}}2^{-b_{\mathrm{out}}})$ in the encoded matrix, since $\alpha_{\mathrm{wt}}\le\alpha_{\mathrm{unif}}$.
\end{proof}

\subsection{Main result}
\label{sec:end-to-end}

Let $V_+$ be the block encoding of $F_{\tau}$ given by \Cref{thm:dyadic-nuqft} with the oracles above, and $V_-$ the one for $F_{\tau^-}$ obtained from it by \Cref{lem:negative-branch}.  Both have the same normalization $\alpha$ and the same number $a$ of ancillas.  Define
\begin{equation}
  \SEL_{\pm}
  =\ket0\!\bra0\otimes V_+
   +\ket1\!\bra1\otimes V_-,
  \qquad
  V_{\mathrm{NUCT}}
  =(H\otimes I)\SEL_{\pm}(H\otimes I),
  \label{eq:nupct-circuit}
\end{equation}
where the Hadamards act on a sign qubit.  By \Cref{lem:negative-branch}, $\SEL_\pm$ is the single circuit $V_+$ with its node data controlled on the sign qubit, not two separate circuits.

\begin{algorithm}[H]
\caption{Quantum NUCT}
\label{alg:nupct}
\begin{algorithmic}[1]
\Require $N=2^q$, target error $\eps$, the oracles of \Cref{sec:oracle-constructions}, and a circuit $\PREP_f$ with $\PREP_f\ketzero q=\ket f$
\State Prepare $\ket f=\PREP_f\ketzero q$
\State Initialize the NUQFT ancillas and a sign qubit to $\ket0$
\State Apply $H$ to the sign qubit
\State Apply $\SEL_\pm$: the NUQFT for $\tau$ or $\tau^-$, controlled on the sign qubit
\State Apply $H$ to the sign qubit
\State Measure the sign qubit and the NUQFT ancillas, and accept only the all-zero outcome
\State \Return the system register, in a state proportional to $\widetilde C_N\ket f$
\end{algorithmic}
\end{algorithm}

\begin{theorem}[Quantum NUCT]
\label{thm:compiled-nupct}
Let $N=2^q$ with $q\ge2$, let $0<\eps\le1$, and let $L=q+\log(1/\eps)$.  With $m$ as in \Cref{lem:node-precision}, the circuit $V_{\mathrm{NUCT}}$ is a block encoding of $C_N$ with normalization
\begin{equation}
  \alpha=\sqrt{d_r}\,\Lambda<24\sqrt5=O(1)
  \label{eq:alpha-K}
\end{equation}
and error at most $\eps$.  It uses $O(L)$ qubits and
\begin{equation}
  \widetilde O(L^2)\ \text{reversible-arithmetic, Clifford, Toffoli, and controlled-rotation gates}.
  \label{eq:compiled-gates}
\end{equation}
Synthesizing every rotation over Clifford+$T$ gives
\begin{equation}
  \widetilde O(L^3)\ \text{Clifford+$T$ gates}.
  \label{eq:compiled-clifford-t}
\end{equation}
No parameter depending on the node geometry appears.  These counts exclude $\PREP_f$ and amplitude amplification.  With the uniform outer LCU instead, the normalization is $O(K)=O(L)$.
\end{theorem}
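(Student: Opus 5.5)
The plan is to assemble the pieces already proved. First, the block-encoding identity. By \Cref{thm:nupct-reduction}, $C_N=\frac12(F_t+F_{t^-})$. Suppose $V_\pm$ are block encodings of matrices $B_\pm$ with common normalization $\alpha$ and the same ancilla count $a$. Then the Hadamard sandwich \Cref{eq:nupct-circuit} gives
\[
(\bra0\otimes\bra{0^a}\otimes I)\,V_{\mathrm{NUCT}}\,(\ket0\otimes\ket{0^a}\otimes I)=\tfrac12(B_++B_-)/\alpha .
\]
So $V_{\mathrm{NUCT}}$ encodes $\frac12(B_++B_-)$ with the same normalization $\alpha$. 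This requires $V_+$ and $V_-$ to share $\alpha$ and $a$. \Cref{lem:negative-branch}(i) guarantees this, because $V_-$ is literally the circuit $V_+$ with relabelled node data. The same lemma lets $\SEL_\pm$ be one circuit with controlled node data, at only $O(m+q)$ extra gates.

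Second, the error. By the triangle inequality,
\[
\norm{\tfrac12(B_++B_-)-C_N}\le\tfrac12\norm{B_+-F_t}+\tfrac12\norm{B_--F_{t^-}}.
\]
Apply \Cref{thm:dyadic-nuqft} to each branch with internal accuracy $\eps'=2\eps/3$. Take $m$ as in \Cref{lem:node-precision}, which gives node error below $\eps/3$. For the negative branch, $\norm{t^--\tau^-}_\infty=\norm{t-\tau}_\infty$, since $t_k^-=1-t_k$ and $\tau_k^-=1-\tau_k$; both are valid because $t_k,\tau_k\in(0,1/2]$. Each branch error is therefore at most $2\eps/3+\eps/3=\eps$, and so is their average.

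The hypotheses of \Cref{thm:dyadic-nuqft} are supplied as follows:
\begin{itemize}
\item the node oracle comes from \Cref{prop:angle-oracle};
\item row access with $d_r=5$ comes from \Cref{prop:row-oracle}, using $N\norm{t-\tau}_\infty<1/(2\pi)$ from \Cref{lem:node-precision};
\item the same access for $\tau^-$ comes from \Cref{lem:negative-branch}(ii);
\item the coefficient preparations come from \Cref{prop:coefficient-prep,prop:outer-lcu-compilation}, with $b_\alpha,b_{\mathrm{out}}=O(L)$.
\end{itemize}
A small slice of the budget, say $\eps/6$, must be reserved for the preparation-angle errors so the total stays at most $\eps$; I will rescale the constants accordingly. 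The normalization is then $\sqrt{d_r}\Lambda$ by \Cref{lem:assembled-factor-error}(ii), and $\Lambda<24$ by \Cref{lem:constant-bessel-mass}. This gives $\alpha<24\sqrt5$.

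Third, resources. Since $m=O(L)$, \Cref{thm:dyadic-nuqft} gives $O(L)$ qubits and $\widetilde O(L^2)$ gates per branch, excluding oracle costs. The node oracle costs $\widetilde O(m^2)=\widetilde O(L^2)$. The row oracle makes a constant number (sixteen) of calls to $O_\sigma^{\pm1}$ at $\widetilde O(m^2+q^2)$ each, and the block encoding invokes it $O(1)$ times per term, so it stays within $\widetilde O(L^2)$. The sign control adds only a constant factor. For the Clifford$+T$ count, there are $\widetilde O(L^2)$ rotations in total, coming from the QFT, the controlled $R_y$ and phase gates, and the Möttönen cascades. Synthesizing each to precision $\eps/\mathrm{poly}(L)$ costs $O(\log(L/\eps))=O(L)$ $T$ gates, which gives $\widetilde O(L^3)$, and the synthesis errors add at most a further $O(\eps)$ fraction absorbed by rescaling. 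For the last sentence of the statement, the uniform outer LCU gives normalization $K\sqrt{d_r}\Lambda=O(K)=O(L)$ by \eqref{eq:uniform-outer-normalization}.

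The main obstacle is bookkeeping rather than new mathematics. One issue is making the error budget close exactly at $\eps$ once the rotation-synthesis and preparation errors are included; I would split $\eps$ into fixed fractions up front. The other is confirming that the controlled, relabelled branch $V_-$ has identical normalization and ancilla layout, so the Hadamard LCU produces exactly the average $\frac12(B_++B_-)$ and not a reweighted combination.
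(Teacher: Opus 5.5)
Your proposal is correct and follows the paper's proof essentially step for step: the Hadamard-sandwich average of $V_\pm$ with shared normalization and ancillas, the split of $\eps$ into node, NUQFT and rotation-precision parts, $\alpha=\sqrt{d_r}\Lambda<24\sqrt5$ from the weighted outer LCU, and resource counting with $m=O(L)$. The only difference is that you apply \Cref{thm:dyadic-nuqft} to the real nodes $t,t^-$ in each branch, whereas the paper applies it to the stored nodes and bounds $\norm{C_\tau-C_N}$ separately, which is the same triangle inequality regrouped. One small correction to your resource count: $F$ and $M_\sigma$ do not depend on $r$, so the sparse-access block encoding, and hence the row oracle, is applied $O(1)$ times in total outside the $r$-controlled SELECT, not $O(1)$ times per LCU term, which would multiply the gate count by $K$.
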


\begin{proof}
We split the error budget into three parts of $\eps/3$.

\emph{Node error.}  By \Cref{lem:node-precision}, $(2\pi/\sqrt3)N^{3/2}\norm{t-\tau}_\infty<\eps/3$.  Since $t_k,\tau_k>0$ (\Cref{lem:negative-branch}), $|\tau_k^--t_k^-|=|\tau_k-t_k|$.  Applying \Cref{eq:operator-node-surrogate-error} to both terms of \Cref{eq:quantized-nupct-matrix} gives $\norm{C_\tau-C_N}\le(2\pi/\sqrt3)N^{3/2}\norm{t-\tau}_\infty<\eps/3$.

\emph{NUQFT error.}  By \Cref{lem:node-precision}, $N\norm{t-\tau}_\infty<1/(2\pi)$, so \Cref{prop:row-oracle,lem:negative-branch} give row access for both node sets with $d_r=\min\{5,N\}$.  \Cref{thm:dyadic-nuqft}, applied with accuracy $\eps/3$ to the stored nodes themselves (so that its node term vanishes), gives $V_\pm$ encoding matrices $\widetilde F_\pm$ with $\norm{\widetilde F_+-F_{\tau}}\le\eps/3$ and $\norm{\widetilde F_--F_{\tau^-}}\le\eps/3$.  With the weighted outer LCU, their normalization is $\sqrt{d_r}\Lambda<24\sqrt5$ by \Cref{lem:assembled-factor-error,lem:constant-bessel-mass}.  Projecting the sign qubit of \Cref{eq:nupct-circuit} onto $\ket0$ gives $\frac12(V_++V_-)$, so $V_{\mathrm{NUCT}}$ encodes $\widetilde C_N=\frac12(\widetilde F_++\widetilde F_-)$ with the same normalization and $\norm{\widetilde C_N-C_\tau}\le\eps/3$.

\emph{Rotation precision.}  The last $\eps/3$ covers finite-precision rotations.  Choosing $\eps_{\mathrm{prep}}$ and $\eps_{\mathrm{out}}$ in \Cref{prop:coefficient-prep,prop:outer-lcu-compilation} as suitable constant fractions of $\eps$ keeps the error from the coefficient angles below $\eps/6$.  The Clifford+$T$ synthesis below contributes at most another $\eps/6$.  The three parts give $\norm{\widetilde C_N-C_N}\le\eps$.

\emph{Resources.}  Since $m=O(L)$, \Cref{thm:dyadic-nuqft} uses $O(L)$ qubits and $\widetilde O(L^2)$ gates, including $O(K^2)=O(L^2)$ gates for the coefficient preparations.  Its oracles add $\widetilde O(m^2)$ gates for $O_\tau$ (\Cref{prop:angle-oracle}), $\widetilde O(m^2+q^2)$ for each of the constant number of row-oracle calls made by the sparse-access block encoding (\Cref{prop:row-oracle}), and $O(m+q)$ for the negative branch (\Cref{lem:negative-branch}).  The sign qubit adds one qubit and two Hadamards.  This proves \Cref{eq:compiled-gates}.  For Clifford+$T$, let $R=\widetilde O(L^2)$ be the number of rotations and synthesize each to error $\eps/(6\alpha R)$, which costs $O(\log(\alpha R/\eps))=O(L)$ $T$ gates \cite{RossSelinger2016} and adds at most $\eps/6$ to the encoded matrix.  This gives \Cref{eq:compiled-clifford-t}.
\end{proof}

\begin{corollary}[Output state]
\label{cor:output-state}
Let $\ket f=\norm f_2^{-1}\sum_kf_k\ket k$ be the input state, let $\widetilde C_N$ and $\alpha$ be as in \Cref{thm:compiled-nupct}, and let $r:=\norm{C_N\ket f}=\norm c_2/(\sqrt N\norm f_2)$, where $c$ is as in \Cref{eq:nupct-def}.  Suppose $r>\eps$.  Then \Cref{alg:nupct} succeeds with probability $\norm{\widetilde C_N\ket f}^2/\alpha^2\ge(r-\eps)^2/\alpha^2$, so amplitude amplification needs $O(\alpha/(r-\eps))$ uses of $V_{\mathrm{NUCT}}$, $\PREP_f$ and their inverses.  The output state satisfies
\begin{equation}
  \norm{
  \frac{\widetilde C_N\ket f}{\norm{\widetilde C_N\ket f}}
  -\frac{c}{\norm c_2}}
  \le \frac{2\eps}{r-\eps}.
  \label{eq:normalized-state-error}
\end{equation}
\end{corollary}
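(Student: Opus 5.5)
The plan is to use \Cref{thm:compiled-nupct} as a black box, which gives $\norm{\widetilde C_N-C_N}\le\eps$ and normalization $\alpha$. Then standard block-encoding postselection, amplitude amplification \cite{BHMT02}, and a perturbation bound for normalized vectors give the three claims: success probability, number of amplification rounds, and output error.

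\emph{Success probability.} Let $a$ be the number of ancillas of $V_{\mathrm{NUCT}}$, including the sign qubit. The block-encoding property says $(\bra{0}^{\otimes a}\otimes I)V_{\mathrm{NUCT}}(\ket{0}^{\otimes a}\otimes I)=\widetilde C_N/\alpha$. Applying $V_{\mathrm{NUCT}}$ to $\ket{0}^{\otimes a}\ket f$ therefore gives the state
\[
  \ket{0}^{\otimes a}\otimes\frac{\widetilde C_N\ket f}{\alpha}+\ket{\perp},
\]
where $\ket{\perp}$ is orthogonal to the all-zero ancilla subspace. So step~6 of \Cref{alg:nupct} accepts with probability $\norm{\widetilde C_N\ket f}^2/\alpha^2$, and on acceptance the system register holds $\widetilde C_N\ket f/\norm{\widetilde C_N\ket f}$. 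By the reverse triangle inequality, $\norm{\widetilde C_N\ket f}\ge\norm{C_N\ket f}-\norm{\widetilde C_N-C_N}\,\norm{\ket f}\ge r-\eps>0$, which gives the stated lower bound. Amplitude amplification with the reflection about the all-zero ancilla flag then needs $O(\alpha/(r-\eps))$ rounds. Each round uses $V_{\mathrm{NUCT}}\PREP_f$ and its inverse, and with $\alpha<24\sqrt5$ this count is $O(1/(r-\eps))$. Before bounding the output error, note that $C_N\ket f=\sqrt N C_N f/(\sqrt N\norm f_2)=c/(\sqrt N\norm f_2)$, so $C_N\ket f/\norm{C_N\ket f}=c/\norm c_2$ and $r=\norm c_2/(\sqrt N\norm f_2)$.

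\emph{Output error.} This is the only step needing care. Use the standard inequality: for nonzero vectors $u,v$,
\[
  \norm{\frac{u}{\norm u}-\frac{v}{\norm v}}\le\frac{2\norm{u-v}}{\max\{\norm u,\norm v\}}.
\]
To prove it, take $\norm u\ge\norm v$ and write
\[
  \frac{u}{\norm u}-\frac{v}{\norm v}=\frac{u-v}{\norm u}+v\Bigl(\frac{1}{\norm u}-\frac{1}{\norm v}\Bigr).
\]
The second term has norm $(\norm u-\norm v)/\norm u\le\norm{u-v}/\norm u$, so the whole expression is at most $2\norm{u-v}/\norm u$. The denominator is the larger of the two norms; if $\norm v\ge\norm u$, swap the roles of $u$ and $v$.

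Now set $u=C_N\ket f$ and $v=\widetilde C_N\ket f$. Then $\norm{u-v}\le\eps$ and $\max\{\norm u,\norm v\}\ge\norm u=r\ge r-\eps$. This gives the bound $2\eps/r$, which is even slightly stronger than \Cref{eq:normalized-state-error}. Alternatively, putting $\norm v\ge r-\eps$ in the denominator gives exactly the stated $2\eps/(r-\eps)$.

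A remaining subtlety is that amplitude amplification with an unknown success probability is only approximate. The statement counts rounds only up to the $O(\cdot)$, so I would appeal to the fixed-point or exponential-search variants of \cite{BHMT02}, which succeed with constant probability using $O(\alpha/(r-\eps))$ queries. Exact amplification would be possible if $\norm{\widetilde C_N\ket f}$ were known, but it is not needed here.
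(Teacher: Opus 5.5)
Your proposal is correct and takes the same route as the paper. The success probability comes from block-encoding postselection plus the reverse-triangle bound $\norm{\widetilde C_N\ket f}\ge r-\eps$, and the state error comes from the elementary estimate $\norm{u/\norm u-v/\norm v}\le 2\norm{u-v}/\norm u$. The paper puts $\norm{\widetilde C_N\ket f}\ge r-\eps$ in the denominator, while your choice $\norm{C_N\ket f}=r$ gives the slightly sharper $2\eps/r$; your remark on amplitude amplification with unknown success probability is a point the paper leaves implicit.
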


\begin{proof}
The success probability is the squared norm of the encoded block applied to $\ket f$, divided by $\alpha^2$, and $\norm{\widetilde C_N\ket f}\ge r-\eps$.  For \Cref{eq:normalized-state-error}, use $C_N\ket f/r=c/\norm c_2$ and the elementary bound $\norm{u/\norm u-v/\norm v}\le2\norm{u-v}/\norm u$ with $u=\widetilde C_N\ket f$ and $v=C_N\ket f$.
\end{proof}

Since $\alpha=O(1)$, the number of amplification rounds is governed by $r$, which depends on the data.  The transform itself costs $\widetilde O(L^2)$ gates, but an end-to-end application can still be dominated by loading $f$, by the factor $1/r$, or by reading out many coefficients.

\section{Conclusion and Discussion}
\label{sec:conclusion}

We improved the non-uniform quantum Fourier transform of \cite{AKY26} by removing its dependence on the geometry parameter $\kappa$.  In their analysis, $\kappa$ grows as nodes approach the boundaries of the grid cells, and for the NUCT nodes we could not bound it by any polynomial in $N$ (\Cref{app:kappa-nuct}).  The resulting error bound (\Cref{thm:dyadic-nuqft}) holds for any node set and depends only on the number of bits per node, the target accuracy and the row sparsity $d_r$.  Together with the constant bound on the total Bessel weight (\Cref{lem:bivariate-chebyshev-bessel}) and the weighted outer LCU, this also improves the normalization from $O(K^2\sqrt{d_r})$ to $O(\sqrt{d_r})$.

Using this NUQFT and the reduction of the Chebyshev transform to two NUDFTs (\Cref{thm:nupct-reduction}), we obtained an $\eps$-accurate block encoding of the non-uniform Chebyshev transform with $O(1)$ normalization, $O(L)$ qubits and $\widetilde O(L^2)$ gates, where $L=\log N+\log(1/\eps)$ (\Cref{thm:compiled-nupct}).  All oracles are constructed explicitly, including the row access that \cite{AKY26} assumed.  This is the discrete polynomial transform of Driscoll, Healy and Rockmore \cite{DHR97} with $P_j=T_j$ at equispaced nodes, which they compute classically with $O(N\log^2N)$ operations.  Our circuit is polylogarithmic in $N$, so the transform itself is exponentially cheaper.  As for any quantum linear-algebra primitive, an end-to-end speedup also depends on preparing the input state, on the norm of the output, and on how many output coefficients must be read out.

Our original motivation was to implement the full pipeline of \cite{DHR97} with quantum circuits.  They compute the discrete polynomial transform for any family of orthogonal polynomials with a three-term recurrence, by a divide-and-conquer on the recurrence whose basic operations are Chebyshev transforms.  Fast non-uniform Fourier and polynomial transforms are used throughout scientific computing and imaging, for example in MRI reconstruction from non-Cartesian samples \cite{FS03}.  An efficient quantum version would give exponentially fast numerical algorithms for this whole class of transforms.

The non-uniform Chebyshev transform is the first step in that direction.  In future work we plan to use it to build quantum polynomial transforms for general orthogonal polynomial families, following the approach of \cite{DHR97}.  The main open question is whether the remaining steps of their construction, which convert between the Chebyshev basis and a general orthogonal basis through the three-term recurrence, can also be implemented with polylogarithmic cost.

\section*{Acknowledgment}

\paragraph{AI Disclosure.} The authors came up with the novel idea for using NUQFT for Chebyshev Transform and improving the NUQFT by removing the condition number. We had a working proof and used ChatGPT-Sol-5.6 to refine the conditioning-free idea and oracle constructions. We used Claude Opus 5.5 for coming up with efficient Oracle constructions, based on standard techniques and help with writing the paper. The authors verified the correctness and originality of all content including references and take full responsibility for the work presented.

\bibliographystyle{alpha}
\bibliography{refs}

\appendix

\section{Alternative Grover-Rudolph interpretation of coefficient preparation}
\label{app:grover-rudolph}

For completeness, the same fixed-$r$ magnitude state in
\Cref{eq:chi-r} can be described through the Grover-Rudolph probability-tree
recursion \cite{GroverRudolph2002}.  Pad to
$K'=2^{\lceil\log_2K\rceil}$ and define
\begin{equation}
  W_{r,z}
  :=\sum_{\ell:\,\ell\text{ has prefix }z}|\alpha'_{\ell r}|.
  \label{eq:appendix-gr-mass}
\end{equation}
At a nonempty prefix $z$, apply a prefix-controlled rotation satisfying
\begin{equation}
  \cos^2\!\left(\frac{\theta_{r,z}}2\right)
  =\frac{W_{r,z0}}{W_{r,z}},
  \qquad
  \sin^2\!\left(\frac{\theta_{r,z}}2\right)
  =\frac{W_{r,z1}}{W_{r,z}}.
  \label{eq:appendix-gr-angle}
\end{equation}
Induction over the tree depth shows that the amplitude at leaf $\ell$ is
$\sqrt{|\alpha'_{\ell r}|/\lambda_r}$.  Since the subtree masses are
universal constants, they are computed once offline.  No coherent numerical
integration is required.

For this finite tabulated distribution, the Grover-Rudolph tree and the
nonnegative-amplitude specialization of M\"ott\"onen synthesis \cite{MVBS05} use the same
$K'-1=O(K)$ conditional angles.  After decomposition into elementary gates,
both therefore have linear gate complexity for one fixed-$r$ state and
quadratic complexity for the fully coherent $r$-multiplexed library.  We use
the M\"ott\"onen formulation in the main text because it directly provides
the uniformly controlled-rotation decomposition, analytic angle arrays, and
ancilla-free elementary-gate count required by Propositions
\ref{prop:coefficient-prep} and \ref{prop:outer-lcu-compilation}.

\section{Why \texorpdfstring{$\kappa$}{kappa} cannot be bounded for the NUCT nodes}
\label{app:kappa-nuct}

\Cref{thm:aky26} depends on the parameter $\kappa$ of \Cref{eq:kappa}.  This appendix explains why it does not give an explicit complexity for the NUCT.  For the nodes \Cref{eq:angular-nodes} we could neither compute $\kappa$ in closed form nor bound it by a polynomial in $N$. We show that finding a good upper bound is difficult:\\
Let
\begin{equation}
  \delta_N:=1-\max_{0\le k<N}|y_k|
  \label{eq:app-delta}
\end{equation}
be the smallest distance from an offset $y_k=z(t_k)$ to the cell boundary $\pm1$.  The negative nodes $t_k^-$ have offsets $-y_k$, so they give the same $\delta_N$.  The point $y_k^*$ in \Cref{eq:kappa} lies between $y_k$ and $\widehat y_k$, and $|y_k-\widehat y_k|\le N2^{-m+1}$ by \Cref{eq:source-residual-perturbation}.  If $N2^{-m+1}\le\delta_N/2$, then $|y_k^*|\le1-\delta_N/2$, so $1-(y_k^*)^2\ge\delta_N/2$ and
\begin{equation}
  \kappa\le\sqrt{2/\delta_N}.
  \label{eq:app-kappa-delta}
\end{equation}
Without such a condition on $m$ the point $y_k^*$ can approach $\pm1$ and no bound holds.  So both $\kappa$ and the precision $m\ge q+2+\log_2(1/\delta_N)$ are controlled by $\delta_N$, and bounding $\kappa$ means bounding $\delta_N$ from below.

\paragraph{Exponential Bound.}  Since $\theta_k=2\pi t_k$ and $Nt_k=s_k^{\mathrm{unw}}+y_k/2$ by \Cref{eq:dyadic-routing-data-a}, we have $N\theta_k=2\pi s_k^{\mathrm{unw}}+\pi y_k$.  By \Cref{eq:chebyshev-facts}, $T_N(x_k)=\cos(N\theta_k)=\cos(\pi y_k)$.  With $T_N=2T_{N/2}^2-1$ this gives
\[
  |T_{N/2}(x_k)|=|\cos(\pi y_k/2)|=\sin\bigl(\tfrac\pi2(1-|y_k|)\bigr).
\]
Since $u\le\sin(\pi u/2)\le\pi u/2$ for $u\in[0,1]$,
\begin{equation}
  \frac2\pi\min_{0\le k<N}|T_{N/2}(x_k)|
  \le\delta_N\le
  \min_{0\le k<N}|T_{N/2}(x_k)|.
  \label{eq:app-delta-chebyshev}
\end{equation}
So $\kappa$ is small exactly when no grid point $x_k=-1+2k/N$ comes close to a zero $\cos\bigl((2n+1)\pi/N\bigr)$ of $T_{N/2}$.\\

 The polynomial $T_{N/2}$ has integer coefficients and degree $N/2$, and $x_k=(2k-N)/N$ has denominator dividing $2^{q-1}$.  Hence $2^{(q-1)N/2}\,T_{N/2}(x_k)$ is an integer.  It is nonzero.  The zeros of $T_{N/2}$ are cosines of rational multiples of $\pi$, and by Niven's theorem \cite{Niv56} such a cosine is rational only if it lies in $\{0,\pm\frac12,\pm1\}$.  None of these is a zero of $T_{N/2}$ when $N=2^q\ge4$.  Therefore $|T_{N/2}(x_k)|\ge2^{-(q-1)N/2}$, and by \Cref{eq:app-delta-chebyshev,eq:app-kappa-delta}
\[
  \delta_N\ge\frac2\pi\,2^{-(q-1)N/2},
  \qquad
  \log_2\kappa\le\frac{(q-1)N}{4}+O(1).
\]
This is the best bound we obtained.  In \Cref{thm:aky26} it gives $\log(1+\kappa L)=O(N\log N)$ qubits and gates, which removes the exponential speedup.

We now give some heuristic arguments for why such a bound might be hard. A polynomial bound $\kappa\le\mathrm{poly}(N)$ needs every fractional part of $N\theta_k/(2\pi)=\frac{N}{2\pi}\arccos(-1+2k/N)$ to stay at distance $1/\mathrm{poly}(N)$ from $\frac12$, uniformly in $k$ and $N$.  Near $x=0$, for instance, $N\theta_k/(2\pi)\approx N/4-(k-N/2)/\pi$, so even the linearized problem asks how well multiples of $1/\pi$ approximate half-integers, which depends on the irrationality measure of $\pi$.  Away from $x=0$ the dependence on $k$ is nonlinear, and we know of no argument that controls these fractional parts.  An equidistribution heuristic suggests $\delta_N\approx1/N$ and $\kappa\approx\sqrt N$, but it gives no bound valid for every $N$.

\paragraph{Numerics.}  \Cref{tab:app-kappa} lists $\delta_N$ and $\kappa_0:=\max_k(1-y_k^2)^{-1/2}$, the limit of $\kappa$ as $m\to\infty$.  The values were computed in double precision and checked with $50$-digit arithmetic for $q\le12$ and for the minimizer at $q=20$.  The growth is roughly $\sqrt N$ but irregular.  The product $N\delta_N$ ranges from $0.055$ to $5.2$, $\kappa_0$ is not monotone in $N$, and at $q=20$ the single node $k=1010685$ has $\delta_N\approx5.2\times10^{-8}$.  Outliers of this kind are why no clean bound can be read off from the data either.

\begin{table}[h]
\centering
\begin{tabular}{@{}rrrrr@{}}
\toprule
$q$ & $\delta_N$ & $N\delta_N$ & $\kappa_0$ & $\kappa_0/\sqrt N$\\
\midrule
4  & $2.87\times10^{-1}$ & 4.59  & 1.43   & 0.36\\
6  & $4.94\times10^{-2}$ & 3.16  & 3.22   & 0.40\\
8  & $1.15\times10^{-2}$ & 2.94  & 6.62   & 0.41\\
10 & $3.86\times10^{-4}$ & 0.40  & 36.0   & 1.12\\
12 & $2.09\times10^{-4}$ & 0.86  & 48.9   & 0.76\\
14 & $3.20\times10^{-4}$ & 5.24  & 39.6   & 0.31\\
16 & $2.34\times10^{-5}$ & 1.54  & 146.1  & 0.57\\
18 & $5.26\times10^{-6}$ & 1.38  & 308.2  & 0.60\\
20 & $5.22\times10^{-8}$ & 0.055 & 3094.6 & 3.02\\
\bottomrule
\end{tabular}
\caption{Distance $\delta_N$ of the NUCT offsets to the cell boundary and the resulting $\kappa_0$, for $N=2^q$.}
\label{tab:app-kappa}
\end{table}

\paragraph{The fix in our algorithm.}  \Cref{thm:dyadic-nuqft} does not need $\delta_N$.  The $\arccos$ evaluations inside the NUQFT act on exact inputs (\Cref{lem:diagonal-factor-implementation}), and the node error enters only through the Fourier phases, whose derivative has no singularity.  This is why the NUCT complexity in \Cref{sec:end-to-end} is explicit.

\section{Uniform and weighted outer LCU constructions}
\label{subsec:uniform-versus-weighted-outer-lcu}

\Cref{lem:assembled-factor-error} combines the $K$ component block
encodings of $A_r=D_{v_r}FM_\sigma D_{u_r}$, whose normalizations are
$\beta_r=\sqrt{d_r}\lambda_r$, through one of two outer LCUs.  The
normalization-aware (weighted) outer LCU prepares a selector with amplitudes
$\sqrt{\lambda_r/\Lambda}$, so each branch weight cancels its own
$\lambda_r$ and the normalization is $\sqrt{d_r}\Lambda=O(\sqrt{d_r})$.
The uniform outer LCU first pads every branch from normalization $\beta_r$ to the
common value $\beta_\star=\sqrt{d_r}\Lambda\ge\beta_r$ by an ancillary one-qubit amplitude, and then prepares $\ket{+_K}$.  It selects $\frac1K\sum_r\widetilde A_r/\beta_\star$, so it costs an
additional factor $K$ and gives normalization $K\sqrt{d_r}\Lambda=O(K\sqrt{d_r})$, with the same unnormalized error \Cref{eq:weighted-assembly-error}.
The proof of \Cref{thm:dyadic-nuqft} uses only the weighted alternative.  The
uniform outer LCU, which is close to the outer-LCU organization of
\cite{AKY26}, is retained only for comparison, and the two constructions are
never applied simultaneously.

The two outer-LCU normalizations, compiled explicitly in
\Cref{prop:outer-lcu-compilation}, obey
\[
\begin{array}{c|cc}
 & \Lambda=O(K)
 & \text{\Cref{lem:constant-bessel-mass}: }\Lambda=O(1)\\
\hline
\text{uniform outer LCU}
 & O(K^2\sqrt{d_r})
 & O(K\sqrt{d_r})\\
\text{normalization-aware outer LCU}
 & O(K\sqrt{d_r})
 & O(\sqrt{d_r})
\end{array}
\]
Relative to the source-level bound $O(K^2\sqrt{d_r})$, the constant-mass
estimate of \Cref{lem:constant-bessel-mass} and the weighted selector each remove
one factor of $K$, and the off-diagonal entries of the table show that the
two refinements act independently.  The constant-mass estimate is not needed
to remove $\kappa$: with the source-level bound $\Lambda=O(K)$, the
choice $p=O(\log(\sqrt{d_r}K^2/\eps))=O(q+\log(1/\eps))$ still suffices in
\Cref{thm:dyadic-nuqft}.  Its role is to improve the normalization, and hence
the success probability.  For the NUCT, with $d_r\le5$, the bounds become $O(K^2)$ for the
source-level construction, $O(K)$ for the uniform construction, and $O(1)$
for the weighted construction.  These are comparisons between the upper
bounds proved for the respective constructions, not a lower bound showing
that the original NUQFT architecture cannot be improved by incorporating the
same refinements.

The normalization does not change the gate count of a single transform-layer
invocation.  However, for an $\alpha$-normalized block encoding of an
operator $\widetilde F$ applied to an input proportional to $f$, the
postselection success probability is
$\norm{\widetilde Ff}_2^2/(\alpha^2\norm{f}_2^2)$.  Relative to the uniform
construction, the weighted construction therefore improves this probability
by a factor $K^2$ and reduces the amplitude-amplification overhead by a
factor $K=O(q+\log(1/\eps))$, which carries over to coherent output-state
preparation and to applications that repeatedly invoke the transform block
encoding.  The selector states and the normalization-equalizing rotations
are compiled in \Cref{prop:outer-lcu-compilation}.

\end{document}